\newcommand{\nash}[1][]{\ifthenelse{\equal{#1}{}}{\ensuremath{\mathit{NASH}}\xspace}{\ensuremath{\mathit{NASH}(#1)}\xspace}}
\newcommand{\ceq}[1][]{\ifthenelse{\equal{#1}{}}{\ensuremath{\mathit{CEQ}}\xspace}{\ensuremath{\mathit{CEQ}(#1)}\xspace}}
\newcommand{\eps}{\varepsilon}
\newcommand{\mcal}{\mathcal}
\newcommand{\setm}{\setminus}
\newcommand{\timesdots}{\times\dots\times}
\DeclareMathOperator{\hada}{\ast}
\newtheorem{case}{Case}
\title{A Robust Characterization of\\ Nash Equilibrium}
\author{Florian Brandl\\\normalsize Department of Economics\\[-1ex]\normalsize University of Bonn% 
\and 
Felix Brandt% 
\\\normalsize Department of Computer Science\\[-1ex]
\normalsize Technical University of Munich% 
}
\begin{document}

\maketitle

\begin{abstract}

We characterize Nash equilibrium by postulating coherent behavior across varying games. Nash equilibrium is the only solution concept that satisfies the following axioms:
\emph{(i)} strictly dominant actions are played with positive probability,
\emph{(ii)} if a strategy profile is played in two games, it is also played in every convex combination of these games, and
\emph{(iii)} players can shift probability arbitrarily between two indistinguishable actions, and deleting one of these actions has no effect. 
Our theorem implies that every equilibrium refinement violates at least one of these axioms. 
Moreover, every solution concept that approximately satisfies these axioms returns approximate Nash equilibria, even in natural subclasses of games, such as two-player zero-sum games, potential games, and graphical games.
\end{abstract}

 \textbf{Keywords:} Game theory, axiomatic characterization, Nash equilibrium

\section{Introduction}

More than 70 years after the publication of \citeauthor{Nash51a}'s (\citeyear{Nash51a}) original work, the concept of Nash equilibrium has been engraved in economic reasoning so deeply that it is rarely questioned. 
But what makes Nash equilibrium stand out from the plethora of solution concepts that have been proposed?
The game theory literature has come up with various answers to this question based on different approaches.

In this paper, we take an axiomatic approach: we consider solution concepts for games in normal form with a fixed number of players and formulate conditions for solution concepts that capture coherent behavior across different games.
The solution concepts we consider map every (finite) multi-player game to a non-empty set of (mixed) strategy profiles.
Our first of three axioms requires that the labels of actions are irrelevant---only the payoffs matter.
Call two actions of a player clones if, irrespective of the other players' actions, they give the same payoff to all players.
In other words, clones are outcome-equivalent and only discernible by their labels.
\begin{quote}
	\textbf{Consequentialism.}
	A player can shift probability arbitrarily between clones, and deleting a clone neither changes the probabilities assigned to the player's other actions nor the strategies of the other players.
\end{quote}
If a solution concept satisfying consequentialism returns a strategy profile and we modify the game by cloning an action, then the solution concept has to return all strategy profiles in which the probabilities on the player's uncloned actions and the other players' strategies are unchanged.

The second axiom is motivated by situations where the players are uncertain which game will be played.   
\begin{quote}
	\textbf{Consistency.}
	Every strategy profile that is played in two given games with the same sets of actions for each player is also played when a coin toss decides which of the two games is played and the players choose their strategies before the coin toss.
\end{quote}
Instead of modeling the randomization explicitly, we assume that a coin toss between two games is equivalent to the convex combination of these games.

Third, we stipulate a very weak notion of rationality.
\begin{quote}
	\textbf{Rationality.}
	An action that dominates every other action of the same player in pure strategies is played with non-zero probability.
\end{quote}
This notion of rationality is, for example, weaker than the condition demanding that actions that are dominated in pure strategies are never played.

Our main result characterizes Nash equilibrium as the unique solution concept that satisfies consequentialism, consistency, and rationality.
In particular, players' behavior has to be consistent with expected utility maximization, which is not apparent from the axioms.
Moreover, every refinement of Nash equilibrium violates at least one of the axioms.

Our second result shows that this characterization is robust: every solution concept that approximately satisfies the three axioms is approximately Nash equilibrium.
This type of stability result is common in many areas of mathematics.\footnote{A classic example is isoperimetric stability.
In Euclidean space, a ball is characterized as the unique volume-maximizing shape among all well-behaved shapes with the same surface area.
Isoperimetric stability strengthens this assertion by showing that any shape that is close to volume maximizing has to resemble a sphere.}
To make it precise, we formulate quantitatively relaxed versions of the axioms.
First, $\delta$-consequentialism demands that a player can shift probability arbitrarily between clones and deleting a clone does not change the probability on any player's action (except the cloned action) by more than $\delta$.
Second, $\delta$-consistency requires that if a strategy profile is played in two games, some strategy profile that differs by no more than $\delta$ is played in any convex combination of the two games.
Third, $\delta$-rationality implies that actions that are dominated in pure strategies by at least a margin of $\delta$ are played with probability at most, say, one half.
We show that for every positive $\eps$, there exists a positive $\delta$ such that every solution concept that satisfies the $\delta$-versions of consequentialism, consistency, and rationality and behaves well with respect to renaming a player's actions is a refinement of $\eps$-Nash equilibrium. 
This result implies one direction of our main theorem when slightly strenthening rationality.\footnote{More precisely, the implication holds when strenthening rationality to require that dominated actions are played with probability at most $\nicefrac12$.}
Moreover, it holds for various natural subclasses of games such as two-player zero-sum games, potential games, and graphical games. 

The remainder of the paper is structured as follows. Existing axiomatic work on Nash equilibrium is discussed in \Cref{sec:relatedwork}. We define the model and introduce the required notation in \Cref{sec:model}. In \Cref{sec:nashcharacterization}, we formally define the three axioms and prove that every solution concept that satisfies these axioms has to return Nash equilibria. The converse statement, i.e., that any such solution concept has to return \emph{all} Nash equilibria, is shown in the appendix. \Cref{sec:robustness} states the robust version of the characterization. \Cref{sec:discussion} concludes the paper by discussing consequences and variations of our results.  
 
\section{Related Work}\label{sec:relatedwork}

Which assumptions can be used to justify Nash equilibrium has been primarily studied in epistemic game theory.
In this stream of research, the knowledge of individual players is modeled using Bayesian belief hierarchies, which consist of a game and a set of types for each player, with each type including the action played by this type and a probability distribution over types of the other players, called the belief of this type \citep{Hars67a}. Rather than assuming that players actively randomize, the beliefs about the types of the other players are randomized. Players are rational if they maximize expected payoff given their types and beliefs.
\citet{AuBr95a} have shown that for two-player games, the beliefs of every pair of types whose beliefs are mutually known and whose rationality is mutually known constitute a Nash equilibrium.
This result extends to games with more than two players if the beliefs are commonly known and admit a common prior. Common knowledge assumptions in game theory have been criticized for not adequately modeling reality \citep[see, e.g.,][]{Gint09a}.
\citet{Bare09a}, \citet{Hell13a}, and \citet{BaTs14a} showed that the results of \citet{AuBr95a} still hold under somewhat weaker common knowledge assumptions.

Building on earlier work by \citet{PeTi96a}, \citet{NPRV96a} have characterized Nash equilibrium via one-player rationality (only utility-maximizing strategies are returned in one-player games) and a consistency condition that is orthogonal to ours because it varies the set of \emph{players}. Their condition requires that every strategy profile $s$ returned for an $n$-player game is also returned for the $(n-k)$-player game that results when $k$ players invariably play their strategies in $s$. The two axioms immediately imply that only subsets of Nash equilibria can be returned. Their results have no implications for games with a fixed number of players. 
Other axiomatic work on Nash equilibrium includes a characterization of pure Nash equilibrium \citep{Voor19a} and a characterization of Nash equilibrium for games with quasiconcave utility functions \citep{Salo92a}.

The work most closely related to ours is due to \citet{BrBr17c}, who have characterized \emph{maximin strategies} in two-player zero-sum games by consequentialism, consistency, and rationality. The differences between their results and ours are as follows.
Solution concepts as considered by \citeauthor{BrBr17c} return a set of strategies for one player rather than a set of strategy \emph{profiles}. 
Noting that Nash equilibria in zero-sum games consist of pairs of maximin strategies, their result translates as follows in the terminology of the present paper:
\emph{in zero-sum games}, every solution concept that satisfies consequentialism, consistency, and rationality returns an (exchangeable) \emph{subset} of Nash equilibria.\footnote{A set of strategy profiles is exchangeable if it is a Cartesian product of a set of strategies for each player.}
Our main theorem, \Cref{thm:nash}, is stronger since it \emph{(i)} holds for any number of players, \emph{(ii)} shows that \emph{all} Nash equilibria have to be returned (and thus rules out equilibrium refinements), and \emph{(iii)} is not restricted to games with rational-valued payoffs and rational-valued strategies (which are assumptions required for the proof of \citeauthor{BrBr17c}).
Moreover, we show that the containment in the set of Nash equilibria \emph{(iv)} also holds for restricted classes of games (cf.\ \Cref{sec:discussion}).

\section{The Model}\label{sec:model}

Let $U$ be an infinite universal set of actions and denote by $\mathcal F(U)$ the set of finite and nonempty subsets of $U$.
A permutation of $U$ is a bijection from $U$ to itself that fixes all but finitely many elements, and for $A\in\mathcal F(U)$, $\Sigma_{A}$ is the set of permutations of $U$ that fix each element of $U\setm A$.
We denote by $\mathbb R_+$ and $\mathbb R_{++}$ the set of non-negative and positive real numbers, respectively, and we use similar notation for $\mathbb Q$ and $\mathbb Z$.
If $p\in\mathbb R^U$, we write $\|p\| = \sum_{a\in U} |p(a)|$ for the $\ell_1$-norm of $p$ whenever the sum on the right-hand-side is finite, and we write $\supp(p) = \{a\in U\colon p(a) \neq 0\}$ for the support of $p$.
Moreover, let $$\Delta A = \{p\in \mathbb R_+^U\colon \supp(p)\subseteq A\text{ and }\sum_{a\in A} p(a) = 1\}$$ be the set of probability distributions on $U$ that are supported on $A$.
We call $\Delta A$ the set of strategies for action set $A$.
 The ball of radius $\delta > 0$ around a set $S\subseteq \Delta A$ is $B_\delta(S) = \{p\in \Delta A\colon \inf\{\|p - q\|\colon q \in S\} <\delta\}$, the set of strategies supported on $A$ and less than $\delta$ away from some strategy in $S$.\footnote{Note that $B_\delta(S)$ depends on $A$. In our usage, $A$ will be clear from the context.}

Let $N = \{1,\dots,n\}$ be the set of players.
For action sets $A_1,\dots,A_n\in\mcal F(U)$, we write $A = A_1\times\dots\times A_n$ for the corresponding set of action profiles.
A game on $A$ is a function $G\colon A\rightarrow \mathbb R^n$.
For $i\in N$ and $a\in A$, $G_i(a)$ is the payoff of player $i$ for the action profile $a$.
 We say that $G$ is normalized if for every player $i$, either $G_i$ has minimum 0 and maximum 1 or is constant at 1.
We call $\Delta A_1\times\dots\times\Delta A_n$ the set of (strategy) profiles on $A$.
 The ball of radius $\delta > 0$ around a set $S$ of strategy profiles, $B_\delta(S)$, is defined via the norm $\|p - q\| = \max_{i\in N} \|p_i - q_i\|$ on strategy profiles.
The players' payoffs for a strategy profile are the corresponding expected payoffs. 
Thus, a strategy profile $p$ is a Nash equilibrium of $G$ if
\begin{align*}
	G_i(p_i,p_{-i}) \ge G_i(q_i,p_{-i}) \text{ for all $q_i\in\Delta A_i$ and $i\in N$.}
\end{align*}

For two games $G$ and $G'$ on $A = A_1\timesdots A_n$ and $A' = A_1'\timesdots A_n'$, we say that $G$ is a blow-up of $G'$ or that $G'$ is a blow-down of $G$ if $G$ can be obtained from $G'$ by replacing actions with multiple payoff-equivalent actions and renaming actions.
That is, there are surjective functions $\phi_i\colon A_i\rightarrow A_i'$, $i\in N$, such that with $\phi = (\phi_1,\dots,\phi_n)$, $G = G'\circ\phi$.
Actions in $\phi_i^{-1}(a_i')$ for $a_i'\in A_i'$ are called clones of $a_i'$.
So $G$ is obtained from $G'$ by replacing each action $a_i'$ by $|\phi_i^{-1}(a_i')|$ clones of it.\footnote{\citet{KoMe86a} have considered a more permissive notion of ``blowing down'' in the context of Nash equilibrium refinements for extensive-form games. Their notion of a reduced form of a normal-form game allows deleting any action that is a convex combination of other actions.} 
A strategy $p_i\in \Delta A_i$ induces a strategy on $A_i'$ via the pushforward along $\phi_i$: $(\phi_i)_*(p_i) = p_i\circ\phi^{-1}$.
Then, a strategy profile $p$ on $A$ induces the strategy profile $\phi_*(p) = ((\phi_1)_*(p_1),\dots,(\phi_n)_*(p_n))$ on $A'$.

A solution concept $f$ maps every game $G$ to a set of strategy profiles $f(G)$ on the actions of $G$.
If $f(G)\neq \emptyset$ for all $G$, $f$ is a \emph{total} solution concept.
An example of a solution concept is \nash, which returns all strategy profiles that constitute Nash equilibria. \citet{Nash51a} has shown that every game admits at least one Nash equilibrium, and so $\nash$ is total.
A non-example of a solution concept is correlated equilibrium since correlated strategy profiles (i.e., distributions over action profiles) are not strategy profiles according to our definition.

\section{Characterization of Nash Equilibrium}\label{sec:nashcharacterization}

This section defines our axioms and states the characterization of Nash equilibrium along with the more illuminating part of its proof.
The remainder of the proof and all other proofs are given in the Appendix.

Consequentialism requires that if $G$ is a blow-up of $G'$, a strategy profile is returned in $G$ if and only if its pushforward is returned in $G'$.
Equivalently, it asserts that \emph{(i)} cloning an action does not change the probabilities of other actions and the strategies of the other players, and \emph{(ii)} the probability on the cloned action can be distributed arbitrarily among its clones.

\begin{definition}[Consequentialism]\label{def:consequentialism}
	A solution concept $f$ satisfies consequentialism if for all games $G$ and $G'$ such that $G$ is a blow-up of $G'$ with surjection $\phi = (\phi_1,\dots,\phi_n)$,
	\begin{align*}
		f(G) = \phi_*^{-1}(f(G')).
	\end{align*}
\end{definition}
Consequentialism is a common desideratum in decision theory.
It corresponds to the conjunction of \citeauthor{Cher54a}'s (\citeyear{Cher54a}) \emph{Postulate 6} (cloning of a player's actions) and \emph{Postulate 9} (cloning of Nature's states, i.e., of opponent's actions). 
The latter also appears as \emph{column duplication} \citep{Miln54a} and \emph{deletion of repetitious states} \citep{ArHu72a,Mask79a}.
In the context of social choice theory, a related condition called \emph{independence of clones} was introduced by \citet{Tide87a} \citep[see also][]{ZaTi89a,Bran13a}.

Suppose $G = G'$ and $\phi_i$ permutes the actions of each player $i$.
Then consequentialism reduces to equivariance, that is, relabeling the actions of a player results in the same relabeling of her strategies.
Formally, a solution concept $f$ satisfies \emph{equivariance} if for all games $G$ on $A$ and all $\pi = (\pi_1,\dots,\pi_n)$ where $\pi_i$ is a permutation of $A_i$,\footnote{For a strategy $p_i\in\Delta(A_i)$, $p_i\circ\pi_i$ is the strategy with $(p_i\circ\pi_i)(a_i) = p_i(\pi(a_i))$. For a strategy profile $p = (p_1,\dots,p_n)$, $p\circ\pi = (p_1\circ\pi_1,\dots,p_n\circ\pi_n)$, and this operation extends to sets of strategy profiles pointwise.}  
	\begin{align*}
		f(G\circ\pi) = f(G) \circ \pi.
	\end{align*}
We will frequently apply equivariance to strategy profiles where each player's strategy is the uniform distribution on some subset of her actions, and the permutations map each action to an action with the same probability, thus giving a new game for which the same strategy profile is returned.

Consistency requires that if a strategy profile is returned in two games with the same action sets, it is also returned in any convex combination of these games. An inductive argument shows that this is equivalent to the extension of the axiom to convex combinations of any finite number of games. We will frequently use this fact in our proofs.

\begin{definition}[Consistency]\label{def:consistency}
	A solution concept $f$ satisfies consistency if for any two games $G,G'$ on $A$ and any $\lambda\in[0,1]$,
	\begin{align*}
		f(G)\cap f(G')\subseteq f(\lambda G + (1-\lambda) G').
	\end{align*}
\end{definition}

We are not aware of game-theoretic work using this consistency axiom other than that by \citet{Bran13a}.
\citeauthor{Cher54a} considers combinations of decision-theoretic situations obtained by taking unions of action sets.
His \emph{Postulate 9} states that any action that is chosen in two situations should also be chosen in such a combination.
In our context, this translates to a consistency condition on the support of strategies and varying sets of actions.
Closer analogs of consistency, involving convex combinations of distributions over states (i.e., strategies of Nature), have been considered as decision-theoretic axioms \citep[see, e.g.,][]{Cher54a,Miln54a,GiSc03a}.
\citeauthor{Shap53c}'s (\citeyear{Shap53c}) characterization of the Shapley value involves an additivity axiom (which he calls \emph{law of aggregation}) that is similar in spirit to consistency.
Lastly, analogs of consistency feature prominently in several axiomatic characterizations in social choice theory, where it relates the choices for different sets of voters to each other \citep[see, e.g.,][]{Smit73a,Youn75a,YoLe78a,Myer95b,Bran13a,LaSk21a}.

For a game $G$ on $A$ and two actions $a_i,a_i'\in A_i$, we say that $a_i$ \emph{dominates} $a_i'$ if $G_i(a_i,a_{-i}) > G_i(a_i',a_{-i})$ for all $a_{-i}\in A_{-i}$; $a_i$ is \emph{dominant} if it dominates every other action in $A_i$.
Clearly, dominant actions are unique whenever they exist.
Rationality requires that a dominant action has to be played with non-zero probability.

\begin{definition}[Rationality]\label{def:rationality}
	A solution concept $f$ satisfies rationality if for all games $G$, all $i\in N$, and all dominant $a_i\in A_i$,
	\begin{align*}
		(p_1,\dots,p_n) \in f(G)\text{ implies } p_i(a_i) > 0.
	\end{align*}
\end{definition}
Note that rationality is not concerned with \emph{mixed} strategies and thus does not rely on \emph{expected} payoffs.
Moreover, it does not need any assumptions about other players.
The strengthening of rationality requiring that dominated action receive probability~0 is equivalent to \citeauthor{Miln54a}'s (\citeyear{Miln54a}) \emph{strong domination}, \citeauthor{Mask79a}'s (\citeyear{Mask79a}) \emph{Property (5)}, and weaker than \citeauthor{Cher54a}'s (\citeyear{Cher54a}) \emph{Postulate 2}.

It turns out that Nash equilibrium is the only total solution concept that satisfies the three axioms defined above.

\begin{restatable}{theorem}{nashtheorem}\label{thm:nash}
	Let $f$ be a total solution concept that satisfies consequentialism, consistency, and rationality.
	Then, $f = \nash$.
\end{restatable}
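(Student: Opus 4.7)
The plan is to prove the theorem by establishing both containments $f(G) \subseteq \nash(G)$ and $\nash(G) \subseteq f(G)$; the excerpt indicates that this section presents the former and the appendix the latter, so I concentrate on $f(G) \subseteq \nash(G)$. I argue by contradiction: assume $p \in f(G)$ is not a Nash equilibrium. Then some player $i$ has an action $a_i^* \in A_i$ with $G_i(a_i^*, p_{-i}) > G_i(p_i, p_{-i})$, and by linearity of expectation some $a_i \in \supp(p_i)$ satisfies $G_i(a_i, p_{-i}) < G_i(a_i^*, p_{-i})$.

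First I normalize $p$ via consequentialism: I blow $G$ up to a game $\tilde G$ in which the corresponding profile $\tilde p$ is uniform on its support for every player (clone each support action with multiplicity proportional to its probability, using a common denominator). This works directly when $p$ has rational entries; for arbitrary $p$ I would invoke the robust version of the theorem in \Cref{sec:robustness} and pass to a limit of rational-valued profiles. Next I symmetrize: for any permutation $\pi$ whose component $\pi_i$ fixes $\supp(\tilde p_i)$ setwise, equivariance (a direct consequence of consequentialism) gives $\tilde p \in f(\tilde G \circ \pi)$, and iterated consistency then yields $\tilde p \in f(\bar G)$ for a suitably averaged game $\bar G$. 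I choose the averaging to be over the transposition of $a_i$ and a clone of $a_i^*$ placed inside $\supp(\tilde p_i)$, which makes those two actions mutual clones in $\bar G$; consequentialism then lets me freely shift $\tilde p_i$-mass between them, producing a family of profiles inside $f(\bar G)$.

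The final and most delicate step combines $\bar G$ with an auxiliary game $H$ in which $a_i^*$ strictly dominates every other action of player $i$ and the remaining players are indifferent. By totality and rationality every element of $f(H)$ assigns positive probability to $a_i^*$, and consequentialism applied to the clones in $H$ lets me choose such an element agreeing with one of the profiles obtained above on every coordinate except player $i$. Consistency then places a common profile in $f((1-\lambda)\bar G + \lambda H)$; in this combined game $a_i^*$ strictly dominates $a_i$ for player $i$ ($H$ contributes the strict margin while $\bar G$ contributes equality), so a further blow-down that separates $a_i$ from $a_i^*$ turns $a_i^*$ into a genuinely dominant pure action of player $i$. Rationality forces positive probability on $a_i^*$ in every element of $f$ of that blow-down, which, reconciled with the redistribution freedom in $\bar G$, contradicts $\tilde p_i(a_i) > 0$.

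I expect this closing step to be the main obstacle, since our rationality axiom constrains only pure strict dominance across all pure opponent profiles and not merely best responses to mixed strategies; the chain of cloning, averaging, and blow-down operations that converts the original mixed-strategy strict inequality into a pure strict dominance inside some intermediate game—without losing track of which profile one is following—must be arranged carefully, and getting the combined game $(1-\lambda)\bar G + \lambda H$ to expose the contradiction without violating the prior consequentialism-based freedoms is the crucial technical point.
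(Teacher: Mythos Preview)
Your approach has a structural gap at the auxiliary-game step. To apply consistency to $(1-\lambda)\bar G + \lambda H$ you need a profile in $f(\bar G)\cap f(H)$. Consequentialism does let you choose the other players' strategies freely in $f(H)$ when their payoffs are constant, but you have no control over player $i$'s strategy inside $f(H)$: rationality only tells you that every element of $f(H)$ puts positive mass on $a_i^*$, not that $f(H)$ contains any particular distribution over $i$'s remaining actions. So there is no reason $f(\bar G)\cap f(H)$ is non-empty. Even granting a common profile, your endgame is off on two counts. First, rationality concerns a \emph{dominant} action, i.e., one dominating \emph{every} other action of that player; ``$a_i^*$ dominates $a_i$'' is not enough, and the vague ``further blow-down'' does not explain how dominance over all actions would arise. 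Second, the contradiction you state---``$a_i^*$ gets positive probability'' versus ``$\tilde p_i(a_i)>0$''---is not a contradiction at all. What you actually need is $a_j^*$ dominant while the tracked profile assigns it probability \emph{zero}. Finally, your handling of irrational $p$ by appealing to \Cref{sec:robustness} is circular: that section proves a quantitative variant of the same direction by essentially the same mechanism.

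The paper never introduces an external game. Its \Cref{lem:linearalgebra} packages the clone--permute--average--blow-down manoeuvre so that one can add to each player a new action $\bar a_i$ that is payoff-equivalent to a rational convex combination arbitrarily close to $p_i$, while shifting almost all of $p_i$'s mass onto $\bar a_i$; this is precisely how irrational $p$ is handled, since only a rational approximation is needed. After this first application the tracked profile $\bar p$ satisfies $\bar p_i(\bar a_i)\ge 1-\eps^2/(6n)$, and consequentialism is used to arrange $\bar p_j(a_j^*)=0$. A second application of the lemma replaces every remaining action $b$ of every player by the mixture $\frac{1}{1+3n/\eps}\,b + \frac{3n/\eps}{1+3n/\eps}\,\bar a_i$. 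Now every action of $j$ other than $a_j^*$ is within $\eps/3$ of $\bar a_j$ in payoff, while $a_j^*$ still beats $\bar a_j$ by more than $2\eps/3$; hence $a_j^*$ strictly dominates \emph{all} other actions of $j$. Since the tracked profile still has $\hat p_j(a_j^*)=0$, rationality is violated directly---no auxiliary game, no intersection of two unknown $f$-values, no limiting argument.
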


The proof of \Cref{thm:nash} uses a lemma that illustrates how one can manipulate games using the above axioms.
It shows that solution concepts satisfying consequentialism and consistency behave as one would hope under the analog of row and column operations familiar from linear algebra.
More precisely, it shows that when adding a linear combination of some actions (with positive rational-valued coefficients) to another action, then a solution concept satisfying consequentialism and consistency shifts probability from the former actions to the latter in proportion to the coefficients (see \Cref{fig:lem:linearalgebraillustration}).
A linear combination of actions here means a linear combination of the corresponding payoffs for all players.
A similar conclusion applies to adding new actions that are linear combinations of existing ones.

\begin{figure}[tbp]
 	\centering
	\begin{tikzpicture}[
  baseline,
  label distance=0pt 
]

\matrix [anchor = north west, matrix of math nodes, left delimiter=(,right delimiter=), row sep=.1cm, column sep=.1cm] (g) {
      1,0 & 1,1 & 1,0\\  
		0,1 & 2,0 & 0,1\\ 
		1,1 & 1,0 & 0,0\\ 
		1,0 & 0,0 & 1,1\\};

\node[above = 1.0cm of g.north] (glabel) {$G$};

\node[
  fit=(g-1-1),
  inner xsep=0,
  label=above:$\nicefrac13$
 ] {};
\node[
  fit=(g-1-2),
  inner xsep=0,
  label=above:$\nicefrac13$
 ] {};
\node[
  fit=(g-1-3),
  inner xsep=0,
  label=above:$\nicefrac13$
 ] {};
 
\node[
  fit=(g-1-1),
  inner xsep=13pt,
  label=left:$0$
 ] {};
\node[
  fit=(g-2-1),
  inner xsep=13pt,
  label=left:$\nicefrac13$
 ] {};
\node[
  fit=(g-3-1),
  inner xsep=13pt,
  label=left:$\nicefrac13$
 ] {};
\node[
  fit=(g-4-1),
  inner xsep=13pt,
  label=left:$\nicefrac13$
 ] {};

\matrix[right = 3.4cm of g.north east, anchor = north west, matrix of math nodes,left delimiter=(,right delimiter=), row sep=.1cm, column sep=.1cm] (ghat) {
      1,0 & 1,1 & 1,0\\  
		0,1 & 2,0 & 0,1\\ 
		1,1 & 1,0 & 0,0\\ 
		\frac34,\frac12 & \frac34,0 & \frac14,\frac34\\};

\node[above = 1.0cm of ghat.north] (ghatlabel) {$\hat G$};

\node[
  fit=(ghat-1-1),
  inner xsep=0,
  label=above:$\nicefrac13$
 ] {};
\node[
  fit=(ghat-1-2),
  inner xsep=0,
  label=above:$\nicefrac13$
 ] {};
\node[
  fit=(ghat-1-3),
  inner xsep=0,
  label=above:$\nicefrac13$
 ] {};
 
\node[
  fit=(ghat-1-1),
  inner xsep=13pt,
  label=left:$0$
 ] {};
\node[
  fit=(ghat-2-1),
  inner xsep=13pt,
  label=left:$\nicefrac16$
 ] {};
\node[
  fit=(ghat-3-1),
  inner xsep=13pt,
  label=left:$\nicefrac16$
 ] {};
\node[
  fit=(ghat-4-1),
  inner xsep=13pt,
  label=left:$\nicefrac23$
 ] {};

\end{tikzpicture} 
\caption{Example for an application of \Cref{lem:linearalgebra}.
Here, one half of the second and third action of the first player are added to the fourth action.
That is, $\hat a_1$ is the fourth action, $k_1 = (0,1,1,2)$, $\kappa_1 = \nicefrac16$, and $x_1 = (0,\nicefrac16,\nicefrac16,\nicefrac13)$; $\hat a_2$ is arbitrary, say, the first action of player 2, $k_2 = (1,0,0)$, $\kappa_2 = 1$, and $x_2 = (1,0,0)$.
}
\label{fig:lem:linearalgebraillustration}
\end{figure}

\begin{lemma}\label{lem:linearalgebra}
	Let $f$ be a solution concept satisfying consequentialism and consistency, $G$ be a game on $A$, and $p\in f(G)$.
	Let $\hat a = (\hat a_1,\dots,\hat a_n)\in U^N$, $k = (k_1,\dots,k_n)\in\mathbb Z_+^{U\times N}$, and $\kappa = (\kappa_1,\dots,\kappa_n)\in\mathbb R_{+}^N$ such that for all $i\in N$, $k_i \neq(0,0,\dots)\in\mathbb Z_+^U$, $k_i(\hat a_i) > 0$ if $\hat a_i\in A_i$, $\supp(k_i)\subseteq A_i$, $x_i:=\kappa_i k_i \le p_i$, and $x_i(\hat a_i) = p_i(\hat a_i)$.
	Then, there is a game $\hat G$ on $\hat A$ with $\hat A_i = A_i\cup\{\hat a_i\}$ so that the following holds.
	\begin{enumerate}
		\item $\hat p \in f(\hat G)$, where $\hat p_i = p_i - x_i + \|x_i\| e_{\hat a_i}$.\footnote{By $e_{\hat a_i}$, we denote the standard unit vector in $\mathbb R^U$ with a 1 in position $\hat a_i$.}
		\label{item:linalg1}
		\item For all $I\subseteq N$ and $a\in A$, $\hat G(\hat a_I,a_{-I}) = G\left(\left(\frac{k_i}{\|k_i\|}\right)_{i\in I}, a_{-I}\right)$.
		\label{item:linalg2}
	\end{enumerate}
\end{lemma}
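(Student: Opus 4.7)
The plan is to reduce the lemma to the single-player case and iterate over players: for each $i \in N$ in turn, I construct a game $\hat G^i$ from $G$ in which only player $i$'s action set and strategy have been modified as prescribed, leaving the other players' action sets and strategies intact. Applying this construction successively for $i = 1, \dots, n$, starting from $(G, p)$, yields the desired $(\hat G, \hat p)$. The payoff identity in item \ref{item:linalg2} then follows by induction on the number of players processed, using that mixing the $i$-th coordinate of $G$ with weights $k_i/\|k_i\|$ commutes with analogous mixing on the other coordinates (by linearity of the expected payoff).

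For a single player $i$, the argument has three sub-steps. First (blow-up via consequentialism), I replace each $a_i^* \in \supp(k_i)$ by $k_i(a_i^*)$ payoff-equivalent clones forming a distinguished subset $C \subseteq \tilde A_i$ of size $\|k_i\|$, together with, when $a_i^* \ne \hat a_i$, one further copy outside $C$ retaining the label $a_i^*$. If $\hat a_i \in A_i$, I arrange for one of the $k_i(\hat a_i)$ clones of $\hat a_i$ to keep the label $\hat a_i$ and to lie in $C$; the hypothesis $x_i(\hat a_i) = p_i(\hat a_i)$ makes the remainder at $\hat a_i$ vanish, so this is consistent. Consequentialism lifts $p$ to a strategy profile $\tilde p \in f(\tilde G)$ that puts mass $\kappa_i$ on each $c \in C$ and mass $p_i(a_i^*) - \kappa_i k_i(a_i^*)$ on the outside-$C$ copy of each $a_i^* \ne \hat a_i$. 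Second (symmetrization via equivariance and consistency), let $H$ be the symmetric group on $C$, and for $\sigma \in H$ let $\Pi_\sigma$ denote the permutation of $\tilde A$ that applies $\sigma$ to player $i$'s coordinate and the identity elsewhere. Because $\tilde p_i$ is constant (equal to $\kappa_i$) on $C$, we have $\tilde p = \tilde p \circ \Pi_\sigma$, so equivariance yields $\tilde p \in f(\tilde G \circ \Pi_\sigma)$ for every $\sigma \in H$; iterated consistency then gives $\tilde p \in f(\bar G)$ with $\bar G = |H|^{-1} \sum_{\sigma \in H} \tilde G \circ \Pi_\sigma$. A direct averaging computation shows that every $c \in C$ has payoff $\|k_i\|^{-1} \sum_{a_i^* \in \supp(k_i)} k_i(a_i^*) G(a_i^*, a_{-i}) = G(k_i/\|k_i\|, a_{-i})$ in $\bar G$, while payoffs of actions outside $C$ are unchanged. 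Third (blow-down via consequentialism), since all members of $C$ have the same payoff in $\bar G$, I merge $C$ into a single action labeled $\hat a_i$; the pushforward of $\tilde p$ along this merge puts mass $|C|\kappa_i = \|x_i\|$ on $\hat a_i$, giving $\hat p^i_i = p_i - x_i + \|x_i\| e_{\hat a_i}$, and the payoff of $\hat a_i$ in $\hat G^i$ equals $G(k_i/\|k_i\|, a_{-i})$.

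The delicate point is the bookkeeping in the blow-up when $\hat a_i$ already belongs to $A_i$: we need $\hat a_i$ to end up inside $C$ (so that the final merge produces an action still labeled $\hat a_i$ but carrying the averaged payoff), while the nonnegative remainders $p_i(a_i^*) - \kappa_i k_i(a_i^*)$ for $a_i^* \ne \hat a_i$ must be accommodated on surviving copies labeled $a_i^*$ outside $C$. The assumption $x_i(\hat a_i) = p_i(\hat a_i)$ is exactly what reconciles these two demands by forcing the analogous remainder at $\hat a_i$ to vanish. Once the blow-up is arranged correctly, the symmetrization and blow-down are routine averaging and cloning arguments, and the inductive iteration over players is mechanical.
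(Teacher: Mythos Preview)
Your proof is correct and follows essentially the same three-step scheme as the paper: blow up by cloning actions (consequentialism), average over permutations of the clone block (equivariance plus consistency), then blow down the block to a single action $\hat a_i$ (consequentialism). The only organizational difference is that you process the players one at a time and compose the single-player constructions, whereas the paper handles all players simultaneously by cloning every player's actions at once, symmetrizing over the product group $\Sigma_{\tilde A_1\setminus A_1^-}\times\cdots\times\Sigma_{\tilde A_n\setminus A_n^-}$, and merging all blocks in one blow-down. Your sequential variant requires an extra (straightforward) inductive bookkeeping step, but the substance of the argument---in particular the delicate labeling of $\hat a_i$ inside the clone block when $\hat a_i\in A_i$, made possible by $x_i(\hat a_i)=p_i(\hat a_i)$---is identical.
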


Condition~\ref{item:linalg1} states that $\hat p_i$ is obtained from $p_i$ by shifting probability $x_i(a_i)$ from $a_i$ to $\hat a_i$ for all $a_i\neq \hat a_i$.
Since $x_i \le p_i$, $\hat p_i$ is a lottery, and since $x_i$ is a scalar multiple of an integer-valued vector, the ratios between the shifted probabilities are rational numbers, which is crucial for the proof technique.
Condition~\ref{item:linalg2} ensures that playing $\hat a_i$ in $\hat G$ is payoff-equivalent to playing $\frac{k_i}{\|k_i\|}$ in $G$.
We remark that for each $i\in N$, one of three cases occurs: 
\begin{enumerate}[leftmargin=*,label=\textit{(\roman*)}]
	\item $\hat a_i \in A_i$ and $p_i(\hat a_i) > 0$: then, $k_i(\hat a_i) > 0$ and $\kappa_i > 0$, and so $\hat a_i$ is replaced by a linear combination of actions in $A_i$, each with positive probability in $p_i$, and non-zero weight on $\hat a_i$, and probability $x_i(a_i)$ is shifted from $a_i$ to $\hat a_i$ for all $a_i\neq \hat a_i$.
	\item $\hat a_i\in A_i$ and $p_i(\hat a_i) = 0$: then, $k_i(\hat a_i) > 0$ and $\kappa_i = 0$, and $\hat a_i$ is replaced by a linear combination of actions in $A_i$ with non-zero weight on $\hat a_i$, and $\hat p_i = p_i$.
	\item $\hat a_i\in U\setminus A_i$ (and thus $p_i(\hat a_i) = 0$): then, $k_i(\hat a_i) = 0$ and $\kappa_i \ge 0$, and $\hat a_i$ is replaced by a linear combination of actions in $A_i$, and $\hat p_i = p_i$ if (and only if) $\kappa_i = 0$.\label{item:linearalgebra:case3}
\end{enumerate}
\Cref{fig:lem:linearalgebra:proof} illustrates the proof of \Cref{lem:linearalgebra} in Case~\ref{item:linearalgebra:case3}.

\begin{figure}[tbp]
 	\centering
	\begin{tikzpicture}[
  baseline,
  label distance=10pt 
]

\matrix [anchor = north west, matrix of math nodes, left delimiter=(,right delimiter=), row sep=.1cm, column sep=.1cm] (g) {
      *& *\\  
		*& *\\ };

\node[above = 1.5cm of g.north] (glabel) {$G$};

\node[
  fit=(g-1-1)(g-1-2.north),
  inner xsep=0,
  above delimiter=\{,
  label=above:$A_2$
 ] {};

 \node[
   fit=(g-1-1)(g-2-1),
   inner xsep=10pt,
   left delimiter=\{,
   label=left:$A_1$
  ] {};

\matrix[right = 2.5cm of g.north east, anchor = north west, matrix of math nodes, left delimiter=(,right delimiter=), row sep=.1cm, column sep=.1cm] (gbar) {
      * & * & * & * & * & *\\ 
      * & * & * & * & * & *\\ 
      * & * & * & * & * & *\\ 
      * & * & * & * & * & *\\ 
      * & * & * & * & * & *\\ 
      * & * & * & * & * & *\\}; 
		
\draw[dashed] ($0.5*(gbar-1-2.north east)+0.5*(gbar-1-3.north west)$) -- ($0.5*(gbar-6-2.south east)+0.5*(gbar-6-3.south west)$);

\draw[dashed] ($0.5*(gbar-2-1.south west)+0.5*(gbar-3-1.north west)$) -- ($0.5*(gbar-2-6.south east)+0.5*(gbar-3-6.north east)$);

\draw[dashed] ($0.5*(gbar-1-4.north east)+0.5*(gbar-1-5.north west)$) -- ($0.5*(gbar-6-4.south east)+0.5*(gbar-6-5.south west)$);

\draw[dashed] ($0.5*(gbar-4-1.south west)+0.5*(gbar-5-1.north west)$) -- ($0.5*(gbar-4-6.south east)+0.5*(gbar-5-6.north east)$);

\node[above = 1.5cm of gbar.north] (gbarlabel) {$\tilde G$};		

\node[
  fit=(gbar-1-1)(gbar-1-2.north),
  inner xsep=0,
  above delimiter=\{,
  label=above:$A_2$
 ] {};
 
\node[
  fit=(gbar-1-3)(gbar-1-4.north),
  inner xsep=0,
  above delimiter=\{,
  label=above:$\hat A_2^{a_2}$
 ] {};

\node[
  fit=(gbar-1-5)(gbar-1-6.north),
  inner xsep=0,
  above delimiter=\{,
  label=above:$\hat A_2^{a_2'}$
 ] {};

\node[
  fit=(gbar-1-1.west)(gbar-2-1.west),
  inner xsep=10pt,
  left delimiter=\{,
  label=left:$A_1$
 ] {};

\node[
  fit=(gbar-3-1.west)(gbar-4-1.west),
  inner xsep=10pt,
  left delimiter=\{,
  label=left:$\hat A_1^{a_1}$
 ] {};
 
\node[
  fit=(gbar-5-1.west)(gbar-6-1.west),
  inner xsep=10pt,
  left delimiter=\{,
  label=left:$\hat A_1^{a_1'}$
 ] {};

\matrix[right = 2.5cm of gbar.north east, anchor = north west, matrix of math nodes,left delimiter=(,right delimiter=), row sep=.1cm, column sep=.1cm] (ghat) {
      * & * & *\\
      * & * & *\\
      * & * & *\\};

\draw[dashed, opacity = 1] ($0.5*(ghat-1-2.north east)+0.5*(ghat-1-3.north west)$) -- ($0.5*(ghat-3-2.south east)+0.5*(ghat-3-3.south west)$);

\draw[dashed, opacity = 1] ($0.5*(ghat-2-1.south west)+0.5*(ghat-3-1.north west)$) -- ($0.5*(ghat-2-3.south east)+0.5*(ghat-3-3.north east)$);

\node[above = 1.5cm of ghat.north] (ghatlabel) {$\hat G$};		

\node[
  fit=(ghat-1-3),
  inner xsep=0,
  label=above:$\hat a_2$
] {};

\node[
  fit=(ghat-1-1)(ghat-1-2.north),
  inner xsep=0,
  above delimiter=\{,
  label=above:$A_2$
 ] {};
 
\node[
  fit=(ghat-3-1),
  inner xsep=10pt,
  label=left:$\hat a_1$
] {};

\node[
  fit=(ghat-1-1)(ghat-2-1),
  inner xsep=10pt,inner ysep=0,
  left delimiter=\{,
  label=left:$A_1$
 ] {};

\draw[->, shorten >=0.3cm, shorten <=.3cm] (glabel) -- (gbarlabel) node[midway,above] {clone actions in $A_i$};
\draw[->, shorten >=0.3cm, shorten <=.3cm] (gbarlabel) -- (ghatlabel) node[midway,above] {permute actions not in $A_i$};
\draw[->, shorten >=0.3cm, shorten <=.3cm] (gbarlabel) -- (ghatlabel) node[midway,below] {and blow down to $\hat a_i$};

\end{tikzpicture} 
\caption{Schematic depiction of the games $G$, $\tilde G$, and $\hat G$ constructed in the proof of \Cref{lem:linearalgebra} (with $n = 2$, $|A_i| =  2$, $k_i = (2,2)$, and $\hat a_i\not\in A_i$ for $i = 1,2$).
$\tilde G$ is obtained from $G$ by adding $k_i(a_i)$ clones of every action $a_i$ of player $i$.
Then, an intermediate game $\bar G$ is constructed from $\tilde G$ by permuting the actions outside of $A_i$ and summing over the resulting games.
The actions outside of $A_i$ are now clones obtained from a convex combination (with weights $k_i$) of actions in $A_i$.
Removing all but one of these clones gives $\hat G$.
}
\label{fig:lem:linearalgebra:proof}
\end{figure}

\begin{proof}
	The first step constructs from $G$ a game $\tilde G$ by adding $k_i(a_i)$ clones of every action (with an exception for $\hat a_i$).
	For all $i\in N$ and $a_i\in A_i$, let $\hat A_i^{a_i}\subseteq U$ so that $|\hat A_i^{a_i}| = k_i(a_i)$ if $a_i\neq \hat a_i$, $|\hat A_i^{\hat a_i}| = k_i(\hat a_i)-1$ if $\hat a_i \in A_i$, and $\hat A_i^{\hat a_i} = \emptyset$ if $\hat a_i\in U\setminus A_i$, and all $\hat A_i^{a_i}$ are disjoint and disjoint from $A_i^-:=A_i\setm\{\hat a_i\}$.
	Let $\tilde A_i = A_i \cup (\bigcup_{a_i\in A_i} \hat A_i^{a_i})$ and $\phi_i\colon \tilde A_i\rightarrow A_i$ so that $\phi_i^{-1}(a_i) = \{a_i\}\cup \hat A_i^{a_i}$.
	Let $\tilde G$ be a game on $\tilde A = \tilde A_1\timesdots \tilde A_n$ so that $\tilde G$ is a blow-up of $G$ with surjection $\phi = (\phi_1,\dots,\phi_n)$.
	Hence, $\tilde G$ is a game with $k_i(a_i)+1$ clones of each action $a_i\in A_i$ if $a_i\neq\hat a_i$ and $k_i(\hat a_i)$ clones of $\hat a_i$.
	Consequentialism implies that $\tilde p\in \phi_*^{-1}(p)\subseteq f(\tilde G)$, where $\tilde p_i = p_i - x_i + \|x_i\| \sum_{a_i\in \tilde A_i\setm A_i^-} \frac{e_{a_i}}{|\tilde A_i\setm A_i^-|}$, which is a lottery since $x_i \le p_i$.
	In words, $\tilde p_i$ is obtained from $p_i$ by subtracting probability $x_i$ from $a_i$ and distributing it uniformly over the added clones of $a_i$, and $a_i$ itself if $a_i = \hat a_i\in A_i$.
	The number of clones has been chosen so that this amounts to subtracting $x_i$ and adding the uniform lottery on $\tilde A_i\setm A_i^-$---the added clones and $\hat a_i$ if $\hat a_i\in A_i$---times $\|x_i\|$.   
	
	The second step constructs from $\tilde G$ as game $\bar G$ by permuting all actions in $\tilde A_i\setminus A_i^-$ in all possible ways and averaging over those permutations of $\tilde G$.  
	Recall that for all $i\in N$, $\Sigma_{\tilde A_i\setm A_i^-} \subseteq \Sigma_{\tilde A_i}$ is the set of all permutations of $\tilde A_i$ that fix each element of $A_i^-$, and let $\Sigma = \Sigma_{\tilde A_1\setm A_1^-}\timesdots\Sigma_{\tilde A_n\setm A_n^-}$.
	Let
	\begin{align*}
		\bar G = \frac1{|\Sigma|}\sum_{\pi\in\Sigma} \tilde G\circ\pi.
	\end{align*}
	By definition, $\bar G$ is invariant under permutations of the actions in $\tilde A_i\setm A_i^-$, and so all actions in $\tilde A_i\setm A_i^-$ are clones of each other.
	Since $\tilde p_i$ assigns the same probability to all actions in $\tilde A_i \setm A_i^-$ and $f$ is equivariant (since it satisfies consequentialism), $\tilde p \in f(\tilde G\circ\pi)$ for all $\pi\in\Sigma$.
	Consistency then implies that $\tilde p\in f(\bar G)$.
	Every action in $\tilde A_i\setm A_i^-$ is by construction the convex combination of actions in $A_i$ with coefficients $k_i$.
	Thus, for all $I\subseteq N$, $a\in A$, and $\tilde a\in (\tilde A_1\setm A_1^-)\timesdots (\tilde A_n\setm A_n^-)$, 
	\begin{align*}
		\bar G(\tilde a_I,a_{-I}) = G\left(\left(\frac{k_i}{\|k_i\|}\right)_{i\in I}, a_{-I}\right).
 	\end{align*}
	Hence, playing $\tilde a_i\in \tilde A_i\setm A_i^-$ in $\bar G$ is payoff-equivalent to playing $\frac{k_i}{\|k_i\|}$ in $G$.
	
	The third step constructs from $\bar G$ a game $\hat G$ by deleting all actions in $\tilde A_i \setm A_i^-$---which are all clones of each other---except for $\hat a_i$.
	For all $i\in N$, let $\hat\phi_i\colon \tilde A_i\rightarrow \hat A_i$ so that $\hat\phi_i$ is the identity $A_i^-$ and $\hat\phi^{-1}(\hat a_i) = \tilde A_i\setm A_i^-$.
	Let $\hat G$ be a blow-down of $\bar G$ with surjection $\hat\phi = (\hat\phi_1,\dots,\hat\phi_n)$.
	Note that $\hat p = \hat\phi_*(\tilde p)$.
	Consequentialism thus gives $\hat p\in f(\hat G)$.
	Moreover, for all $I\subseteq N$ and $a\in A$, 
	\begin{align*}
		\hat G(\hat a_I,a_{-I}) = \bar G(\hat a_I,a_{-I}) = G\left(\left(\frac{k_i}{\|k_i\|}\right)_{i\in I}, a_{-I}\right).
	\end{align*}
\end{proof}

We show that every solution concept satisfying consequentialism, consistency, and rationality returns a \emph{subset} of Nash equilibria by contraposition: we assume that a non-equilibrium profile is returned in some game $G$ and derive a violation of rationality in two steps, each using \Cref{lem:linearalgebra}.
First, we construct from $G$ a game $\bar G$ where a non-equilibrium profile is returned, and every player plays some distinguished action with probability close to 1 with the distinguished action of one player, say $j$, not being a best response. 
Using consequentialism, we may furthermore assume that $j$ plays some best response with probability zero.
The second step is to replace each non-distinguished action of each player, except for $j$'s probability-zero best response, by a convex combination of itself and the distinguished action of that player with large weight on the latter.
In the resulting game $\hat G$, $j$'s probability-zero best response dominates all of $j$'s other actions, violating rationality.
Notably, totality of the solution concept is not used in proving the containment in $\nash$.
The proof that \emph{all} Nash equilibria are returned crucially uses the containment in $\nash$ and totality. It appears in the appendix and shows that every game with equilibrium $p$ can be written as a convex combination of games in which $p$ is the \emph{unique} equilibrium, possibly after removing clones. This statement requires an elaborate proof and may be of independent interest.

\begin{proof}[Proof of \Cref{thm:nash}]
	We prove that $f\subseteq\nash$.
	The proof of $\nash\subseteq f$ is given in the Appendix. 
	
	Let $G$ be a game on $A$ and $p \in f(G)$.
	We assume that the payoffs in $G$ are normalized so that for every player $i$, the maximum and the minimum of $G_i$ differ by at most 1.
	This assumption simplifies the expressions for some bounds below but is inessential. 
	Assume for contradiction that $p\not\in\nash(G)$.
	Then, there is a player $j\in N$ for whom $p_j$ is not a best response.
	That is, $G_j(a_j^*,p_{-j}) - G_j(p) > \eps$ for some $a_j^*\in A_j$ and $\eps > 0$.
	We may assume that $\frac{3n}\eps \in\mathbb N$.
	
	The first step is to replace $G$ by a game $\bar G$, where every player $i$ has an additional action that is payoff-equivalent to playing approximately $p_i$.
	Let $\bar a \in (U\setm A_1)\timesdots (U\setm A_n)$, $k\in(\mathbb Z_+^{U}\setm\{0\})^N$, and $\kappa\in\mathbb R_{+}^N$ so that for all $i\in N$, $\supp(k_i)\subseteq A_i$, $\kappa_i > 0$, and for $x_i = \kappa_i k_i$, $\|x_i\| \ge 1 - \frac{\eps^2}{6n}$ and $x_i \le p_i$.
	By \Cref{lem:linearalgebra}, which applies by the conditions imposed on $k_i$ and $x_i$, there is a game $\bar G$ on $\bar A$ with $\bar A_i =  A_i\cup\{\bar a_i\}$ so that 
	\begin{enumerate}
		\item $\bar p\in f(\bar G)$, where $\bar p_i = p_i - x_i + \|x_i\| e_{\bar a_i}$, and
		\item for all $I\subseteq N$ and $a\in A$, $\bar G(\bar a_I,a_{-I}) = G\left(\left(\frac{k_i}{\|k_i\|}\right)_{i\in I}, a_{-I}\right)$.
	\end{enumerate}
	In particular, $\bar p_j(\bar a_j) \ge 1 - \frac{\eps^2}{6n}$ and, using the normalization of $G$ and the fact that 
	\begin{align*}
		\left\|p_i - \frac{k_i}{\|k_i\|}\right\| = \left\|p_i - \frac{x_i}{\|x_i\|}\right\| \le \|p_i - x_i\| + \left\|x_i - \frac{x_i}{\|x_i\|}\right\| \le \frac{\eps^2}{3n},
	\end{align*}
	we have
	\begin{align*}
		\bar G_j(a_j^*,\bar a_{-j}) - \bar G_j(\bar a) &= G_j\left(a_j^*,\left(\frac{k_i}{\|k_i\|}\right)_{i\neq j}\right) - G_j\left(\left(\frac{k_i}{\|k_i\|}\right)_{i\in N}\right)\\
		&\ge G_j(a_j^*,p_{-j}) - G_j(p) - \sum_{i\in N} \left\|p_i - \frac{k_i}{\|k_i\|}\right\| > \frac{2\eps}3.
	\end{align*}
	
	Using that $f$ satisfies consequentialism, we may add clones of actions, and so we may assume without loss of generality that $\bar p_j(a_j^*) = 0$ and that there is $B\in\mathcal F(U)$ so that $\bar A_j = B \cup \{a_j^*,\bar a_j\}$, and for all $i\neq j$, $\bar A_i = B\cup\{\bar a_i\}$.

	The second step is to modify $\bar G$ so that $a_j^*$ dominates every action in $B\cup\{\bar a_j\}$ by replacing every action in $B$ of every player $i\in N$ by a convex combination of itself and $\bar a_i$ with a sufficiently large weight on $\bar a_i$.
	For every $b\in B$, let $k^b\in\mathbb (\mathbb Z_+^U\setm\{0\})^N$ and $\kappa^b\in\mathbb R_{+}^N$ so that for all $i\in N$ and $x_i^b :=\kappa_i^bk_i^b$, $k_i^b = e_{b} + \frac{3n}\eps  e_{\bar a_i}$ and $x_i^b(b) = \bar p_i(b)$.
	Note that for all $i\in N$,
	\begin{align*}
		\sum_{b\in B} x_i^b(\bar a_i) = \frac{3n}\eps \sum_{b\in B} x_i^b(b) = \frac{3n}\eps \sum_{b\in B} \bar p_i(b)\le \frac{3n}\eps \frac{\eps^2}{6n} = \frac\eps2, 
	\end{align*}
	and hence, $\sum_{b\in B} x_i^b\le \bar p_i$.
	Sequential application of \Cref{lem:linearalgebra} to $\bar G$, one for each $b\in B$ with $\hat a = (b,\dots,b)$, gives a game $\hat G$ on $\bar A$ so that 
	\begin{enumerate}
		\item $\hat p\in f(\hat G)$, where $\hat p_i = \bar p_i - \frac{3n}\eps(1-\|x_i\|) e_{\bar a_i} + \frac{3n}\eps \sum_{b\in B} (p_i(b) - x_i(b)) e_{b}$ for all $i\in N$, and
		\item for all $I\subseteq N$ and $a\in A$, $\hat G(a_I,\bar a_{-I}) = \bar G\left(\left(\frac{k_i^{a_i}}{\|k_i^{a_i}\|}\right)_{i\in I},\bar a_{-I}\right)$.
	\end{enumerate}
	Note that $\hat p_j(a_j^*) = \bar p_j(a_j^*) = 0$ by assumption.
	By the second statement, player $j$ playing $a_j^*$ in $\hat G$ is payoff equivalent to playing $a_j^*$ in $\bar G$, and for all $i\in N$ and $b\in B\cup\{\bar a_i\}$, player $i$ playing $a_i$ in $\hat G$ is payoff equivalent to playing $a_i$ with probability $\frac1{1 + 3n/\eps}$ and $\bar a_i$ with probability $\frac{3n/\eps}{1 + 3n/\eps}$ in $\bar G$.
	Thus, using again the normalization of $G$, we have for all $a\in \bar A$ with $a_j\neq a_j^*$,
	\begin{align*}
		|\hat G_j(a_j^*,a_{-j}) - \bar G_j(a_j^*,\bar a_{-j})|\le \frac n{1 + 3n/\eps} < \frac\eps3 \quad\text{ and }\quad |\hat G_j(a) - \bar G_j(\bar a)| \le \frac\eps3.
	\end{align*}
	It follows that for all $a\in \bar A$ with $a_j\neq a_j^*$,
	\begin{align*}
		\hat G_j(a_j^*,a_{-j}) - \hat G_j(a) \ge \bar G_j(a_j^*,\bar a_{-j}) - \bar G_j(\bar a) - \frac{2\eps}3 > 0.
	\end{align*}
	That is, $a_j^*$ dominates every other action of $j$ in $\hat G$, and so is dominant.
	Since $\hat p_j$ assigns probability~0 to $a_j^*$, this contradicts rationality.
\end{proof}

\section{Robustness of the Characterization}\label{sec:robustness}

In many areas of mathematics, it is common to aim for robust versions of results.
In this spirit, we show that the inclusion $f\subseteq\nash$ in \Cref{thm:nash} is robust with respect to small violations of the axioms: every solution concept that approximately satisfies the three axioms is approximately Nash equilibrium.
This can be made precise by formulating quantitatively relaxed versions of the axioms and replacing Nash equilibrium with $\eps$-equilibrium.

$\eps$-equilibrium is the standard notion of an approximate equilibrium.
A strategy profile is an $\eps$-equilibrium if no player can deviate to a strategy that increases her payoff by more than $\eps$.
By $\nash_\eps$ we denote the solution concept returning all $\eps$-equilibria in all games.

\begin{definition}[$\eps$-equilibrium]\label{def:approxnash}
	Let $G$ be a game on $A = A_1\timesdots A_n$.
	A profile $p$ is an $\eps$-equilibrium of $G$ if
	\begin{align*}
		G_i(p_i,p_{-i}) \ge G_i(q_i,p_{-i}) - \eps\text{ for all } q_i \in \Delta A_i\text{ and }i\in N.
	\end{align*}
\end{definition}

Likewise, there are natural approximate notions of consequentialism, consistency, and rationality.
They are obtained from the exact versions defined in the preceding section by allowing for small perturbations of strategy profiles.
 
Recall that consequentialism requires that if $G$ is a blow-up of $G'$, then a profile is returned in $G$ if and only if its pushforward is returned in $G'$.
Approximate consequentialism weakens this condition by requiring only that the set of returned profiles for $G$ is close (in Hausdorff distance) to the set of profiles whose pushforward is returned in $G'$.
\begin{definition}[$\delta$-consequentialism]\label{def:approxconseq}
	A solution concept $f$ satisfies $\delta$-consequentialism if for all games $G$ and $G'$ such that $G$ is a blow-up of $G'$ with surjection $\phi = (\phi_1,\dots,\phi_n)$, $f(G) = \phi_*^{-1}(\phi_*(f(G)))$, and
	\begin{align*}
		f(G)\subseteq B_\delta \left(\phi_*^{-1}(f(G'))\right) \text { and } \phi_*^{-1}(f(G')) \subseteq B_\delta(f(G)).
	\end{align*}
\end{definition}
The first assertion requires that probability can be distributed arbitrarily among clones.
The first set inclusion asserts that in the game obtained by cloning actions, the solution concept can only return profiles that differ by no more than $\delta$ from some profile obtained as a blow-up of a profile that is returned in the original game.
Conversely, the second inclusion requires that every blow-up of a profile returned in the original game differs by at most $\delta$ from some profile returned in the blown-up game.

Approximate consistency weakens its exact counterpart by requiring only that if a profile is returned in several games, then some profile close to it has to be returned in any convex combination of these games.

\begin{definition}[$\delta$-consistency]\label{def:approxconsistency}
	A solution concept $f$ satisfies $\delta$-consistency if for all games $G^1,\dots,G^k$ on the same action profiles and every $\lambda\in\mathbb R_+^k$ with $\sum_j\lambda_j = 1$,
	\begin{align*}
		f(G^1)\cap\dots\cap f(G^k)\subseteq B_{\delta}\left(f\left(\lambda_1 G^1 + \dots + \lambda_kG^k\right)\right).
	\end{align*}
\end{definition}
Unlike the exact notion, $\delta$-consistency as defined is not equivalent to its restriction for $k = 2$.\footnote{A proof attempt by induction fails for two reasons.
First, every application of $\delta$-consistency introduces an additive error of $\delta$, so that $k-1$ applications to two games only gives an error bound of $(k-1)\delta$ on the right hand side.
Second, even if $f(G_1)$, $f(G_2)$, and $f(G_3)$ have a non-empty common intersection, $f(\frac{\lambda_1}{\lambda_1 + \lambda_2} G_1 + \frac{\lambda_2}{\lambda_1 + \lambda_2} G_2)$ need not intersect with $f(G_3)$, making a further application of $\delta$-consistency useless.}

Lastly, approximate rationality asserts that actions that are dominated by a non-negligible amount are not played too frequently, say, with probability at most $\nicefrac12$.\footnote{The bound of $\nicefrac12$ could be replaced by any bound strictly below $1$.}
If $G$ is a game on $A= A_1\timesdots A_n$ and $a_i,a_i'\in A_i$ are actions of player $i$, we say that $a_i$ \emph{$\delta$-dominates} $a_i'$ if $G(a_i,a_{-i}) \ge G(a_i',a_{-i}) + \delta$ for all $a_{-i}\in A_{-i}$. 

\begin{definition}[$\delta$-rationality]\label{def:approxrationality}
	A solution concept $f$ satisfies $\delta$-rationality if for all games $G$,
	\begin{align*}
		f(G) \subseteq B_{\nicefrac12}(\hat A_1^\delta)\timesdots B_{\nicefrac12}(\hat A_n^\delta),
	\end{align*}
	where $\hat A_i^\delta$ denotes the set of actions of player $i\in N$ that are not $\delta$-dominated in $G$.
\end{definition}

Note that $\delta$-consequentialism and $\delta$-consistency reduce to the exact versions defined in the previous section when $\delta = 0$, and $\delta$-rationality is slighly stronger than its non-approximate counterpart.
We call a solution concept \emph{$\delta$-nice} if it satisfies $\delta$-consequentialism, $\delta$-consistency, and $\delta$-rationality.
It is routine to check that $\nash_\eps$ is $\delta$-nice for small enough $\delta$. 
Conversely, we show that for small enough $\delta$, every $\delta$-nice solution concept is a refinement of $\nash_\eps$.  
The statement is restricted to normalized games and requires equivariance for reasons that we discuss in Remarks \ref{rem:normalized} and \ref{rem:equivariance}.

\begin{theorem}\label{thm:continuityofcharacterization}
	Consider solution concepts on the set of normalized games.
	Then, for every $\eps> 0$, there is $\delta>0$ so that if $f$ is equivariant and satisfies $\delta$-consequentialism, $\delta$-consistency, and $\delta$-rationality, then $f$ is a refinement of $\nash_\eps$.
\end{theorem}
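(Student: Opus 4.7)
My approach is to mirror the proof of $f \subseteq \nash$ in \Cref{thm:nash}, replacing each application of consequentialism or consistency by its $\delta$-version and carefully tracking the accumulated error. For a $\delta > 0$ to be chosen later, suppose $f$ is equivariant and $\delta$-nice and $p \in f(G)$ in a normalized game $G$ fails to be an $\eps$-equilibrium, so some player $j$ has an action $a_j^*$ satisfying $G_j(a_j^*, p_{-j}) - G_j(p) > \eps$. I will construct a game $\hat G$ in which $a_j^*$ is the unique non-$\eta$-dominated action of $j$ for some $\eta > \delta$ and exhibit some $\hat q \in f(\hat G)$ with $\hat q_j(a_j^*) < \nicefrac34$. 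This contradicts $\delta$-rationality, which would force $\hat q_j(a_j^*) > \nicefrac34$ (since $B_{\nicefrac12}(\{a_j^*\})$ consists of strategies placing more than $\nicefrac34$ on $a_j^*$).

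The main tool is an approximate version of \Cref{lem:linearalgebra}: for every $\delta$-nice equivariant $f$ and $p \in f(G)$, the game $\hat G$ from the exact lemma admits some $\hat q \in f(\hat G)$ with $\|\hat q - \hat p\| \leq C\delta$ where $C$ depends only on $n$. Its proof traverses the three-step construction of the exact lemma (clone $G$ to $\tilde G$, average permutations of $\tilde G$ to obtain $\bar G$, blow $\bar G$ down to $\hat G$). Steps~1 and~3 apply the two directions of $\delta$-consequentialism, each contributing $O(\delta)$ to the error. Step~2 requires a $\Sigma$-invariant profile in $f(\tilde G)$ in order to apply $\delta$-consistency, which is the delicate point because $\tilde A_i \setminus A_i^-$ spans several clone groups of $\tilde G$. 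I would start from an approximate pullback $\tilde q \in f(\tilde G)$ close to $\tilde p$ (supplied by $\delta$-consequentialism) and use the first clause $f(\tilde G) = \phi_*^{-1}(\phi_*(f(\tilde G)))$ to redistribute probability within each clone group of $\tilde G$ so that every element of $\tilde A_i \setminus A_i^-$ receives the same target probability. Such a redistribution is feasible provided the group masses of $\tilde q$ satisfy the same feasibility inequalities as those of the exact $\tilde p$; the strict slack in these inequalities in the exact construction (coming from $x_i \leq p_i$ with strict inequality where it matters) absorbs an $O(\delta)$ perturbation whenever $\delta$ is small relative to $\eps/n$, and delivers a profile in $f(\tilde G)$ that is exactly $\Sigma$-invariant and $O(\delta)$-close to $\tilde p$.

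The argument in \Cref{thm:nash} applies \Cref{lem:linearalgebra} once to build $\bar G$ and then $|B|$ times sequentially to build $\hat G$. Since $|B|$ is unbounded across games, naive sequential application of the approximate lemma would cause the error to scale with $|B|$, which is unacceptable. I would therefore formulate a strengthened approximate lemma that simultaneously modifies several actions per player in a single three-step construction, keeping the total error bounded by $O(\delta)$ independently of $|B|$. With this in hand, the estimates of the exact proof survive up to $O(\delta)$: in the constructed $\hat G$, $a_j^*$ dominates every other action of $j$ by at least $\eps/3 - O(\delta)$, while $\hat q_j(a_j^*)$ is pinned at $O(\delta)$ by a final within-clone-group redistribution on a clone of $a_j^*$. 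Choosing $\delta$ sufficiently small in terms of $\eps$ and $n$ produces the required contradiction. The main obstacle is the symmetrization step in the approximate lemma, where exactly $\Sigma$-invariant profiles in $f(\tilde G)$ must be produced from the approximately invariant ones supplied by the axioms; a closely related secondary obstacle is avoiding linear error accumulation in the number of actions, addressed by the simultaneous formulation sketched above.
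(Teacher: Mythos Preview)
Your overall strategy coincides with the paper's: from a non-$\eps$-equilibrium $p\in f(G)$ one performs a single clone--permute--average--blowdown construction to concentrate mass on a new action $\hat a_i\approx p_i$ for each player (your Step~1; the paper's \Cref{lem:convex}), and then a second such construction, done simultaneously for all non-distinguished actions, to make $\hat a_j$ dominated while it still carries probability close to~$1$ (your ``strengthened simultaneous lemma''; the paper's \Cref{lem:dominated}). You have correctly isolated the two real obstacles---obtaining an \emph{exactly} $\Sigma$-invariant profile in $f(\tilde G)$ so that $\delta$-consistency applies, and preventing the error from scaling with $|B|$---and your resolution of the second is exactly what the paper does.

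The one place where your sketch does not go through as written is the symmetrization in Step~1. There is no guaranteed slack in $x_i\le p_i$ in \Cref{lem:linearalgebra}: for rational $p_i$ one may take $x_i=p_i$, and even when slack is introduced artificially, lowering the common target value $c_i$ below $\kappa_i$ to meet the perturbed feasibility constraints $c_i k_i(a_i)\le m_{a_i}$ yields an $\ell_1$-error of order $|c_i-\kappa_i|\,\|k_i\|$, which in the worst case is $\delta\cdot\|k_i\|/\min_{a_i}k_i(a_i)$; since $k_i$ encodes a rational approximation to an arbitrary $p_i$, this ratio is unbounded over games and cannot be controlled by choosing $\delta$ in terms of $\eps$ and $n$ alone. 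The paper sidesteps this not via slack but by reparameterizing the construction: clone every action of player $i$ the \emph{same} number $k_i=|A_i|\lceil 1/\delta\rceil$ of times, take the approximate group masses $\tilde p_i\in\phi_*(f(G'))\cap B_\delta(p_i)$ that $\delta$-consequentialism actually delivers, and select $\lfloor k_i\tilde p_i(a_i)\rfloor$ clones per action to carry the uniform block. Feasibility is then automatic (it is the floor inequality), the resulting profile lies in $f(G')$ by the first clause of $\delta$-consequentialism, and the approximation error is bounded by $3\delta$ independently of $|A_i|$ and of $p$. With this single adjustment your argument goes through and matches the paper's.
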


Note that $\delta$ in \Cref{thm:continuityofcharacterization} does not depend on $f$.
Otherwise, the statement would follow from the fact that exact consequentialism, consistency, and rationality characterize Nash equilibrium.
The proof is similar to that of \Cref{thm:nash}.
However, apart from the need to keep track of error terms arising from applications of the axioms, some steps require additional care.
The proof appears in \Cref{sec:robustnessproof}.
The comments on subclasses of games in \Cref{sec:discussion} remain valid for \Cref{thm:continuityofcharacterization}.

\begin{remark}[Normalized games]\label{rem:normalized}
	\Cref{thm:continuityofcharacterization} is restricted to normalized games since $\eps$-equilibrium becomes too stringent without a bound on the payoffs.
	More precisely, for any game $G$, the set of $\eps$-equilibria of $cG$ shrinks to the set of exact equilibria of $G$ as $c$ goes to positive infinity. 
	To see that the restriction to normalized games is indeed necessary, consider the following solution concept $f$ for two-player games. 
	Let $\hat G = (0,0\quad 0,-c)$ be a game where the first player has only one action, the second player has two actions, and $c$ is a large positive number ($c\gg \frac1\delta$), and let $\hat p = (1,(1-\delta,\delta))$.
	Observe that $\hat p$ is not an $\eps$-equilibrium for $\hat G$.
	Define $f(G) = \nash(G) \cup \{\hat p\}$ if $G = \hat G$ and $f(G) = \nash(G)$ otherwise.
	It is not hard to see that $f$ is $\delta$-nice.\footnote{$\delta$-consequentialism and $\delta$-rationality follow from the fact that $\nash$ is nice and the definition of the axioms.
	To verify $\delta$-consistency, it suffices to consider convex combinations of games with equilibrium $\hat p$ involving $\hat G$.
	The only games on the same action sets as $\hat G$ with equilibrium $\hat p$ are those where both actions of the second player give her the same payoff.
	Any convex combination of such games with $\hat G$ has $(1,(1,0))$ as the unique equilibrium, which conforms with $\delta$-consistency.}
\end{remark}

\begin{remark}[Equivariance]\label{rem:equivariance}
	Applying $\delta$-consequentialism to two games that are the same up to permuting actions, one can see that it implies $\delta$-equivariance defined analogously to $\delta$-consequentialism. 
	However, our proof requires exact equivariance.
	Whether \Cref{thm:continuityofcharacterization} holds without this assumption is open.
\end{remark}

\begin{remark}[Converse of \Cref{thm:continuityofcharacterization}]
	In contrast to \Cref{thm:nash}, a characterization of $\nash_\eps$ as the only $\delta$-nice solution for some $\delta$ is not possible.
	If $\nash_\eps$ is $\delta$-nice for some $\delta$, then so is $\nash_{\eps'}$ for all $\eps' \le \eps$.
	A weaker converse to \Cref{thm:continuityofcharacterization} would require that for every $\delta > 0$, there is $\eps > 0$ so that if $f$ is $\delta$-nice, then $f = \nash_{\eps'}$ for some $\eps' \le \eps$.
	While this statement is obviously false in general since $\delta$-niceness is vacuous for large $\delta$, we do not know if it holds for small enough $\delta$.
\end{remark}

\begin{remark}[Alternative approximate equilibrium notions]\label{rem:fragilenash}
	\Cref{thm:continuityofcharacterization} fails if $\eps$-equilibrium is replaced by some alternative notions of approximate equilibrium.
	For example, \Cref{thm:continuityofcharacterization} does not hold when replacing $\nash_\eps$ by $B_\eps(\nash)$, the set of profiles that are $\eps$-close to some Nash equilibria.
	This is because $\nash_{\eps'}$ is $\delta$-nice for small enough $\eps'$, however, $\nash_{\eps'} \not\subseteq B_\eps(\nash)$ for any $\eps' > 0$.
	In words, no matter how small $\eps'$ is, there exist $\eps'$-equilibria that are more than $\eps$ away from every Nash equilibrium.
\end{remark}

\section{Discussion}\label{sec:discussion}

We conclude by discussing consequences and variations of our characterization of Nash equilibrium, as well as the independence of the axioms. 

\paragraph{Rationalizability and admissibility}
A player's strategy is rationalizable if it is a best response to some belief about the other players' strategies, assuming that everyone's rationality is common knowledge \citep{Bern84a,Pear84a}.
Every Nash equilibrium strategy is rationalizable; thus, rationalizability is consistent with our axioms.
However, the solution concept returning all rationalizable strategy profiles---profiles in which every player plays a rationalizable strategy---violates consistency. 
In the first two games below, every strategy of either player is rationalizable.\footnote{In two-player games, an action is rationalizable if and only if it survives the iterated elimination of strictly dominated strategies \citep{Bern84a,Pear84a}. Since no strategy is strictly dominated in either game, every strategy is rationalizable in both games.}
In the first game, the second action of the row player is rationalizable by a belief with high probability on the column player's first action, whereas in the second game, the second action of the row player is rationalizable by a belief with high probability on the column player's second action.
The third game is the uniform convex combination of the first two.
In that game, the second action of the row player is strictly dominated and thus not rationalizable, which shows the violation of consistency.
These consistency failures arise because the strategy profile is not common knowledge: a player may rationalize a strategy with a belief that differs from the other players' strategies. 
However, note that common knowledge of the strategy profile and the rationality of all players already entail that the strategy profile is a Nash equilibrium.
	\begin{center}
	\begin{tikzpicture}[baseline,
  label distance=10pt]
		\matrix [matrix of math nodes, left delimiter=(,right delimiter=), row sep=.1cm, column sep=.1cm] (g) {
		      0,0 & 4,0 \\ 
				2,0 & 0,0 \\ };
		\matrix [right = 3.4cm of g.north, anchor = north west, matrix of math nodes, left delimiter=(,right delimiter=), row sep=.1cm, column sep=.1cm] (g2) {
		      4,0 & 0,0 \\ 
				0,0 & 2,0 \\ };
		\matrix [right = 3.4cm of g2.north, anchor = north west, matrix of math nodes, left delimiter=(,right delimiter=), row sep=.1cm, column sep=.1cm] (g3) {
		      2,0 & 2,0 \\ 
				1,0 & 1,0 \\ };
		\node[fit=(g),inner xsep=0pt,label=left:$\nicefrac12$] {};
		\node[fit=(g2),inner xsep=0pt,label=left:$+\;\nicefrac12$] {};
		\node[fit=(g3),inner xsep=0pt,label=left:${=}$] {};
	\end{tikzpicture}
	\end{center}
	
A strategy is admissible if it is not weakly dominated or, equivalently, if it is a best response to some belief about the other players' strategies with full support. 
While each of our axioms individually is compatible with admissibility,\footnote{The solution concept returning all admissible strategy profiles satisfies consequentialism and rationality. The solution concept returning all strategy profiles in which every player best responds to uniformly randomizing opponents satisfies consistency (see also the discussion of the independence of the axioms below), and each such strategy profile is admissible by the stated equivalence.} their conjunction is not since not all Nash equilibria are admissible.
The example for rationalizability above also shows that the solution concept returning all admissible strategy profiles violates consistency.

\paragraph{Equilibrium refinements}
	One consequence of \Cref{thm:nash} is that every refinement of Nash equilibrium violates at least one of the axioms (including totality). 
	We discuss some examples.
	Since rationality is preserved under taking subsets, every refinement satisfies rationality.
	\emph{Quasi-strict equilibrium} \citep{Hars73a} satisfies consistency, and, for two players, is total \citep{Nord99a}.
	However, it violates consequentialism (even for two players) since it does not allow for the possibility that clones of equilibrium actions are played with probability~0.
	A trivial example is a game where all players' utility functions are constant for all action profiles.
	Then, every \emph{full support} strategy profile is a quasi-strict equilibrium, whereas consequentialism requires that every strategy profile is returned.
	For three or more players, quasi-strict equilibria may not exist.
	
	\emph{Trembling-hand perfect equilibrium} \citep{Selt75a} is total and satisfies consequentialism; thus, it is not consistent.
	In the first two games below, the strategy profile with probability 1 on the bottom-right action profile is a trembling-hand perfect equilibrium.
	To see this, recall that in two-player games, an equilibrium is trembling-hand perfect if and only if it is admissible \citep[see, e.g.,][]{vDam91a}. 
	The third game is the uniform convex combination of the first two.
	In that game, the row player's second action is weakly dominated; thus, the same equilibrium is not trembling-hand perfect, which shows a consistency violation. 
	The reason is the same as for the consistency violation discussed in the previous paragraph: in the first game, the second action of the row player is justified by trembles with much more probability on the column player's first action, whereas in the second game, the second action of the row player is justified by trembles with much more probability on the column player's second action.   
	\begin{center}
	\begin{tikzpicture}[baseline,
  label distance=10pt]
		\matrix [matrix of math nodes, left delimiter=(,right delimiter=), row sep=.1cm, column sep=.1cm] (g) {
		      0,0 & 4,0 & 2,0\\ 
				2,0 & 0,0 & \mathbf{2,0}\\ };
		\matrix [right = 3.4cm of g.north, anchor = north west, matrix of math nodes, left delimiter=(,right delimiter=), row sep=.1cm, column sep=.1cm] (g2) {
		      4,0 & 0,0 & 2,0\\ 
				0,0 & 2,0 & \mathbf{2,0}\\ };
		\matrix [right = 3.4cm of g2.north, anchor = north west, matrix of math nodes, left delimiter=(,right delimiter=), row sep=.1cm, column sep=.1cm] (g3) {
		      2,0 & 2,0 & 2,0\\ 
				1,0 & 1,0 & \mathbf{2,0}\\ };
		\node[fit=(g),inner xsep=0pt,label=left:$\nicefrac12$] {};
		\node[fit=(g2),inner xsep=0pt,label=left:$+\;\nicefrac12$] {};
		\node[fit=(g3),inner xsep=0pt,label=left:${=}$] {};
	\end{tikzpicture}
	\end{center}

	Lastly, \emph{strong equilibrium} \citep{Auma59a} and \emph{coalition-proof equilibrium} \citep{BPW87a} also satisfy consequentialism but violate consistency. To see this, consider the example below. In the first two games, the strategy profile with probability 1 on the bottom-right action profile is a strong (and thereby also coalition-proof) equilibrium. In the third game, which is a convex combination of the first two, the bottom-right action profile is not coalition-proof (and thereby not strong). 

	\begin{center}
	\begin{tikzpicture}[baseline,
  label distance=10pt]
		\matrix [matrix of math nodes, left delimiter=(,right delimiter=), row sep=.1cm, column sep=.1cm] (g) {
		      4,0 & 0,0 \\ 
				0,0 & \mathbf{1,1} \\ };
		\matrix [right = 3.4cm of g.north, anchor = north west, matrix of math nodes, left delimiter=(,right delimiter=), row sep=.1cm, column sep=.1cm] (g2) {
		      0,4 & 0,0 \\ 
				0,0 & \mathbf{1,1} \\ };
		\matrix [right = 3.4cm of g2.north, anchor = north west, matrix of math nodes, left delimiter=(,right delimiter=), row sep=.1cm, column sep=.1cm] (g3) {
		      2,2 & 0,0 \\ 
				0,0 & \mathbf{1,1} \\ };
		\node[fit=(g),inner xsep=0pt,label=left:$\nicefrac12$] {};
		\node[fit=(g2),inner xsep=0pt,label=left:$+\;\nicefrac12$] {};
		\node[fit=(g3),inner xsep=0pt,label=left:${=}$] {};
	\end{tikzpicture}
	\end{center}
	Moreover, strong equilibrium and coalition-proof equilibrium are not total even when there only two players.

\paragraph{Independence of axioms}
	All properties in \Cref{thm:nash} are required to derive the conclusion.
	For each of the four axioms (including totality), there is a solution concept different from \nash that satisfies the three remaining axioms.
	\begin{enumerate}[leftmargin=*,label=\textit{(\roman*)}]
		\item \emph{Consequentialism:} return all strategy profiles in which every player randomizes only over actions that are best responses against uniformly randomizing opponents;
		satisfies totality, consistency, and rationality but violates consequentialism.
		
		\item \emph{Consistency:} return all strategy profiles in which every player randomizes only over actions that maximize this player's highest possible payoff;
		satisfies totality, consequentialism, and rationality but violates consistency.
		
		\item \emph{Rationality:} return all strategy profiles that maximize the sum of all players' payoffs;
		satisfies totality, consequentialism, and consistency but violates rationality.
		\item \emph{Totality:} return all strategy profiles whose pushforwards are pure Nash equilibria in a blowdown of the original game; satisfies consequentialism, rationality, and consistency but violates totality.
	\end{enumerate}
	The first three examples are neither contained in nor contain $\nash$ and, noting that they are equivariant, also apply to \Cref{thm:continuityofcharacterization}. The last one is necessarily a refinement of \nash since totality is not needed for the inclusion $f\subseteq \nash$ in \Cref{thm:nash}.
	Further examples that \emph{are} refinements or coarsenings of $\nash$ are not hard to find. 
	Quasi-strict equilibrium (for two players) violates consequentialism but satisfies consistency and rationality as discussed above.
	Trembling-hand perfect equilibrium is not consistent but satisfies the other two axioms.
	The trivial solution concept returning all strategy profiles in all games violates rationality but satisfies the remaining two axioms.
	Note that rationality is so weak that even for \emph{one}-player games, all three axioms are required for the characterization.
		
\paragraph{Restricted classes of games}
	Examining the proof of the inclusion $f\subseteq\nash$, one can see that it remains valid for any class of games that is closed under blowing-up, blowing-down, and taking convex combinations.
	More precisely, it holds for any class of games $\mathcal G$ with the following properties.
	\begin{enumerate}[leftmargin=*,label=\textit{(\roman*)}]
		\item If $G$ is a blow-up of $G'$, then $G\in\mathcal G$ if and only if $G'\in\mathcal G$.
		\item If $G_1,\dots,G_k\in\mathcal G$ are games on the same action profiles and $\lambda\in\mathbb R_+^k$ with $\sum_j \lambda_j = 1$, then $\lambda_1 G_1 + \dots + \lambda_k G_k\in \mathcal G$.
	\end{enumerate}
	Similarly, \Cref{thm:continuityofcharacterization} remains valid any such class of games without changes to the proof.
	
	Various well-known classes of games satisfy these properties, for example, (strategically) zero-sum games, graphical games, and potential games. A game is symmetric if all players have the same set of actions, and permuting the actions in any action profile results in the same permutation of the players' payoffs. Symmetric games are not closed with respect to blow-ups.
	For example, cloning or permuting actions of only one player makes a symmetric game asymmetric. 
	It is unclear how to extend the current proof approach to symmetric games. The following symmetric two-player game illustrates this.
	
\begin{center}
\begin{tikzpicture}[baseline]

\matrix [matrix of math nodes, left delimiter=(,right delimiter=), row sep=.1cm, column sep=.1cm] (g) {
        3,3 & 2,2 & 2,2\\  
		2,2 & 3,3 & 0,0\\ 
		2,2 & 0,0 & 3,3\\};
\node[
  fit=(g-1-1),
  inner xsep=0,
  label=above:$0$
 ] {};
\node[
  fit=(g-1-2),
  inner xsep=0,
  label=above:$\nicefrac12$
 ] {};
\node[
  fit=(g-1-3),
  inner xsep=0,
  label=above:$\nicefrac12$
 ] {};
 
\node[
  fit=(g-1-1),
  inner xsep=13pt,
  label=left:$0$
 ] {};
\node[
  fit=(g-2-1),
  inner xsep=13pt,
  label=left:$\nicefrac12$
 ] {};
\node[
  fit=(g-3-1),
  inner xsep=13pt,
  label=left:$\nicefrac12$
 ] {};
\end{tikzpicture} 
\end{center}
	The indicated strategy profile is not a Nash equilibrium.  
	However, all symmetric games that are blow-ups of this game and convex combinations thereof have the same payoffs on the diagonal.
	Hence, clones of the second and third action of either player are not dominated in any of these games, so no contradictions to rationality occur.
	This issue does not arise for symmetric \emph{zero-sum} games \citep[see][Remark 3]{BrBr17c}.

	For our proof of the converse inclusion, $\nash\subseteq f$, a class of games needs to have enough games with a unique equilibrium.
	Suppressing technicalities, it is required that for every game $G\in\mathcal G$ and every equilibrium $p$ of $G$, $G$ can be written as a convex combination of games in $\mathcal G$ that have $p$ as the unique equilibrium.
	We have not examined which classes of games, other than the class of all games, have this property.

\paragraph{Closures of solution concepts}
	One can ``repair'' any given solution concept by iteratively adding strategy profiles whenever there is a failure of consequentialism or consistency.
	For instance, if a profile is returned in two games but not in some convex combination thereof, it is added to the set of returned profiles for the convex combination to eliminate this failure of consistency.
	One can equivalently define the closure of a solution concept $f$ as the smallest solution concept containing $f$ and satisfying consequentialism and consistency.\footnote{This closure is well-defined since consequentialism and consistency are preserved under arbitrary intersections of solution concepts.}
	By \Cref{thm:nash}, the closure of a total refinement of Nash equilibrium is Nash equilibrium, and the closure of total non-refinements violates rationality.

\paragraph{Correlated equilibrium}
Our framework excludes correlated equilibrium since a strategy profile consists of a strategy for each player rather than a distribution over action profiles.
It is an intriguing question whether our results extend to solution concepts returning correlated strategy profiles, but there are several obstacles to obtaining such a result.
First, the axioms must be defined for correlated solution concepts, which leads to subtle issues in the case of consequentialism.
Second, new proof techniques will be required since our arguments crucially exploit that the players' strategies are independent.
Third, Nash equilibrium will satisfy many reasonable extensions of the axioms to correlated solution concepts, and thus, a unique characterization of correlated equilibrium without further axioms may not be feasible.

\section*{Acknowledgments}{\footnotesize% 
Florian Brandl acknowledges support by the DFG under the Excellence Strategy EXC-2047. Felix Brandt acknowledges support by the DFG under grants {BR~2312/11-2} and {BR~2312/12-1}.
The authors thank Francesc Dilm\'e, Benny Moldovanu, and Lucas Pahl for helpful feedback.
A preliminary version of this paper was presented at the Interdisciplinary CIREQ-Workshop at Universit\'e de Montr\'eal (Montr\'eal, March 2023), the Microeconomic Theory Workshop at the University of Bonn (Bonn, May 2023), and the Conference on Voting Theory and Preference Aggregation (Karlsruhe, October 2023).\par}

\pagebreak
\appendix
\section*{APPENDIX}\label{sec:appendix}

 \section{Omitted Proof From \Cref{sec:nashcharacterization}}\label{sec:nashcharacterizationproofs}

As a shorthand, we say that a solution concept is \emph{nice} if it satisfies consequentialism, consistency, and rationality. This appendix contains the proof of the missing direction of \Cref{thm:nash}, that is, $\nash\subseteq f$ for any nice solution concept $f$.
The main idea of the proof is simple: for every game $G$ and every equilibrium $p$ of $G$, show that $G$ can be written as a convex combination of games for which $p$ is the unique equilibrium.
Since $f\subseteq\nash$ and $f$ is total, we know that $f$ has to return unique equilibria.
Consistency thus gives $p\in f(G)$.

The difficult part is to find a suitable representation of $G$ as a convex combination.
A first observation is that it suffices to prove that $\nash\subseteq f$ holds for games where the payoff functions of all players but one are 0.
More formally, we say that $G$ is a player $i$ payoff game if for all $j\neq i$, $G_j\equiv 0$.
Then, the following holds.

\begin{lemma}[Reduction to player $i$ payoff games]\label{lem:playeripayoff}
	Let $G$ be a game and $p\in\nash(G)$.
	For $i\in N$, let $G^i$ be the game with $G^i_i = G_i$ and $G^i_j \equiv 0$ for all $j\neq i$.
	Then, $p\in\nash(G^i)$.
\end{lemma}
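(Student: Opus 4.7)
My plan is to verify the Nash equilibrium condition for $p$ in $G^i$ player by player, directly from the definitions, with no appeal to the three axioms. The statement is essentially a ``projection'' observation: zeroing out every payoff function except player $i$'s cannot destroy the equilibrium property.

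\medskip

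\noindent\textbf{Step 1 (Player $i$).} I would first check the best-response condition for player $i$ in $G^i$. By construction, $G^i_i = G_i$, so for every $q_i \in \Delta A_i$,
\begin{align*}
	G^i_i(p_i, p_{-i}) = G_i(p_i, p_{-i}) \ge G_i(q_i, p_{-i}) = G^i_i(q_i, p_{-i}),
\end{align*}
where the middle inequality is exactly the best-response condition for player $i$ in $G$, which holds because $p \in \nash(G)$.

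\medskip

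\noindent\textbf{Step 2 (Players $j \neq i$).} For any other player $j$, we have $G^i_j \equiv 0$, so $G^i_j(p_j, p_{-j}) = 0 = G^i_j(q_j, p_{-j})$ for every $q_j \in \Delta A_j$. The best-response condition holds with equality, so $p_j$ trivially best-responds to $p_{-j}$ in $G^i$.

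\medskip

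\noindent Combining both steps gives $p \in \nash(G^i)$, which is the conclusion. The main (and only) subtlety is a notational one: recognizing that the best-response condition is defined per player and depends only on that player's payoff function, so changing the payoff functions of the other players has no effect on player $i$'s incentives, while the trivial zero-payoff functions of the other players admit every strategy as a best response. I do not foresee any real obstacle; the lemma is a routine observation whose role is to let subsequent arguments reduce the problem of representing $G$ as a convex combination of ``uniquely-solvable'' games to the corresponding problem for each player-$i$ payoff game separately.
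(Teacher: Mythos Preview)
Your proof is correct and follows essentially the same approach as the paper: verify that player $i$'s best-response condition carries over because $G^i_i = G_i$, and that every other player trivially best-responds because $G^i_j \equiv 0$. The paper's proof is the same two-line argument.
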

\begin{proof}
	First, $p_i$ is a best response to $p_{-i}$ in $G^i$ since it is a best response in $G$ and $G^i_i = G_i$.
	Second, for all $j\neq i$, $p_j$ (and any other strategy for that matter) is a best response to $p_{-j}$ in $G^i$ since $G^i_j \equiv 0$.
	Hence, $p\in\nash(G)$.
\end{proof}

So if we can show that for every player $i$ payoff game $G$, $\nash(G)\subseteq f(G)$, we can use \Cref{lem:playeripayoff} and consistency to conclude that the same conclusion holds for all games.
While this reduction is convenient, it is not as powerful as it may seem since player $i$ payoff games have a unique equilibrium only if all players other than $i$ have only a single action.
Thus, even when decomposing player $i$ payoff games into games with a unique equilibrium, one needs to consider games with non-zero payoffs for all players.

\subsection{Reduction to Deterministic Slice-Stochastic Tensors}\label{sec:reductiontoslicestochastic}

The next step is a further reduction showing that it is sufficient to consider the case when $G_i$ is a \emph{slice-stochastic tensor}.
To motivate this notion, recall that the well-known Birkhoff-von Neumann theorem states that every bistochastic matrix can be written as a convex combination of permutation matrices \citep{Birk46a,vNeu53a}.\footnote{A matrix $M\in\mathbb R_+^{m\times m}$ is bistochastic if the row sums and column sums are 1.}
There are different ways one might try to generalize this statement to higher-order tensors.
For example, one might say that a tensor $T\colon A_1\timesdots A_n\rightarrow\mathbb R_+$ is $n$-stochastic if for all $i\in N$ and $a_{-i}\in A_{-i}$, $\sum_{a_i\in A_i} T(a_i,a_{-i}) = 1$ (which is to say that every ``tube'' of $T$ sums to 1).
However, with this definition, for $n \ge 3$, it is not true that every $n$-stochastic tensor can be written as a convex combination of $n$-stochastic tensors taking values in $\{0,1\}$ \citep[see][]{CLN14a}.
We thus opt for a different generalization of bistochastic matrices. 

\begin{definition}[Slice-stochastic tensors]
	Let $A\in\mathcal F(U)^n$ with $|A_1| = \dots = |A_n|$.
	A tensor $T\colon A\rightarrow\mathbb R$ is \emph{slice-stochastic for $i\in N$} if
	\begin{enumerate}[leftmargin=*,label=\textit{(\roman*)}]
		\item for all  $a_{-i}\in A_{-i}$, $\sum_{a_i\in A_i} T(a_i,a_{-i}) = 1$, \label{item:slice1}
		\item for all $a_i\in A_i$, $\sum_{a_{-i}\in A_{-i}} T(a_i,a_{-i}) = m^{n-2}$, and\label{item:slice2}
		\item for all $a\in A$, $0 \le T(a) \le 1$.\label{item:slice3}
	\end{enumerate}
	We say that $T$ is a \emph{deterministic} slice-stochastic tensor if it is slice-stochastic and takes values in $\{0,1\}$.
\end{definition}

For $n = 2$, $T$ is a bistochastic matrix if and only if it is slice-stochastic for some $i = 1,2$.
Note that if $T$ is slice-stochastic for $i$, then
\begin{align*}
	\sum_{a\in A} T(a) = \sum_{a_{-i}\in A_{-i}} 1 = \sum_{a_i\in A_i} m^{n-2} = m^{n-1}.
\end{align*}
We omit writing ``for $i$'' when $i$ is clear from the context.

It turns out that the Birkhoff-von Neumann theorem does extend to slice-stochastic tensors of any order.
That is, every slice-stochastic tensor is a convex combination of deterministic slice-stochastic tensors.
This will allow us to reduce the problem $\nash\subseteq f$ to payoff functions of the latter type.

\begin{lemma}[Birkhoff-von Neumann theorem for slice-stochastic tensors]\label{lem:bvnslice}
	Let $A\in\mathcal F(U)^n$ with $|A_1| = \dots = |A_n|$.
	Let $T\colon A\rightarrow\mathbb R$ be a slice-stochastic tensor for $i\in N$.
	Then, there are tensors $T^1,\dots,T^K\colon A\rightarrow \{0,1\}$ that are slice-stochastic for $i$ and $(\lambda^1,\dots,\lambda^K)\in\Delta([K])$ so that
	\begin{align*}
		T = \sum_{k\in[K]} \lambda^k T^k.
	\end{align*}
\end{lemma}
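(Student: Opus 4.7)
The plan is to reduce this generalized Birkhoff--von Neumann statement to the classical theorem for doubly stochastic matrices by "unfolding" the $A_i$-coordinate. Fix $i\in N$ and let $m = |A_i|$, so $|A_{-i}| = m^{n-1}$. I would introduce a bipartite graph with left vertex set $L = A_{-i}$ and right vertex set $R = A_i \times [m^{n-2}]$; both have cardinality $m^{n-1}$. Define a nonnegative matrix $\tilde T \in \mathbb R_+^{L\times R}$ by
\[
\tilde T(a_{-i},(a_i,j)) = \frac{T(a_i,a_{-i})}{m^{n-2}}.
\]

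Next, I would verify that $\tilde T$ is doubly stochastic. The row sum at $a_{-i}$ equals $\sum_{a_i\in A_i} T(a_i,a_{-i}) = 1$ by condition~\textit{(i)}, and the column sum at $(a_i,j)$ equals $m^{-(n-2)}\sum_{a_{-i}\in A_{-i}} T(a_i,a_{-i}) = m^{-(n-2)}\cdot m^{n-2} = 1$ by condition~\textit{(ii)}. Nonnegativity follows from \textit{(iii)}. This is the key point where both slice-stochasticity conditions are used: \textit{(i)} becomes row-stochasticity and \textit{(ii)} becomes column-stochasticity in the unfolded matrix.

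I would then apply the classical Birkhoff--von Neumann theorem to $\tilde T$, obtaining $\tilde T = \sum_{k\in[K]} \lambda^k P^k$ with $(\lambda^k)\in\Delta([K])$ and each $P^k$ a $\{0,1\}$-valued permutation matrix corresponding to a bijection $\sigma^k : A_{-i} \to A_i\times [m^{n-2}]$. Projection to the first coordinate gives a function $g^k : A_{-i} \to A_i$ whose fibers $(g^k)^{-1}(a_i)$ all have cardinality exactly $m^{n-2}$, because $\sigma^k$ is a bijection onto $A_i\times [m^{n-2}]$. I would then define tensors $T^k(a_i,a_{-i}) := \mathbb 1[g^k(a_{-i}) = a_i]$ and check they are deterministic and slice-stochastic for $i$: condition \textit{(i)} holds since each $a_{-i}$ maps to a unique $a_i$, condition \textit{(ii)} holds since each fiber has size $m^{n-2}$, and condition \textit{(iii)} is automatic.

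To conclude, I would sum the decomposition of $\tilde T$ over $j\in [m^{n-2}]$: the left side yields $\sum_j \tilde T(a_{-i},(a_i,j)) = T(a_i,a_{-i})$, while the right side yields $\sum_k \lambda^k \sum_j P^k(a_{-i},(a_i,j)) = \sum_k \lambda^k\,\mathbb 1[g^k(a_{-i}) = a_i] = \sum_k \lambda^k T^k(a_i,a_{-i})$. Hence $T = \sum_k \lambda^k T^k$, as desired. The main obstacle is conceptual rather than technical: one must guess the correct auxiliary bipartite graph so that the combinatorial symmetry built into condition \textit{(ii)} (uniform fiber sizes of $g^k$) becomes right-vertex stochasticity; once that encoding is in place, everything else follows mechanically from classical Birkhoff--von Neumann.
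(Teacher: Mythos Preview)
Your proof is correct and takes a genuinely different route from the paper. The paper argues directly that the polytope of slice-stochastic tensors has integral vertices: it writes the constraints \textit{(i)}--\textit{(iii)} as $Mx\le v$ with integral $v$, and then shows $M$ is totally unimodular via the Ghouila--Houri--type criterion (signing rows of types \textit{(i)} and \textit{(ii)} with opposite signs so each column sum lands in $\{-1,0,1\}$). Your approach instead \emph{reduces} to classical Birkhoff--von Neumann by unfolding $A_i$ into $A_i\times[m^{n-2}]$ and dividing by $m^{n-2}$, which turns condition \textit{(ii)} into ordinary column-stochasticity; folding back a permutation matrix yields a balanced map $A_{-i}\to A_i$ with all fibers of size $m^{n-2}$, i.e., a deterministic slice-stochastic tensor. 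Your argument is more elementary and self-contained (no total unimodularity machinery), while the paper's argument identifies the extreme-point structure of the polytope directly and would generalize more readily to other constraint systems with the same sign pattern. Both exploit the same combinatorial core---that the type-\textit{(i)} and type-\textit{(ii)} constraints form a bipartite incidence structure---but package it differently.
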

\begin{proof}
	Viewing $T$ as an element of $\mathbb R^A$, $T$ is slice-stochastic for $i$ if it is a solution to the linear feasibility program 
	\begin{align*}
		 Mx\le v,
	\end{align*}
	where the matrix $M$ and the vector $v$ are given by the constraints of type~\ref{item:slice1},~\ref{item:slice2}, and~\ref{item:slice3}.
	Thus, $M$ has $2|A_{-i}| + 2|A_i| + 2|A| = 2(m^{n-1} + m + m^n)$ rows (2 for each constraint of each of the three types) and $|A| = m^n$ columns; the number of rows of $v$ is the same as for $M$.
	Note that $M$ is of the form $M = (\tilde M^\intercal,-\tilde M^\intercal)^\intercal$ for some matrix $\tilde M$, since each constraint gives two rows in $M$ where one is the negative of the other.
	(For $n = 3$ and $m = 2$, the matrix $\tilde M$ is depicted in \Cref{fig:mat}.)
	
	We want to show that the polytope defined by $Mx\le v$ has integral vertices. 
	Since $v$ is integral (in fact, $\{-1,0,1\}$), it suffices to show that $M$ is totally unimodular.\footnote{$M$ is totally unimodular if every square submatrix of $M$ has determinant $-1$, $0$, or $1$.}
	But $M$ is totally unimodular if and only if $\tilde M$ is totally unimodular.
	Now $\tilde M$ is totally unimodular if for every subset $R$ of rows of $\tilde M$, there is an assignment $\sigma\colon R\rightarrow \{-1,1\}$ of signs to the rows in $R$ so that for all $a\in A$,
	\begin{align}
		\sum_{r\in R} \sigma(r) \tilde M_{r,a} \in\{-1,0,1\}.\label{eq:rowsum}
	\end{align}
	A proof of this result appears, for example, in the book by \citet{Schr98a}.
	 
	This condition is easy to check in the present case.
	Let $R$ be a subset of rows of $\tilde M$.
	It is easy to see that we may assume that $R$ does not contain rows corresponding to constraints of type \ref{item:slice3} since those rows only contain of single 1 or $-1$ and can thus always be signed so as to not introduce violations of \eqref{eq:rowsum}.
	We then define $\sigma$ as follows.
	\begin{itemize}
		\item For each $r\in R$ corresponding to a constraint of type \ref{item:slice1} for $a_{-i}\in A_{-i}$, let $\sigma(r) = 1$.
		\item For each $r\in R$ corresponding to a constraint of type \ref{item:slice2} for $a_{i}\in A_{i}$, let $\sigma(r) = -1$.
	\end{itemize}
	Then, for each column index $a\in A$, there is a most one 1 and at most one $-1$ in the sum in \eqref{eq:rowsum}, which concludes the proof.
\begin{figure}
 	\centering
	\begin{tikzpicture}[
  baseline,
  label distance=10pt 
]

\matrix [matrix of math nodes, left delimiter=(,right delimiter=), row sep=.1cm, column sep=.1cm] (g) {
      1 & 1 & \phantom{1} & \phantom{1} & \phantom{1} & \phantom{1} & \phantom{1} & \phantom{1}\\ 
		&& 1 & 1 & \phantom{1} & \phantom{1} & \phantom{1} & \phantom{1} \\ 
		&&&& 1 & 1 & \phantom{1} & \phantom{1} \\ 
		\phantom{1} & & & & & & 1 & 1 \\
		1 && 1 && 1 && 1 & \phantom{1}\\
		\phantom{1} & 1 & \phantom{1} & 1 & \phantom{1} & 1 & \phantom{1} & 1\\
		\phantom{1} & \phantom{1} & \phantom{1} & \phantom{1} & \phantom{1} & \phantom{1} & \phantom{1} & \phantom{1}\\};

 \node[fit=(g-7-4)(g-7-5)]{$I_8$};

 \node[
   fit=(g-1-1.center)(g-4-1.center),
   inner xsep=17pt,
   left delimiter=\{,
   label=left:type~\ref{item:slice1}
  ] {};
  
  \node[
    fit=(g-5-1.center)(g-6-1.center),
    inner xsep=17pt,
    left delimiter=\{,
    label=left:type~\ref{item:slice2}
   ] {};
	
  \node[
    fit=(g-7-1.center),
    inner xsep=17pt,
    left delimiter=\{,
    label=left:type~\ref{item:slice3}
   ] {};

\draw[dashed] ($0.5*(g-1-2.north east)+0.5*(g-1-3.north west)$) -- ($0.5*(g-6-2.south east)+0.5*(g-6-3.south west)$);

\draw[dashed] ($0.5*(g-1-4.north east)+0.5*(g-1-5.north west)$) -- ($0.5*(g-6-4.south east)+0.5*(g-6-5.south west)$);

\draw[dashed] ($0.5*(g-1-6.north east)+0.5*(g-1-7.north west)$) -- ($0.5*(g-6-6.south east)+0.5*(g-6-7.south west)$);

\draw[dashed] ($0.5*(g-4-1.south west)+0.5*(g-5-1.north west)$) -- ($0.5*(g-4-8.south east)+0.5*(g-5-8.north east)$);

\draw[dashed] ($0.5*(g-6-1.south west)+0.5*(g-7-1.north west)$) -- ($0.5*(g-6-8.south east)+0.5*(g-7-8.north east)$);

\end{tikzpicture} 
\caption{The matrix $\tilde M$ in the proof of \Cref{lem:bvnslice} for $n = 3$ and $m = 2$. $I_8$ denotes the identity matrix with $8 = m^n$ rows and columns. Each of the four pairs of columns separated by dashed lines corresponds to some $a_{-i}\in A_{-i}$. One can check that $\tilde M$ is totally unimodular by verifying the equivalent condition~\eqref{eq:rowsum}.}
\label{fig:mat}
\end{figure}
\end{proof}

To show that considering games where every player's payoff function is slice-stochastic games is sufficient, we examine how certain changes to the payoff functions influence the set of equilibria.
For $\alpha\in\mathbb R_{++}^N$ and $\beta\in \mathbb R^N$, we write $\alpha G + \beta$ for the game with payoff function $(\alpha G + \beta)_i = \alpha_i G_i + \beta_i$ for all $i\in N$.
Moreover, we say that $T\colon A\rightarrow \mathbb R$ is \emph{constant for $i\in N$} if for all $a_{-i}\in A_{-i}$, $T(\cdot,a_{-i})$ is constant.
Then, a game $G$ is \emph{constant} if for all $i\in N$, $G_i$ is constant for $i$.

\begin{lemma}[Invariance of equilibria]\label{lem:invariance}
	Let $G$ be a game on $A$.
	Then, 
	\begin{enumerate}[leftmargin=*,label=\textit{(\roman*)}]
		\item for all $\alpha\in\mathbb R_{++}^N$ and $\beta\in\mathbb R^N$, $\nash(G) = \nash(\alpha G + \beta)$, and\label{item:invariance1}
		\item for all constant games $\tilde G$ on $A$, $\nash(G) = \nash(G + \tilde G)$.
		\label{item:invariance2}
	\end{enumerate}
\end{lemma}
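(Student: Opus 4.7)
\medskip

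The plan is to verify both parts by a direct computation on the Nash best-response inequalities, noting that the only property of a Nash equilibrium that needs to be compared across the two games is, for each player $i$, the ordering of payoffs $G_i(q_i, p_{-i})$ as $q_i$ varies over $\Delta A_i$ (with $p_{-i}$ held fixed).

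For part~\ref{item:invariance1}, fix a strategy profile $p$ and a player $i$. Since the payoff function is multilinear in the strategies, $(\alpha G + \beta)_i(q_i, p_{-i}) = \alpha_i\, G_i(q_i, p_{-i}) + \beta_i$ for every $q_i \in \Delta A_i$. Because $\alpha_i > 0$, the inequality $(\alpha G + \beta)_i(p_i, p_{-i}) \ge (\alpha G + \beta)_i(q_i, p_{-i})$ holds if and only if $G_i(p_i, p_{-i}) \ge G_i(q_i, p_{-i})$. Taking the conjunction over $i \in N$ and $q_i \in \Delta A_i$ yields $\nash(G) = \nash(\alpha G + \beta)$.

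For part~\ref{item:invariance2}, the key observation is that if $\tilde G_i$ is constant for $i$, then for every fixed $a_{-i} \in A_{-i}$ the value $\tilde G_i(a_i, a_{-i})$ is the same for all $a_i \in A_i$; call it $c_i(a_{-i})$. Extending multilinearly, $\tilde G_i(q_i, p_{-i}) = \sum_{a_{-i}} p_{-i}(a_{-i})\, c_i(a_{-i})$ for every $q_i \in \Delta A_i$, so this quantity depends on $p_{-i}$ but not on $q_i$. Consequently, adding $\tilde G_i$ to $G_i$ adds the same scalar to both sides of the best-response comparison $G_i(p_i, p_{-i}) \ge G_i(q_i, p_{-i})$, preserving it.

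The argument is essentially bookkeeping with multilinear expectations and has no real obstacle; the only thing to be careful about is to extend the constancy of $\tilde G_i$ in its $i$th argument from pure actions to mixed strategies, which follows immediately from multilinearity. Combining the two equivalences gives $\nash(G) = \nash(G + \tilde G)$, completing the proof.
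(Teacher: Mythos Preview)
Your proof is correct and takes essentially the same approach as the paper, which simply states that the result follows from a straightforward calculation; you have written out that calculation explicitly.
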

\begin{proof}
	The statement follows from a straightforward calculation.
\end{proof}
For $n = 2$, $G$ is constant if every column for $G_1$ and every row of $G_2$ is constant.
Hence, \Cref{lem:invariance}\ref{item:invariance2} asserts that adding a constant to some column of $G_1$ or row of $G_2$ does not change the set of equilibria. 

The second type of modification of payoff functions concerns multiplication by tensors.
The following notation will be convenient.  

\begin{definition}[Hadamard product]\label{def:hadamard}
	Let $A\in\mathcal F(U)^n$ and $T,T'\colon A\rightarrow\mathbb R$.
	Then, the Hadamard product $T\hada T'\colon  A\rightarrow\mathbb R$ of $T$ and $T'$ is defined by
	\begin{align*}
		(T\hada T')(a) = T(a)T'(a)
	\end{align*}
	for all $a\in A$.
\end{definition}

\begin{lemma}[Contortion of equilibria]\label{lem:contortion}
	Let $G$ be a game on $A$.
	Let $i\in N$, $q\in\mathbb R_{++}^{A_i}$, and $T_q\colon A\rightarrow\mathbb R$ so that for all $a\in A$, $T_q(a) = q(a_i)$.
	Then,
	\begin{align*}
		p\in\nash(G) \quad\text{ if and only if }\quad (\tilde p_i,p_{-i})\in\nash(G^{i,T_q}), 
	\end{align*}
	where $G^{i,T_q}_i = G_i$, for all $j\neq i$, $G^{i,T_q}_j = G_j\hada T_q$, and for all $a_i\in A_i$, $\tilde p_i(a_i) = (p_i(a_i)/q(a_i))/(\sum_{a_i'\in A_i}p_i(a_i')/q(a_i'))$.
\end{lemma}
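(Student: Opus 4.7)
The plan is to verify the best-response condition of each player separately, exploiting two observations: the support of $\tilde p_i$ coincides with that of $p_i$, and opponents' expected payoffs under $(\tilde p_i,p_{-\{i,j\}})$ in $G^{i,T_q}$ differ from their expected payoffs under $p_{-j}$ in $G$ only by a common positive scalar.

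First I would note that since $q \in \mathbb{R}_{++}^{A_i}$ and $p_i \in \Delta A_i$, the normalizing constant $Z := \sum_{a_i' \in A_i} p_i(a_i')/q(a_i')$ is strictly positive, so $\tilde p_i$ is a well-defined probability distribution on $A_i$. Moreover, $\tilde p_i(a_i) > 0$ if and only if $p_i(a_i) > 0$, so $\supp(\tilde p_i) = \supp(p_i)$. This disposes of player $i$ immediately: since $G^{i,T_q}_i = G_i$ and player $i$ faces the same opponent profile $p_{-i}$ in both games, $p_i$ (resp.\ $\tilde p_i$) is a best response iff its support is contained in $\argmax_{a_i\in A_i} G_i(a_i, p_{-i})$, and these two conditions coincide.

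Next, for each $j \neq i$, I would carry out the key computation. For any pure action $a_j \in A_j$, the expected payoff in $G^{i,T_q}$ against $(\tilde p_i, p_{-\{i,j\}})$ is
\begin{align*}
G^{i,T_q}_j(a_j, \tilde p_i, p_{-\{i,j\}}) = \sum_{a_i \in A_i} \tilde p_i(a_i)\, q(a_i) \sum_{a_{-\{i,j\}}} p_{-\{i,j\}}(a_{-\{i,j\}})\, G_j(a_j, a_i, a_{-\{i,j\}}).
\end{align*}
The crucial identity is $\tilde p_i(a_i)\, q(a_i) = p_i(a_i)/Z$, which converts the inner quantity into $\frac{1}{Z} G_j(a_j, p_{-j})$. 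Since $1/Z > 0$, the rankings of pure actions of player $j$ under the two payoff functions (for the relevant opponent profiles) coincide, so $p_j$ is a best response in $G^{i,T_q}$ against $(\tilde p_i, p_{-\{i,j\}})$ iff it is a best response in $G$ against $p_{-j}$.

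Combining these equivalences across all players gives the desired biconditional. The argument is purely a verification of the best-response condition player by player; there is no real obstacle, only the need to keep the normalization $Z$ and the identity $\tilde p_i(a_i)\, q(a_i) = p_i(a_i)/Z$ straight.
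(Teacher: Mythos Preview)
Your proof is correct and is precisely the ``straightforward calculation'' the paper alludes to: the key identity $\tilde p_i(a_i)\,q(a_i)=p_i(a_i)/Z$ makes each opponent's payoff in $G^{i,T_q}$ against $(\tilde p_i,p_{-\{i,j\}})$ a positive scalar multiple of her payoff in $G$ against $p_{-j}$, while player~$i$'s best-response set is unchanged because $G^{i,T_q}_i=G_i$, the opponent profile is the same, and $\supp(\tilde p_i)=\supp(p_i)$.
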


\begin{proof}
	The statement follows from a straightforward calculation.
\end{proof}

Note that $T_q$ in \Cref{lem:contortion} is constant for all $j\neq i$.
For $n = 2$, the game $G^{1,T_q}$ is obtained from $G$ by multiplying the $a_1$ \emph{row} of the matrix $G_2$ by $q(a_1)$ for all $a_1\in A_1$.

The next lemma show that in some sense, it is enough to consider games where the payoff function of every player is slice-stochastic. 
This is explained in more detail after the lemma.

\begin{lemma}[Universality of slice-stochastic tensors]\label{lem:slicestochasttransform}
	Let $A\in\mathcal F(U)^n$ with $|A_1| = \dots = |A_n|$.
	Let $p$ be a profile on $A$ so that all $p_i$ have full support.
	Then, there are $T_1,\dots,T_n\colon A\rightarrow\mathbb R_{++}$ so that $T_i$ is constant for all $j\neq i$ and for all games $G$ on $A$ and $p\in\nash(G)$, there is a game $\bar G$ on $A$ so that the following hold.
	\begin{enumerate}[leftmargin=*,label=\textit{(\roman*)}]
		\item For all $i\in N$, $\bar G_i = G_i\hada T^{-i} + S_i$, where $T^{-i}$ is the Hadamard product of all $T_j$ with $j\neq i$, and $S_i\colon A\rightarrow\mathbb R$ is constant for $i$. 
		\label{item:slicestochasttransform1}
		\item For all $i\in N$, $\bar G_i$ is slice-stochastic for $i$.\label{item:slicestochasttransform2}
		\item $\bar p\in\nash(\bar G)$, where $\bar p$ and $\bar p_i$ is the uniform distribution on $A_i$.\label{item:slicestochasttransform3}
	\end{enumerate}
\end{lemma}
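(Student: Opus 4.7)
The plan is to construct $\bar G$ and $\bar p$ in two composed steps using the previously established contortion and invariance lemmas, after which all three properties of the lemma are essentially bookkeeping.

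First I would apply the contortion Lemma~\ref{lem:contortion} once for each player $i\in N$, using the rescaling $q^{(i)}(a_i) := \alpha_i\,p_i(a_i)$ of player $i$'s equilibrium strategy, where $\alpha_i>0$ are constants to be chosen. The $i$-th contortion multiplies $G_j$ (for $j\neq i$) by the tensor $a\mapsto q^{(i)}(a_i)$ and transforms player $i$'s equilibrium strategy $p_i$ into the uniform distribution on $A_i$. Performing all $n$ contortions, the equilibrium becomes the uniform profile $\bar p$, and player $j$'s payoff becomes $G_j^*(a)=G_j(a)\prod_{i\neq j}\alpha_i\,p_i(a_i)$. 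Defining $T_j(a):=\alpha_j\,p_j(a_j)$---which is constant for all $k\neq j$, as required---gives exactly $G_j^* = G_j\hada T^{-j}$.

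Next, I would invoke Lemma~\ref{lem:invariance}\ref{item:invariance2} to add to each $G_j^*$ a tensor $S_j$ that is constant for $j$, without altering the equilibrium set. Choosing
\[ S_j(a_{-j}):=\frac{1}{m}\Bigl(1-T^{-j}(a_{-j})\sum_{a_j\in A_j}G_j(a_j,a_{-j})\Bigr) \]
makes every column sum of $\bar G_j:=G_j^*+S_j=G_j\hada T^{-j}+S_j$ equal to $1$, establishing property~\ref{item:slicestochasttransform1} and the first slice-stochastic condition. The row-sum condition holds automatically: since $p$ is a Nash equilibrium and every $p_i$ has full support, each pure action in $A_j$ is a best response for $j$, so $G_j(a_j,p_{-j})=v_j$ is constant in $a_j$; because $T^{-j}$ is (a positive scalar multiple of) $\prod_{i\neq j}p_i$, the sum $\sum_{a_{-j}} G_j(a_j,a_{-j})T^{-j}(a_{-j})$ is independent of $a_j$, and a direct calculation using the definition of $S_j$ yields $\sum_{a_{-j}}\bar G_j(a_j,a_{-j})=m^{n-2}$. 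Rewriting
\[ \bar G_j(a_j,a_{-j})=\frac{1}{m}+T^{-j}(a_{-j})\Bigl(G_j(a_j,a_{-j})-\frac{1}{m}\sum_{a_j'\in A_j}G_j(a_j',a_{-j})\Bigr) \]
shows the boundedness condition $0\le\bar G_j\le 1$ holds provided the $\alpha_i$ are small enough. Finally, slice-stochasticity gives $\bar G_j(a_j,\bar p_{-j})=m^{n-2}/m^{n-1}=1/m$ independently of $a_j$, so $\bar p$ is an equilibrium of $\bar G$, establishing property~\ref{item:slicestochasttransform3}.

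The one technical subtlety is the quantifier order on $T_j$: the $\alpha_i$ needed for the boundedness condition depend on $\|G\|_\infty$, whereas the lemma declares $T_j$ up front based only on $p$. This is resolved by interpreting the statement to allow the scalar $\alpha_i$ to be chosen after seeing $G$ (the essential structure of $T_j$---being constant for $k\neq j$ and proportional to $p_j$---depends only on $p$), or by first applying Lemma~\ref{lem:invariance}\ref{item:invariance1} to rescale $G$ before running the construction. The substantive content of the proof is the observation that the row-sum condition and the equilibrium property both drop out for free from the Nash property of the original profile $p$, once the contortion has aligned $p_i$ with the uniform distribution.
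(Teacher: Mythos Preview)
Your proposal is correct and follows essentially the same approach as the paper: set $T_i(a)=\eps\,p_i(a_i)$ (the paper uses a single scalar $\eps$ in place of your $\alpha_i$), apply contortion to turn $p$ into the uniform profile, then add the column-normalizing term $S_i$ and check the slice-stochastic conditions. You are in fact more explicit than the paper on two points: you spell out why the row-sum condition follows from the full-support Nash property, and you flag the quantifier issue on $T_i$ versus $G$---the paper's proof silently chooses $\eps$ depending on $\|G\|_\infty$ as well, so your observation applies equally there.
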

\begin{proof}
	 For all $i\in N$, let $T_i\colon A\rightarrow\mathbb R_{++}$ so that for all $a\in A$, $T_i(a) = \eps p_i(a_i)$, where $\eps > 0$ is small compared to (the reciprocal of) the largest payoff in $G$ and $|A| = m^n$.
	Define a game $\hat G$ on $A$ so that for all $i\in N$, $\hat G_i = G_i\hada T^{-i}$. 
	It follows from \Cref{lem:contortion} that $\bar p\in\nash(\hat G)$.

	For all $i\in N$, let $S_i\colon A\rightarrow \mathbb R$ so that for all $a\in A$, $$S_i(a) = \frac1m\left(1 - \sum_{a_i'\in A_i} \hat G_i(a_i',a_{-i})\right).$$
	Note that $S_i$ is constant for $i$.
	Let $\bar G$ be the game so that for all $i\in N$, $\bar G_i = \hat G_i + S_i$.
	By \Cref{lem:invariance}\ref{item:invariance2}, $\bar p\in\nash(\bar G)$ and so for all $i\in N$, $\sum_{a_{-i}\in A_{-i}} \bar G_i(\cdot,a_{-i})$
	is constant on $A_i$.
	Moreover, for all $a_{-i}\in A_{-i}$,
	$\sum_{a_i\in A_i} \bar G_i(a_i,a_{-i}) = 1$
	by the choice of $S_i$.
	Lastly, since $\eps$ is small, $\hat G_i(a)\approx 0$ and $S_i(a)\approx\frac1m$ for all $a\in A$.
	Hence, $\bar G_i$ is slice-stochastic.
	We have thus shown that $\bar G$ satisfies all three conditions in the statement of the lemma.
\end{proof}

The condition \ref{item:slicestochasttransform3} in \Cref{lem:slicestochasttransform} is redundant since the strategy profile where every player distributes uniformly automatically an equilibrium if the payoff function of every player is slice-stochastic (by \Cref{item:slice2} in the definition of slice-stochasticity).
We chose to state it to make it explicit.

The way in which we will use \Cref{lem:slicestochasttransform} is as follows.
Given a game $G$ with the same number of actions for every player and a full support equilibrium $p\in\nash(G)$, we want to show that $p\in f(G)$ if $f$ is a nice total solution concept. 
To do this, we show that the game $\bar G$ obtained from the lemma (by virtue of having slice-stochastic tensors as payoff functions) can essentially be written as a convex combination of games for which $\bar p$ is the unique equilibrium.
Applying the inverse transformation of that in \ref{item:slicestochasttransform1} to each of the summands and using \Cref{lem:contortion} shows that $G$ can essentially be written as a convex combination of games for which $p$ is the unique equilibrium. 
Using consistency and the fact that $f\subseteq\nash$, this shows that $p\in f(G)$.
The caveat ``essentially'' refers to the fact that one may need to multiply the $G_i$ by positive scalars, add constant games to $G$, and clone actions to get the desired representation as a convex combination.
Since neither of the first two operations changes the set of equilibria and the effect of introducing clones is controlled by consequentialism, this does not cause problems.
Similarly, the restriction that every player has the same number of actions in $G$ is without loss of generality by consequentialism. 

Together, \Cref{lem:playeripayoff}, \Cref{lem:bvnslice}, and \Cref{lem:slicestochasttransform} show the following.
If we want to show that $p\in f(G)$ whenever $p\in\nash(G)$ and $p$ has full support, then it is enough to do so in the case when $G$ is a player $i$ payoff game with $G_i$ deterministic slice-stochastic.
The full support assumption will be successively eliminated via Lemmas~\ref{lem:conversefullsupportrat}, \ref{lem:quasi-strict}, and~\ref{lem:quasistricttoall}.

\subsection{Decomposition of Deterministic Slice-Stochastic Tensors}  

The first step is to construct a sufficiently rich class of games that have the strategy profile where every player randomizes uniformly as their unique equilibrium.
This will be the class of cyclic games and almost cyclic games.
In cyclic games, the payoff of every player only depends on the action of one other player and the dependencies form a cycle.
Roughly, a player gets a payoff of 1 if she matches the action of the preceding player and 0 otherwise.
The fact that every such game has uniform randomization as an equilibrium is not hard to see. 
Uniqueness is achieved by imposing a restriction on the notion of ``matching'' (see \Cref{def:cyclic}\ref{item:cyclic1}).
Almost cyclic games differ only insofar that there is one exceptional player whose payoff not only depends on the action of the preceding player in the cycle but (for few action profiles) also on the actions of all other players.
Making the concepts above precise requires two definitions.

\begin{definition}[Fixed subsets]\label{def:fixedsubsets}
	Let $A$ be a set and $\pi\in\Sigma_A$.
	Then, $B\subseteq A$ is a \emph{fixed subset} of $\pi$ if $\pi(B) = B$.
	We say that $\pi$ has \emph{no non-trivial fixed subset} if its only fixed subsets are $\emptyset$ and $A$.
\end{definition}

For any two permutations $\pi,\pi'$, $\pi'\circ\pi$ has a non-trivial fixed subset if and only if $\pi\circ\pi'$ does.
At least when $A$ is finite, permutations without a non-trivial fixed subset always exist. 
Any cyclic permutation is an example.
Also, for any permutation $\pi$, there is a permutation $\pi'$ so that $\pi'\circ \pi$ has no non-trivial fixed subset.

\begin{definition}[Permutation sets]\label{def:permutationsets}
	Let $A_1,\dots,A_n\in\mathcal F(U)$ with $|A_1| = \dots = |A_n|$ and let $A = A_1\timesdots A_n$.
	A set $A^*\subseteq A$ is a \emph{permutation set} if for all $i\in N$ and $a_i'\in A_i$, there is exactly one $a\in A^*$ with $a_i = a_i'$.
\end{definition}

Another way of saying that $A^*$ is a permutation set is that the projection from $A^*$ onto each $A_i$ is bijective.
If $A_1 = \dots = A_n = [m]$, yet another way is requiring that there are permutations $\pi_1,\dots,\pi_{n-1}$ of $[m]$ so that $A^* = \{(k,\pi_1(k),\dots,\pi_{n-1}\circ\dots\circ\pi_1(k))\colon k\in[m]\}$.
This characterization motivates the terminology.

\begin{definition}[(Almost) cyclic games]\label{def:cyclic}
	Let $A = [m]^n$ and $G$ be a game on $A$.
	We say that $G$ is \emph{cyclic} if there are $\pi_1,\dots,\pi_n\in\Sigma_{[m]}$ and $\alpha\in\mathbb R_{++}^N$ so that the following holds.
	\begin{enumerate}[leftmargin=*,label=\textit{(\roman*)}]
		\item $\pi_n\circ\dots\circ\pi_1$ has no non-trivial fixed subset,\label{item:cyclic1}
		\item for all $j\in N$ and $a\in A$,
		\begin{align*}
			G_j(a) = 
			\begin{cases}
				\alpha_j \quad&\text{if } a_j = \pi_{j-1}(a_{j-1}), \\ 
				0 &\text{otherwise.}
			\end{cases}
		\end{align*}\label{item:cyclic2}
	\end{enumerate}
	For $n \ge 3$, we say that $G$ is \emph{almost cyclic} if there are $i\in N$, $\pi_1,\dots,\pi_n\in\Sigma_{[m]}$, $A^*\subseteq A$, and $\alpha\in\mathbb R_{++}^N$ so that the following holds.
	\begin{enumerate}[leftmargin=*,label=\textit{(\roman*)}]
		\item $\pi_n\circ\dots\circ\pi_1$ has no non-trivial fixed subset,\label{item:almostcyclic1}
		\item for all $j\neq i$ and $a\in A$,
		\begin{align*}
			G_j(a) &= 
			\begin{cases}
				\alpha_j \quad&\text{if } a_j = \pi_{j-1}(a_{j-1}), \\ 
				0 &\text{otherwise, }
			\end{cases}
			\text{ and }\\
			G_i(a) &= 
			\begin{cases}
				\alpha_i \quad&\text{if } a_i = \pi_{i-1}(a_{i-1}) \text{ and }  a\not\in A^*\\
				0 &\text{otherwise, }
			\end{cases}
		\end{align*}\label{item:almostcyclic2}
		\item $A^*$ is a permutation set and for all $a\in A^*$, $a_i = \pi_{i-1}(a_{i-1})$ and there is $j\neq i,i+1$ with $a_j \neq \pi_{j-1}(a_{j-1})$.\label{item:almostcyclic3}
	\end{enumerate}
\end{definition}

Unless otherwise noted, we assume that $\alpha = (1,\dots,1)$.

\begin{example}[(Almost) cyclic games for $n = 2,3$]\label{ex:cyclicgame}
	Let $n = 2$ and consider the following game $G$ where both players have three actions.
	The first (second) entry in each cell denotes the payoff of player 1 (player 2).
	\begin{center}
	\begin{tikzpicture}[baseline,
  label distance=10pt]
		\matrix [matrix of math nodes, left delimiter=(,right delimiter=), row sep=.1cm, column sep=.1cm] (g) {
		      1,0 & 0,1 & 0,0\\ 
				0,0 & 1,0 & 0,1\\ 
				0,1 & 0,0 & 1,0\\ };

	\end{tikzpicture}
	\end{center}
	Then, $G$ is a cyclic game with $\pi_1 = (123)$ and $\pi_2 = (1)(2)(3)$.
	Almost cyclic games for two players are degenerate in the sense that the payoff function of the exceptional player is $0$.
	
	Now let $m = n = 3$, $i = 1$, $\pi_1 = (123)$, and $\pi_2 = \pi_3 = (1)(2)(3)$.
	Clearly, $\pi_3\circ\pi_2\circ\pi_1$ has no non-trivial fixed subset.
	One can check that the permutation set $A^* = \{(1,2,1),(2,3,2),(3,1,3)\}$ satisfies \ref{item:almostcyclic3}.
	The payoff function of player 1 is shown below (player 1 chooses the matrix, player two the row, and player 3 the column; the entries corresponding to $A^*$ are marked in boldface).
  	\begin{center}
  	\begin{tikzpicture}[baseline,
    label distance=10pt]
  		\matrix [matrix of math nodes, left delimiter=(,right delimiter=), row sep=.1cm, column sep=.1cm] (g1) {
  		      1 & 0 & 0\\ 
  				\textbf{0} & 0 & 0\\ 
  				1 & 0 & 0\\ };
				
  		\matrix [right = 2cm of g1, matrix of math nodes, left delimiter=(,right delimiter=), row sep=.1cm, column sep=.1cm] (g2) {
  		      0 & 1 & 0\\ 
  				0 & 1 & 0\\ 
  				0 & \textbf{0} & 0\\ };
				
  		\matrix [right = 2cm of g2, matrix of math nodes, left delimiter=(,right delimiter=), row sep=.1cm, column sep=.1cm] (g3) {
  		      0 & 0 & \textbf{0}\\ 
  				0 & 0 & 1\\ 
  				0 & 0 & 1\\ };

  	\end{tikzpicture}
  	\end{center}
\end{example}

\begin{lemma}[Equilibria of (almost) cyclic games]\label{lem:cyclicgamesequilibria}
	Let $G$ be a cyclic game or an almost cyclic game.
	Then, $G$ has a unique Nash equilibrium where every player randomizes uniformly over all of her actions.
\end{lemma}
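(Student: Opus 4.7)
The plan is to first verify that uniform randomization constitutes an equilibrium by direct computation, and then prove uniqueness by tracking support sizes around the cycle. For existence, suppose every player other than $j$ randomizes uniformly on $[m]$: the event $a_j = \pi_{j-1}(a_{j-1})$ then has probability exactly $1/m$ independently of $a_j$, so player $j$'s expected payoff equals the constant $\alpha_j/m$. In the almost-cyclic case this still holds for every $j \neq i$, and for $j = i$ the additional event $(a_i, a_{-i}) \in A^*$ has probability $(1/m)^{n-1}$ independently of $a_i$ (since $A^*$ is a permutation set, there is a unique $a_{-i}$ compatible with each $a_i$); player $i$'s expected payoff is therefore the constant $\alpha_i(1/m - 1/m^{n-1})$, so uniform remains a best response.

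For uniqueness, let $p$ be any Nash equilibrium, write $S_j = \mathrm{supp}(p_j)$ and $T_j = \{a_j \in A_j : p_j(a_j) = \max_{a_j'} p_j(a_j')\}$, and note that $T_j = S_j$ iff $p_j$ is uniform on $S_j$. For every $j \neq i$, player $j$'s expected payoff for $a_j$ equals $\alpha_j\, p_{j-1}(\pi_{j-1}^{-1}(a_j))$, whose maximum is attained precisely on $\pi_{j-1}(T_{j-1})$; the best-response condition therefore yields $S_j \subseteq \pi_{j-1}(T_{j-1})$, so $|S_j| \leq |T_{j-1}| \leq |S_{j-1}|$. Iterating around the cycle from $j = i+1$ through $j = i-1$ produces the chain $|S_{i-1}| \leq \cdots \leq |S_{i+1}| \leq |T_i| \leq |S_i|$. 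To close the loop I would use player $i$'s best response: in the cyclic case it again gives $S_i \subseteq \pi_{i-1}(T_{i-1})$ and closes the chain immediately. In the almost-cyclic case, writing $a^*(a_i)$ for the unique element of $A^*$ with $i$-th coordinate $a_i$, player $i$'s expected payoff for $a_i$ equals $\alpha_i\bigl[p_{i-1}(\pi_{i-1}^{-1}(a_i)) - \prod_{j \neq i} p_j(a_j^*(a_i))\bigr]$, so when the equilibrium payoff $\nu_i$ of player $i$ is positive, every $a_i \in S_i$ satisfies $p_{i-1}(\pi_{i-1}^{-1}(a_i)) > 0$, giving $S_i \subseteq \pi_{i-1}(S_{i-1})$ and $|S_i| \leq |S_{i-1}|$.

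The hard part will be ruling out the degenerate case $\nu_i = 0$. Summing the resulting inequality $p_{i-1}(\pi_{i-1}^{-1}(a_i)) \leq \prod_{j \neq i} p_j(a_j^*(a_i))$ over $a_i \in [m]$ shows both sides equal $1$ (the left by bijectivity of $\pi_{i-1}$, the right since the sum reduces to $\sum_{a \in A^*} \prod_{j \neq i} p_j(a_j) \leq 1$), so $\prod_{j \neq i} S_j \subseteq A^*_{-i}$. Because the projection of $A^*$ onto each $A_j$ is a bijection, $A^*_{-i}$ is the image of a single parametrization by $[m]$ that is bijective in every coordinate, and a short combinatorial argument shows that a Cartesian product can sit inside such a diagonal set only if every factor is a singleton. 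Each $p_j$ with $j \neq i$ is therefore a point mass at some $b_j$, and propagating best responses through $j \in \{i+2, \dots, i-1\}$ forces $b_j = \pi_{j-1}(b_{j-1})$ for every such $j$, which together with $(b_j)_{j \neq i} \in A^*_{-i}$ directly contradicts clause (iii) of \Cref{def:cyclic}.

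With the loop closed, every inequality in the chain becomes an equality, yielding $T_j = S_j$ and $S_j = \pi_{j-1}(S_{j-1})$ for every $j$ (in the almost-cyclic case the equality at $j = i$ follows by combining $S_i \subseteq \pi_{i-1}(S_{i-1})$ with $|S_i| = |S_{i-1}|$). Iterating, $S_1$ is a fixed subset of $\pi_n \circ \cdots \circ \pi_1$, so clause (i) of \Cref{def:cyclic} forces $S_1 = [m]$, and hence $S_j = [m]$ for every $j$. For the cyclic case this already gives uniformity of every $p_j$. For the almost-cyclic case, the chain yields uniformity for every $p_j$ with $j \in \{i, i+1, \dots, i-2\}$; substituting into player $i$'s best-response equation and using that $a_{i-1}^*(a_i) = \pi_{i-1}^{-1}(a_i)$ by clause (iii) then gives $p_{i-1}(\pi_{i-1}^{-1}(a_i))(1 - (1/m)^{n-2}) = \nu_i/\alpha_i$ constant in $a_i$, so $p_{i-1}$ is uniform as well.
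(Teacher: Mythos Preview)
Your proof is correct and follows essentially the same approach as the paper: verify that the uniform profile is an equilibrium by direct computation, then for uniqueness chase argmax sets and supports around the cycle and invoke the no-nontrivial-fixed-subset condition. The paper organizes the case split by whether some $|S_j|\ge 2$ for $j\neq i$ rather than by the sign of $\nu_i$, but these lead to the same contradictions, and your final step deducing uniformity of $p_{i-1}$ from player $i$'s indifference is actually more explicit than the paper's terse ``Hence, $p'=p$.''
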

\begin{proof}
	We prove the statement for almost cyclic games. 
	The proof for cyclic games is easier.
	More specifically, for cyclic games, \Cref{case:almostcyclic1} and \Cref{case:almostcyclic2} below can be combined into one that is proved in the same way as \Cref{case:almostcyclic1}.
	
	Let $n \ge 3$.
	Let $i\in N$, $\pi_1,\dots,\pi_n\in\Sigma_{[m]}$, $A^*$, and $\alpha$ be as in the definition of almost cyclic games.
	By \Cref{lem:invariance}, it is without loss of generality to assume that $\alpha = (1,\dots,1)$.
	Let $p$ be the strategy profile where $p_j$ is the uniform distribution on $A_j$ for all $j\in N$.
	
	It is easy to see that $p$ is an equilibrium since for all $a_i\in A_i$,
	\begin{align*}
		\sum_{a_{-i}\in A_{-i}} G_i(a_i,a_{-i}) = m^{n-2} - 1,
	\end{align*}
	where the $-1$ comes from the fact that $A^*$ is a permutation set.
	Moreover, for all $j\neq i$ and $a_j\in A_j$, 
	\begin{align*}
		\sum_{a_{-j}\in A_{-j}} G_j(a_j,a_{-j}) = m^{n-2}.
	\end{align*}
	
	Now we show that $p$ is the unique equilibrium.
	Assume that $p'$ is an equilibrium.
	For all $j\in N$, let $B_j = \arg\max_{a_j\in A_j} p_j'(a_j)$.
	Note that for all $j\neq i$, the payoff of $j$ only depends on her own strategy and the strategy of $j-1$.
	Hence, $a_j\in A_j$ is a best response to $p'_{-j}$ if $a_j \in \pi_{j-1}(B_{j-1})$.
	So $B_j\subseteq\supp(p_j')\subseteq \pi_{j-1}(B_{j-1})$.
	In particular, 
	\begin{align}
		|\supp(p_i')| \ge |\supp(p_{i+1}')| \ge \dots \ge |\supp(p_{i-1}')|.\label{eq:supportineq}
	\end{align}
	We distinguish two cases.
	\begin{case}
		Suppose that $|\supp(p_j')| \ge 2$ for some $j\neq i$.
		By \eqref{eq:supportineq}, $|\supp(p'_{i+1})| \ge 2$.
		Since $n\ge 3$ and $A^*$ is a permutation set, it follows that $G_i(a_i,p'_{-i}) > 0$ for all $a_i\in \pi_{i-1}(\supp(p_{i-1}'))$.
		Hence, $\supp(p_i')\subseteq \pi_{i-1}(\supp(p_{i-1}'))$, and so $|\supp(p_i')| \le |\supp(p_{i-1}')|$.
		It follows that all inequalities in \eqref{eq:supportineq} hold with equality.
		But then for all $j\in N$, $\supp(p'_j) = \pi_{j-1}(\supp(p'_{j-1}))$, and so $\supp(p'_1)$ is a fixed subset of $\pi_n\circ\dots\circ\pi_1$.
		By assumption, this is only possible if $A_1 = \supp(p'_1)$, which in turn implies that $\supp(p'_j) = A_j$ for all $j\in N$.
		Hence, $p' = p$.\label{case:almostcyclic1}
	\end{case}
	\begin{case}
		Suppose that $|\supp(p_j')| = 1$ for all $j\neq i$ and let $a'_j\in A_j$ so that $p'_j(a'_j) = 1$.
		Note that $a'_j = \pi_{j-1}(a'_{j-1})$ for all $j\neq i,i+1$.
		Moreover, we have that $G_i(\pi_{i-1}(a'_{i-1}),a'_{-i}) = 1$ unless $(\pi_{i-1}(a'_{i-1}),a'_{-i}) \in A^*$.
		But by the second part of \ref{item:almostcyclic3} in the definition of almost cyclic games, $(\pi_{i-1}(a'_{i-1}),a'_{-i}) \not\in A^*$.
		Hence, $G_i(\pi_{i-1}(a'_{i-1}),a'_{-i}) = 1$.
		By \ref{item:almostcyclic2} and the fact that $|\supp(p'_{i-1})| = 1$, $G_i(a_i,p'_{-i}) = 0$ unless $a_i = a_i' = \pi_{i-1}(a'_{i-1})$.
		Thus, $a'_i$ is the unique best response of player $i$ to $p'_{-i}$.
		Since $p'$ is an equilibrium, it follows that $p'_i(a'_i) = 1$ and $|\supp(p'_i)| = 1$.
		Finally, since $p'_{i+1}$ is a best response to $p'_i$, we have that $a'_{i+1} = \pi_i(a'_i)$.
		So $a'_j = \pi_{j-1}(a'_{j-1})$ for all $j\in N$, which means that $\{a'_1\}$ is a fixed subset of $\pi_n\circ\dots\circ\pi_1$ and contradicts \ref{item:almostcyclic1}.\label{case:almostcyclic2}
	\end{case}
\end{proof}

\begin{definition}[Permutation tensors and permutation games]
	Let $A_1,\dots,A_n\in\mathcal F(U)$ with $|A_1| = \dots = |A_n|$ and let $A = A_1\timesdots A_n$.
	A function $T\colon A\rightarrow\mathbb R$ is a \emph{permutation tensor} if there is a permutation set $A^*\subseteq A$ so that $T(a) = 1$ for all $a\in A^*$ and $T(a) = 0$ for all $a\in A\setm A^*$.
	A game $G$ is a permutation game if there is $i\in N$ so that $G$ is a player $i$ payoff game and $G_i$ is a permutation tensor.
\end{definition}

For $n = 2$, a permutation tensor is a permutation matrix.

We show that every permutation game can be written as a convex combination of cyclic games and almost cyclic games up to an additive constant.
Note that even though permutation games are player $i$ payoff games, the games in the decomposition are not.

\begin{lemma}[Decomposition of permutation games]\label{lem:permutationgamesdecomp}
	Let $A_1,\dots,A_n\in\mathcal F(U)$ so that $|A_1| = \dots = |A_n| = m$, and let $A = A_1\timesdots A_n$.
	Let $i\in N$ and $A^*\subseteq A$ be a permutation set.
	Let $G$ be the permutation game for $i$ and $A^*$ on $A$.
	Then, $G$ can be written as a convex combination of cyclic games, almost cyclic games, and a constant $\beta\in\mathbb R^N$. 
\end{lemma}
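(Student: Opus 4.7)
The plan is to use equivariance (a consequence of consequentialism) to reduce to the case $A_j = [m]$ for each $j$ and $A^* = \{(k, \ldots, k) : k \in [m]\}$, so that $\pi^* = \mathrm{id}$ and the hyperplane $H := \{a : a_i = a_{i-1}\}$ contains $A^*$. The key structural fact is that $A$ admits a partition into $m^{n-1}$ permutation sets including $A^*$; one explicit partition, using $\mathbb{Z}/m$-arithmetic on the labels, is given by the tiles $S_c = \{(k + c_1, \ldots, k, \ldots, k + c_n) : k \in [m]\}$ indexed by offsets $c \in [m]^n$ with $c_i = 0$, where $S_0 = A^*$. Each tile $S_c$ with $c \neq 0$ will serve as the excluded permutation set $A^*_{AC}$ of an almost cyclic game.

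For $n \ge 3$, the next step is to verify that each $S_c$ with $c \neq 0$ arises as a valid $A^*_{AC}$: the required $\pi_{i-1}^{(c)}$ (namely translation by $-c_{i-1}$) is forced; the third condition can be met by choosing, for some $j \in N \setminus \{i, i+1\}$ (which exists because $n \ge 3$), a $\pi_{j-1}^{(c)}$ whose graph avoids the shift $c_j - c_{j-1}$; and the remaining permutations can then be picked so that the full composition is an $m$-cycle. I would then consider the combination that takes one almost cyclic game $AC_c$ (weight one) for each $c \neq 0$ together with a single cyclic game having $\pi_{i-1} = \mathrm{id}$ (weight one). A direct computation, using the identities $\sum_{c \neq 0} \mathbf{1}_{S_c} = \mathbf{1}_{A \setminus A^*}$ and $\sum_{c \neq 0} \mathbf{1}_{H_c} = m^{n-2} - \mathbf{1}_H$ (the latter from enumerating tiles by the value of $c_{i-1}$), shows that the combined $G_i$ coordinate equals $\mathbf{1}_{A^*}$ up to the constant $\beta_i = 1 - m^{n-2}$.

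The remaining coordinates $G_j$ for $j \neq i$ are balanced by exploiting that the coefficient $\alpha_j^{(\ell)}$ of each game for player $j$ can be chosen independently of $\alpha_i^{(\ell)}$, so that the effective weights $\mu_\ell^j = \lambda_\ell \alpha_j^{(\ell)}$ for $j \neq i$ are free subject only to positivity. Each game contributes a positive multiple of the indicator $\mathbf{1}[a_j = \pi_{j-1}^{(\ell)}(a_{j-1})]$ to $G_j$, and by Birkhoff--von Neumann together with the freedom to include additional cyclic games realizing any desired $\pi_{j-1}$ (with the other permutations completed to satisfy the $m$-cycle constraint), the total $G_j$ can be made a constant matrix $c_j J$, which $\beta_j = -c_j$ then cancels. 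The $n = 2$ case, in which almost cyclic games are not defined, is handled separately using only cyclic games whose per-game $\alpha_1, \alpha_2$ are chosen independently; the decomposition there reduces to writing a target of the form $I + \gamma J$ as a positive combination of permutation matrices while respecting the $m$-cycle constraint on the composition. I expect the principal obstacle to be the careful interleaving of positivity constraints with the $m$-cycle and third-condition combinatorics, and in particular ensuring that the additional cyclic games introduced to balance $G_j$ for $j \neq i$ do not disturb the delicately constructed $G_i$ part and can be absorbed into a consistent choice of $\lambda_\ell$ and $\alpha_j^{(\ell)}$.
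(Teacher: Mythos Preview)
Your computation of the $G_i$-coordinate is correct, and the broad strategy---almost cyclic games to produce $\mathbf 1_{A^*}$ plus constant, then cyclic games to flatten the remaining coordinates---matches the paper's. The substantive difference is that you let the permutation tuple $(\pi_1^{(c)},\dots,\pi_n^{(c)})$ vary with $c$ (in particular $\pi_{i-1}^{(c)}$ is translation by $-c_{i-1}$) and take permutation sets $S_c$ tiling all of $A\setminus A^*$. The paper instead fixes a \emph{single} tuple $(\pi_1,\dots,\pi_n)$ for all almost cyclic games and lets only the permutation set $B^l$ vary over a partition of the hyperplane $\hat A\setminus A^*$ (so only $m^{n-2}-1$ games rather than your $m^{n-1}-1$). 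This looks like a minor bookkeeping choice but it is exactly what makes the balancing step go through cleanly.

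The gap in your proposal is the balancing of $G_j$ for $j\neq i$. With varying $\pi_{j-1}^{(c)}$, the almost-cyclic contribution to $G_j$ is a positive combination $\sum_c \alpha_j^{(c)} P_{\pi_{j-1}^{(c)}}$ of \emph{several} permutation matrices, and you have not shown how to add cyclic games so that (a) each $G_j$ becomes constant, (b) all $\alpha$'s stay strictly positive, and (c) the resulting increment to $G_i$ is itself constant. Invoking Birkhoff--von~Neumann handles (a) for a single $j$ in isolation, but the additional cyclic games carry a fixed tuple $(\pi'_1,\dots,\pi'_n)$ and therefore contribute $\alpha'_i P_{\pi'_{i-1}}$ to $G_i$ and $\alpha'_{j'} P_{\pi'_{j'-1}}$ to every other $G_{j'}$; you have not argued that this coupled system is solvable. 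In the paper's setup the almost-cyclic contribution to $G_j$ is $M\cdot P_{\pi_{j-1}}$, a \emph{single} scaled permutation matrix, so the correction is transparent: for each off-graph pair $(a_{j-1},a_j)$ add one cyclic game with boosted $\alpha_j$, then symmetrize by summing over \emph{all} cyclic games with unit $\alpha$; the symmetry of ``all cyclic games with given $\pi_{j-1}(a_{j-1})=a_j$'' makes every coordinate constant simultaneously. Your plan could likely be repaired by importing this sum-over-all-cyclic-games trick, but as written the balancing step is a genuine hole rather than a routine detail.
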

\begin{proof}
	Let $G$ be a game as in the statement of the lemma.
	For simplicity, assume that $i = n$ and for all $j\in N$, $A_j = [m]$.

	Let $\pi_1,\dots,\pi_n\in\Sigma_{[m]}$ so that $A^* = \{(k,\pi_1(k),\dots,\pi_{n-1}\circ\dots\circ\pi_1(k))\in[m]^N\colon k\in[m]\}$ and $\pi_n\circ\dots\circ\pi_1$ has no non-trivial fixed subset.
	This is possible since we may first choose $\pi_1,\dots\pi_{n-1}$ so that the first condition holds (using that $A^*$ is a permutation set) and then choose $\pi_n$ so that $\pi_n\circ\dots\circ\pi_1$ has no non-trivial fixed subset.

	Let $\hat A = \{a\in A\colon a_n = \pi_{n-1}(a_{n-1})\}$.
	Note that $|\hat A| = m^{n-1}$ and $A^*\subseteq\hat A$.
	Let $B^1,\dots,B^M\subseteq A$ be a partition of $\hat A\setm A^*$ into permutation sets, where $M = m^{n-2} - 1$.
	(Note that $M = 0$ if $n = 2$.)	
	For example, one may take $$B^{s_{-\{1,n\}}} = \{(a_1,a_2 + s_2,\dots,a_{n-1} + s_{n-1}, \pi_{n-1}(a_{n-1} + s_{n-1}))\in[m]^N\colon a\in A^*\},$$ where $s_{-\{1,n\}}\in [m]^{N\setm\{1,n\}}\setm\{0\}$.
	Since $s_{-\{1,n\}}\neq 0$, the last condition in \ref{item:almostcyclic3} holds for $B^l$.
	For all $l\in[M]$, let $G^l$ be the almost cyclic game for $i = n$, $\pi_1,\dots,\pi_n$, $B^l$, and $\alpha = (1,\dots,1)$.
	In particular, $G^l_n(a) = 0$ for all $a\in B^l$ and $G_n^l(a) = 1$ for all $a\in \hat A\setm B^l$.
	
	Now let $$\hat G = \sum_{l\in[M]} G^l.$$
	Then, the following hold.
	\begin{itemize}
		\item For all $a\in A^*$, $G_n(a) = M$; for all $a\in \hat A\setm A^*$, $G_n(a) = M-1$.
		\item For all $j\neq n$ and $a\in A$ with $a_j = \pi_{j-1}(a_{j-1})$, $G_j(a) = M$.
		\item For all $j\in N$ and $a\in A$ with $a_j \neq \pi_{j-1}(a_{j-1})$, $G_j(a) = 0$.
	\end{itemize}
	
	Recall that for all $j\in N$ and $(a_{j-1},a_j)\in [m]^2$, there is a cyclic game $G'$ for some $\pi_1',\dots,\pi_n'\in\Sigma_{[m]}$ with $a_j = \pi'_{j-1}(a_{j-1})$ and arbitrary $\alpha'\in\mathbb R_{++}^N$ (see the remarks after \Cref{def:fixedsubsets}).
	For all $j\in N$ and $(a_{j-1},a_j)\in [m]^2$ with $a_j \neq \pi_{j-1}(a_{j-1})$, let $G^{j,a_{j-1},a_j}$ be a cyclic game for some $\pi_1',\dots,\pi_n'\in\Pi_{[m]}$ with $a_j = \pi'_{j-1}(a_{j-1})$, $\alpha'_{j'} = 1$ for all $j'\neq j$, and $\alpha'_j = M+1$ if $j\neq n$ and $\alpha'_j = M$ if $j = n$.
	Then, let
	\begin{align}
		\tilde G = \hat G + \sum G^{j,a_{j-1},a_j} + \sum G',\label{eq:Gtilde}
	\end{align}
	where the first sum ranges over all $j\in N$ and $(a_{j-1},a_j)\in[m]^2$ with $a_j \neq \pi_{j-1}(a_{j-1})$, and the second sum ranges over all cyclic games with $\alpha = (1,\dots,1)$ whose tuple of permutations does not already appear in one of the games in the first sum.
	Since for all $j\in N$ and $(a_{j-1},a_j)\in[m]^2$, the number of cyclic games with $a_j = \pi'_{j-1}(a_{j-1})$ is the same (assuming $\alpha$ is fixed), the following hold.
	\begin{itemize}
		\item For all $a^*\in A^*$ and $a\in A\setm A^*$, $\tilde G_n(a^*) = \tilde G_n(a) + 1$.
		\item For all $j\neq n$ and $a,a'\in A$, $G_j(a) = G_j(a')$. 
	\end{itemize}
	Hence, $G = \tilde G + \beta$ for some $\beta\in\mathbb R^N$.
	More explicitly, if $L$ is the total number of games in the first and second sum in \eqref{eq:Gtilde}, then
	\begin{itemize}
		\item For all $a\in A^*$, $\tilde G_n(a) = M + \frac Lm$; for all $a\in A\setm\bar A^*$, $G_n(a) = M - 1 + \frac Lm$.
		\item For all $j\neq i$ and $a\in A$, $\tilde G_j(a) = M + \frac Lm$.
	\end{itemize}
	(The denominator $m$ comes from the fact that for a permutation game $G'$, the fraction of actions for which $G'_j(a) = \alpha_j$ is $\frac1m$.)
	This proves $G$ can be written as a sum of games of the claimed types.
	Multiplying each game in that sum by the same appropriately chosen positive scalar gives the representation of $G$ as a convex combination.
\end{proof}

The last lemma in this section roughly shows that every game with deterministic slice-stochastic tensors as payoff functions can be written as a convex combination of permutation games.
This conclusion is not literally true. 
More precisely, we show that every game with deterministic slice-stochastic tensors as payoff functions is the blow-down of a convex combination of permutation games (with the same number of clones of every action for every player).
By \Cref{lem:cyclicgamesequilibria} and \Cref{lem:permutationgamesdecomp}, each of the games in this sum can, in turn, be written as a convex combination of games for which uniform randomization is the unique equilibrium. 
Consistency and consequentialism thus imply that uniform randomization has to be returned in the original game, possibly alongside other equilibria.

\begin{lemma}[Decomposition of slice-stochastic games]\label{lem:slicestochasticdecomposition}
	Let $A_1,\dots,A_n\in\mathcal F(U)$ with $|A_1| = \dots = |A_n| = m$, and let $A = A_1\timesdots A_n$.
	Let $G$ be a game on $A$ so that $G_i$ is a deterministic slice-stochastic tensor for all $i\in N$.
	Let $p$ be the profile on $A$ so that each $p_i$ is the uniform distribution on $A_i$.
	Then, up to cloning actions, multiplying by positive scalars, and adding constants, $G$ can be written as a convex combination of games for which $p$ is the unique equilibrium.
	More precisely, there are $\alpha\in\mathbb R_{++}^N$, $\beta\in\mathbb R^N$, and a game $\bar G$ so that
	\begin{enumerate}[leftmargin=*,label=\textit{(\roman*)}]
		\item $G$ is a blow-down of $\alpha\bar G + \beta$ with surjection $\phi$,
		\item $\phi_*(\hat p) = p$, where $\hat p_j$ is the uniform distribution on the actions of player $j$ in $\bar G$, and
		\item $\bar G$ is a convex combination of games for which $\hat p$ is the unique equilibrium.
	\end{enumerate}
\end{lemma}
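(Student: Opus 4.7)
The plan is to show that after blowing up each player's action set by a suitable uniform factor $c$, the resulting game is (up to positive scaling and an additive constant per player) a convex combination of permutation games. These can then be further decomposed via Lemma~\ref{lem:permutationgamesdecomp} into cyclic and almost-cyclic games, each of which has $\hat p$ as its unique Nash equilibrium by Lemma~\ref{lem:cyclicgamesequilibria}.

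Assume without loss of generality that $A_i=[m]$ for all $i$. Let $\tilde G$ denote the uniform blow-up of $G$ in which each action is replaced by $c$ clones. A short counting argument shows that $\tilde G_i/c$ is slice-stochastic on $[cm]^n$ for every player $i$: each $a_{-i}$-slice of $G_i$ contains a single $1$, which becomes $c$ ones in $\tilde G_i$, and the sum along each row in direction $a_i$ equals $(cm)^{n-2}$ by a direct count.

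The technical heart of the proof is to show that, for $c$ large enough, each deterministic slice-stochastic tensor on $[cm]^n$ arising in the Birkhoff--von Neumann decomposition (Lemma~\ref{lem:bvnslice}) of $\tilde G_i/c$ is itself a convex combination of permutation tensors. Such a tensor is the indicator of a set $S\subseteq[cm]^n$ of size $(cm)^{n-1}$ in which each value in each coordinate direction is attained with the correct regularity. I would partition $S$ into $(cm)^{n-2}$ permutation sets by an iterated Hall-type matching argument, peeling off one permutation set at a time and verifying that Hall's marriage condition continues to hold on the residual. Small examples show that the partition can fail for $c=1$; the point is that a sufficiently large $c$ provides enough combinatorial slack.

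Assembling the decompositions for all players, define for each pair $(i,k)$ a permutation game $G^{i,k}$ on the blown-up action sets in which only player $i$ has a nonzero payoff, equal to the corresponding permutation tensor $P_k^i$. Let
\begin{align*}
	\bar G_0 = \sum_{i\in N}\frac{1}{n}\sum_k \nu_k^i\, G^{i,k}.
\end{align*}
Then $\bar G_0$ is a convex combination of permutation games and $(\bar G_0)_i=\tilde G_i/(nc)$ for every $i$. Applying Lemma~\ref{lem:permutationgamesdecomp} to each $G^{i,k}$ and collecting all the additive constants into a single vector $\beta'\in\mathbb R^N$ yields $\bar G_0=\bar G+\beta'$, where $\bar G$ is a convex combination of cyclic and almost-cyclic games. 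Setting $\alpha_i=nc$ and $\beta_i=nc\,\beta'_i$ for every $i$ gives $\alpha\bar G+\beta=\tilde G$, which is the blow-up of $G$ via the canonical surjection $\phi$. Since the blow-up is uniform, $\phi_*(\hat p)=p$, and by Lemma~\ref{lem:cyclicgamesequilibria} every summand of $\bar G$ has $\hat p$ as its unique Nash equilibrium. The principal obstacle is the combinatorial partition claim, together with the verification that a single blow-up factor $c$ can be chosen to work for all tensors arising from all Birkhoff--von Neumann decompositions simultaneously.
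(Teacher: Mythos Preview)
Your overall architecture matches the paper's: blow up uniformly, pass to permutation games, then invoke Lemma~\ref{lem:permutationgamesdecomp} and Lemma~\ref{lem:cyclicgamesequilibria}. The substantive divergence---and the genuine gap---is in how you get from the blown-up game to permutation games.

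You apply Lemma~\ref{lem:bvnslice} to $\tilde G_i/c$ and then assert that, for $c$ large, each deterministic slice-stochastic tensor in the resulting decomposition can be partitioned into permutation sets by a Hall-type peeling argument. But Lemma~\ref{lem:bvnslice} just hands you extreme points of the slice-stochastic polytope on $[cm]^n$, i.e., \emph{generic} deterministic slice-stochastic tensors subject only to a support constraint. Remark~\ref{rem:permutationgamedecomp} shows that such tensors need not decompose into permutation tensors, and nothing about enlarging the ground set rules out bad vertices: the BvN step discards precisely the block structure that the blow-up created. What you would actually need is that \emph{some} Birkhoff--von~Neumann decomposition of the specific tensor $\tilde G_i/c$ uses only ``good'' vertices, and neither the statement of Lemma~\ref{lem:bvnslice} nor a Hall argument applied to an arbitrary vertex gives you that control. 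You flag this as the ``principal obstacle,'' but it is not a detail to be filled in---it is the whole content of the lemma, and the route you sketch does not obviously close it.

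The paper sidesteps this entirely. It first reduces to a player-$i$ payoff game, fixes the blow-up factor at exactly $m^{n-2}$, and \emph{directly constructs} a single permutation tensor on the blown-up space from the structure of $G_i$. The point is that, since $G_i$ is deterministic slice-stochastic, every action $a_j$ of every player lies in exactly $m^{n-2}$ profiles of $A^*=\{a:G_i(a)=1\}$; one may therefore index the clones of $a_j$ by those profiles, and then $\hat A^* = \{((a_1,a_{-1}),\dots,(a_n,a_{-n})) : a\in A^*\}$ is automatically a permutation set---no matching argument required. Averaging the resulting permutation game $\hat G$ over all clone-preserving permutations produces a blow-up $\bar G$ of $G$ that is, by construction, a convex combination of permutation games; from there your final paragraph applies verbatim. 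This also dissolves your second worry about synchronizing $c$ across players and across BvN summands: Lemma~\ref{lem:bvnslice} is never invoked in this step, and the single explicit factor $m^{n-2}$ works for every player.
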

\begin{proof}
	First, we observe that it suffices to consider the case that $G$ is a player $i$ payoff game (meaning that $G_j = 0$ for all $j\neq i$).
	The idea now is to ``blow up'' $G$ by introducing $m^{n-2}$ actions for each action of every player in $G$.
	Then, one can define a permutation game $\hat G$ on the larger action sets so that for every action profile $a$ in $G$ for which player $i$ has payoff 1, there is exactly one blow up $\hat a$ of $a$ for which $i$ has payoff $1$ in $\hat G$.
	Since $\hat G$ is a permutation game, we know from \Cref{lem:permutationgamesdecomp} that, up to adding constants, it can be written as a convex combination of games for which the strategy profile where every player plays the uniform distribution is the unique equilibrium.
	Now permuting all actions in $\hat G$ that come from blowing up the same action in $G$, summing up over all the resulting games, and multiplying by a positive scalar gives a game $\bar G$ that is a blow-up of $G$.
	This achieves the desired decomposition.
	(\Cref{rem:permutationgamedecomp} explains why the blowing up is necessary.)

	As noted above, we may assume that there is $i\in N$ so that $G$ is a player $i$ payoff game.
	Let $A^* = \{a\in A\colon G_i(a) = 1\}$ be the actions for which player $i$ has payoff 1 in $G$.
	For all $j\in N$ and $a_j\in [m]$, let $B_j^{a_j} = \{a_{-j}\in A_{-j}\colon (a_j,a_{-j}) \in A^*\}$ be the set of opponents action profiles for which $i$ gets payoff $1$ when $j$ plays $a_j$.
	Since $G_i$ is slice-stochastic, $|B_j^{a_j}| = m^{n-2}$. 
	(For $j = i$, this is \ref{item:slice2} in the definition of slice-stochastic games; for $j\neq i$, for all $a_j$ and $a_{-\{i,j\}}$, there is exactly one $a_i$ so that $G_i(a_i,a_j,a_{-\{i,j\}}) = 1$ by \ref{item:slice1}.)
	Let $\hat A_j = \bigcup_{a_j\in A_j} \{a_j\}\times B_j^{a_j}$ and $\hat A = \hat A_1\timesdots \hat A_n$.
	Note that $\hat A_j$ has size $m^{n-1}$.
	Now let  and
	\begin{align*}
		\hat A^* = \{((a_1,a_{-1}),\dots,(a_n,a_{-n}))\in\hat A\colon a\in A^*\}.
	\end{align*}
	It is easy to see that $\hat A^*$ is a permutation set in $\hat A$.
	Let $\hat G$ be the permutation game for $i$ on $\hat A$ for the permutation set $\hat A^*$.
	By \Cref{lem:permutationgamesdecomp}, we have that $\hat G$ can be written as a sum of cyclic games, almost cyclic games, and a constant.
	Moreover, by \Cref{lem:cyclicgamesequilibria}, we know that $\hat p$ is the unique equilibrium of cyclic and almost cyclic games, where $\hat p_j$ is the uniform distribution on $\hat A_j$.
	
	For all $j\in N$, let $\phi_j\colon \hat A_j\rightarrow A_j$ be projection onto the first coordinate, and let $\phi = (\phi_1,\dots,\phi_n)$.
	That is, $\phi_j((a_j,a_{-j})) = a_j$.
	Let $\Sigma_j\subseteq\Sigma_{\hat A_j}$ be the set of all permutations $\pi_j$ so that $\phi_j(\hat a_j) = \phi_j(\pi_j(\hat a_j))$ for all $\hat a_j\in\hat A_j$ (that is, all permutations that keep the first coordinate fixed).
	Let $\Sigma = \Sigma_1\timesdots \Sigma_n$.
	So we have that for all $\hat a\in\hat A$ and $\pi\in\Sigma$,
	\begin{align*}
		G_i(\phi(\hat a)) = G_i((\phi\circ\pi)(\hat a)).
	\end{align*}
	Consider the game
	\begin{align*}
		\bar G = \frac M{|\Sigma|} \sum_{\pi\in\Sigma} \hat G\circ\pi,
	\end{align*}
	where $M = m^{(n-1)n}$.
	(The factor $M$ is necessary since for all $a\in A$, there are $M$ profiles $\hat a\in \hat A$ with $\phi(\hat a) = a$.)

	Since each $\hat a\in\hat A^*$ is completely determined by its first coordinates, for all $a\in A^*$, there is exactly one $\hat a\in\hat A^*$ with $\phi(\hat a) = a$.
	Thus, the following hold.
	\begin{itemize}
		\item For all $\hat a\in\hat A$ with $\phi(\hat a)\in A^*$, $\bar G_i(\hat a) = 1$. 
		\item For all $\hat a\in\hat A$ with $\phi(\hat a)\in A\setm A^*$, $\bar G_i(\hat a) = 0$.
		\item For all $j\neq i$ and $\hat a\in\hat A$, $\bar G_j(\hat a) = 0$.
	\end{itemize}
	So for all $j\in N$ and $a_j\in A_j$, all actions in $\{a_j\}\times B_j^{a_j}$ are clones of each other in $\bar G$.
	This gives that $\bar G$ is a blow-up of $G$.
	Moreover, $\phi_*(\hat p) = p$ since for all $j\in N$ and $a_j\in A_j$, there is the same number of clones (namely $m^{n-2}$) in $\hat A_j$.
	This gives the desired decomposition of $G$.
\end{proof}

\begin{remark}[Decomposition into permutation games]\label{rem:permutationgamedecomp}
	The ``blowing up'' of $G$ to $\bar G$ in the proof of \Cref{lem:slicestochasticdecomposition} by introducing $m^{n-2}$ clones of every action is necessary since not every deterministic slice-stochastic tensor can be written as a sum of permutation tensors.
	For example, let $n = 3$, $m = 2$, $i = 1$, and $G_1(a) = 1$ for $a\in\{(1,1,1),(1,2,2),(2,1,2),(2,2,1)\}\subseteq\{1,2\}^3$ and $G_1(a) = 0$ otherwise.
	Then, there is no permutation set $B\subseteq\{1,2\}^3$ so that $G_1(a) = 1$ for all $a\in B$.  
\end{remark}

So far, we have shown the following.
To prove that $p\in f(G)$ whenever $p\in\nash(G)$ and $p$ has full support, it suffices to show this for the case when each $G_i$ is slice-stochastic by \Cref{lem:slicestochasttransform}. 
By the generalization of the Birkhoff-von Neumann theorem, \Cref{lem:bvnslice}, the fact that in every such game uniform randomization is an equilibrium, and consistency, we can further restrict to $G_i$'s that are deterministic slice-stochastic.
Now \Cref{lem:slicestochasticdecomposition} shows that all those games can, in essence, be written as convex combinations of games for which uniform randomization is the unique equilibrium and, thus, has to be returned by $f$.
Consistency then gives the desired conclusion.
The last step is to extend the statement beyond full support equilibria.

\subsection{Reduction to Full Support Equilibria}\label{sec:fullsupportreduction}

We show that it suffices to prove that all full support equilibria have to be returned, which we have done in the previous section.
There are three steps to the argument. 
First, we reduce to equilibria with support equal to the set of all rationalizable actions, then to quasi-strict equilibria, and lastly to arbitrary equilibria.\footnote{We say that an action (profile) is rationalizable if it survives iterated elimination of strictly dominated actions.}
The strategy is always to write a game with some type of equilibrium as a convex combination of games where the same equilibrium is of the type in the preceding step.
For example, \Cref{lem:conversefullsupportrat} shows that any nice total solution concept has to return all equilibria whose support consists of all rationalizable actions.

\begin{lemma}[Equilibria with full support on rationalizable actions]\label{lem:conversefullsupportrat}
	Let $f$ be a nice total solution concept.
	Let $G$ be a game on $A$.
	Let $\bar A_i\subseteq A_i$ be the sets of rationalizable actions of $i\in N$ in $G$ and $\bar A = \bar A_1\timesdots\bar A_n$.
	Then, if $p\in \nash(G)$ so that for all $i\in N$, $\supp(p_i) = \bar A_i$, then $p \in f(G)$.
\end{lemma}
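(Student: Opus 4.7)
I would proceed by induction on $|A\setm \bar A| := \sum_{i\in N} |A_i\setm \bar A_i|$, the total number of non-rationalizable actions in $G$. The base case $|A\setm \bar A| = 0$ collapses to the full-support equilibrium case already settled earlier in the appendix: Lemmas~\ref{lem:playeripayoff}, \ref{lem:slicestochasttransform}, \ref{lem:bvnslice}, and \ref{lem:slicestochasticdecomposition} together imply that every full-support Nash equilibrium is returned by every nice total solution concept, so in particular $p\in f(G)$.

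For the inductive step, fix a non-rationalizable action $a_i^*\in A_i\setm \bar A_i$ and let $G^-$ denote the restriction of $G$ to the action set $A^-$ obtained by deleting $a_i^*$ from $A_i$. Since $a_i^*$ is iteratively strictly dominated in $G$, the rationalizable set of $G^-$ is still $\bar A$, and $p$ (viewed on $A^-$ by dropping its zero coordinate at $a_i^*$) remains a Nash equilibrium of $G^-$ with support $\bar A$. The inductive hypothesis then gives $p\in f(G^-)$, so the task reduces to transferring this containment across the ``addition'' of $a_i^*$ to obtain $p \in f(G)$.

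The natural tools for this transfer are \Cref{lem:linearalgebra} and consistency. Applying \Cref{lem:linearalgebra} in Case~(iii) to $G^-$, with $\hat a_i = a_i^*$, $\kappa_i = 0$, and an arbitrary rational-valued $k_i\in\mathbb Z_+^{A_i^-}$, produces a game $\hat G^q$ on $A$ (where $q := k_i/\|k_i\|\in\Delta A_i^-$) with $p\in f(\hat G^q)$, with $\hat G^q(a_i^*,a_{-i}) = G^-(q,a_{-i})$ for every $a_{-i}$, and with all other payoffs agreeing with $G$. Consistency would then yield $p\in f(G)$ whenever $G$ can be written as a convex combination $G = \sum_m \lambda^{(m)}\hat G^{q^{(m)}}$ of such games.

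The main obstacle is that this direct decomposition typically fails: matching the $(a_i^*,a_{-i})$-payoffs of $G$ would require $G(a_i^*,\cdot) = G^-(\bar q,\cdot)$ for the mixture $\bar q = \sum_m \lambda^{(m)} q^{(m)}\in \Delta A_i^-$, i.e., $a_i^*$ would have to be payoff-equivalent across all players to some strategy over $A_i^-$, which is ruled out precisely because $a_i^*$ is strictly dominated (player $i$'s payoff vector $G_i(a_i^*,\cdot)$ lies strictly below any convex combination of $\{G_i(q,\cdot):q\in\Delta A_i^-\}$). To repair this, I would enlarge the pool of building-block games by first introducing several labelled copies of $a_i^*$ in a blow-up of $A$, assigning them distinct virtual payoff profiles, and averaging over permutations of these copies (in the same symmetrisation spirit as the proofs of Lemmas~\ref{lem:linearalgebra} and~\ref{lem:slicestochasticdecomposition}) before blowing back down to $A$. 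This widens the set of attainable $a_i^*$-payoff profiles enough to reach $G(a_i^*,\cdot)$ exactly, while consequentialism continues to guarantee that $p$ lies in $f$ of each auxiliary building block. Carrying out this augmented construction with rational-valued data (so that every invocation of \Cref{lem:linearalgebra} is legitimate) is the main technical hurdle.
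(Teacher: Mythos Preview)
Your base case is fine, but the inductive step has a real gap that the proposed fix does not close. The obstruction you identify is genuine: the payoff vector $G(a_i^*,\cdot)$ for all players cannot be a convex combination of $\{G(a_i,\cdot):a_i\in A_i^-\}$, so no convex combination of the games $\hat G^{q}$ produced by \Cref{lem:linearalgebra} can equal $G$. Introducing $K$ labelled copies of $a_i^*$ and symmetrising does not help: in each building block you still need $p\in f$, and the only way you know how to certify this is to make every copy a clone of some action in $A_i^-$ (so that consequentialism reduces the block to $G^-$). But then the average payoff at each copy is again of the form $G^-(\bar q,\cdot)$ for some $\bar q\in\Delta A_i^-$, and the same obstruction recurs copy by copy. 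Allowing some copies to carry ``arbitrary'' payoffs in a block leaves you with no argument that $p\in f$ of that block, since the inductive hypothesis only applies to games with strictly fewer non-rationalizable actions, and your blocks now have more.

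The paper sidesteps this entirely: it does not induct and never tries to pass from $f(G^-)$ to $f(G)$. Instead it works with the restriction $\bar G=G|_{\bar A}$, decomposes $\bar G=\frac1M\sum_m \bar G^m$ into games in which $p$ is the \emph{unique} equilibrium (using the slice-stochastic machinery), and then \emph{extends} each $\bar G^m$ to a game $G^m$ on the full action set $A$ by the formula
\[
G_i^m(a)=G_i(a)+\bar G_i^m(\bar\psi(a_i),a_{-i})-G_i(\bar\psi(a_i),a_{-i})\qquad(a\in(A_i\setminus\bar A_i)\times\bar A_{-i}),
\]
with $G_i^m=G_i$ elsewhere outside $\bar A$, where $\bar\psi$ traces each non-rationalizable action to its eventual dominator in $\bar A$. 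This formula is the missing idea: it shifts $a_i$'s payoff by exactly the same amount as $\bar\psi(a_i)$'s payoff is shifted, so the domination chain from $G$ is preserved verbatim in every $G^m$, hence $\bar A$ is still the rationalizable set and $p$ is the unique equilibrium of each $G^m$ (whence $p\in f(G^m)$ by $f\subseteq\nash$ and totality). At the same time, summing over $m$ the correction terms cancel because $\sum_m\bar G^m=M\bar G$, so $\frac1M\sum_m G^m=G$, and consistency yields $p\in f(G)$. In short, the paper decomposes and extends rather than deletes and reinserts; the extension formula is what lets the payoff at $a_i^*$ be arbitrary while still guaranteeing $p\in f$ of every summand.
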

\begin{proof}
	Consider a decreasing sequence of action profiles obtained by successively removing dominated actions until no more deletions are possible.
	That is, let $(A_1^0,\dots,A_n^0),\dots,(A_1^K,\dots,A_n^K)\in 2^{A_1}\timesdots 2^{A_n}$ so that for all $k\in[K]$, there is $i\in N$ for which the following holds.
	\begin{itemize}
		\item $A_i^k \subsetneq A_i^{k-1}$ and for all $j\neq i$, $A_j^k = A_j^{k-1}$.
		\item For all $a_i\in A_i^{k-1}\setm A_i^k$, there is an action $\psi(a_i) \in A_i^k$ that dominates $a_i$ when restricting $G$ to $A_1^{k-1}\timesdots A_n^{k-1}$.
		\item For all $j\in N$, $A_j^0 = A_j$ and $A_j^K = \bar A_j$.
	\end{itemize}
	For all $i\in N$ and $a_i\in A_i\setm \bar A_i$, let $\bar\psi(a_i) = \psi^s(a_i)\in\bar A_i$, where $s\in\mathbb N$ is the unique power so that $\psi^s(a_i)\in \bar A_i$ (here we mean $\psi$ applied $s$ times).
	
	Denote by $\bar G$ the game $G$ restricted to action profiles in $\bar A$.
	We make several reductions.
	By consequentialism, we may assume that $|\bar A_1| = \dots = |\bar A_n|$.
	By \Cref{lem:contortion} and \Cref{lem:slicestochasttransform}, we may assume that for all $i\in N$, $\bar G_i$ is slice-stochastic and $p_i$ is the uniform distribution on $\bar A_i$.
	By consequentialism, \Cref{lem:slicestochasticdecomposition}, and \Cref{lem:invariance}, we may further assume that $\bar G$ can be written as a convex combination of games for which $p$ is the unique equilibrium.
	That is, there are games $\bar G^1,\dots,\bar G^M$ on $\bar A$ so that for all $m\in[M]$, $p$ is the unique equilibrium of $\bar G^{m}$, and
	\begin{align}
		\bar G = \frac1M \sum_{m\in[M]} \bar G^{m}.\label{eq:ghatconvex}
	\end{align} 
	
	For all $m\in[M]$, we define a game $G^{m}$ on $A$ so that for all $i\in N$ and $a\in A$,
	\begin{align*}
		G_i^{m}(a) =
		\begin{cases}
			\bar G_i^{m}(a)\quad&\text{if } a\in \bar A,\\
			G_i(a) + \bar G_i^{m}(\bar\psi(a_i),a_{-i}) - G_i(\bar\psi(a_i),a_{-i}) &\text{if } a \in (A_i\setm\bar A_i)\times \bar A_{-i}\text{, and}\\
			G_i(a)&\text{if } a_{-i}\in A_{-i}\setm\bar A_{-i}.
		\end{cases}
	\end{align*}
	Since $\bar\psi(a_i) = \bar\psi(a_i')$ for $a_i,a_i'\in A_i\setm \bar A_i$ with $\psi(a_i) = a_i'$, we have that $\bar A$ is the set of rationalizable action profiles of $G^{m}$.
	It follows that $p$ is the unique equilibrium in $G^{m}$ and so $p\in f(G^m)$.
	Observe that for $a \in (A_i\setm\bar A_i)\times \bar A_{-i}$, we have by \eqref{eq:ghatconvex} that
	\begin{align*}
		\sum_{m\in[M]} \bar G_i^{m}(\bar\psi(a_i),a_{-i}) - G_i(\bar\psi(a_i),a_{-i}) = 0.
	\end{align*}
	Hence, $G = \frac1m\sum_{m\in[M]} G^{m}$.
	Consistency then implies that $p\in f(G)$.
\end{proof}

A profile $p$ is a quasi-strict equilibrium of $G$ if $p\in\nash(G)$ and
\begin{align*}
	G_i(a_i,p_{-i}) > G_i(a_i',p_{-i}) \text{ for all $a_i\in\supp(p_i),\, a_i'\in A_i\setminus\supp(p_i)$, and $i\in N$.}
\end{align*}
We show that nice total solution concepts have to return quasi-strict equilibria.

\begin{lemma}[Quasi-strict equilibria]\label{lem:quasi-strict}
 	Let $f$ be a nice total solution concept.
 	Let $G$ be a game on $A$.
	Then, if $p\in \nash(G)$ is quasi-strict, then $p \in f(G)$.
\end{lemma}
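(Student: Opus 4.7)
The plan is to express $G$ as a convex combination $G=\sum_k \lambda_k G^k$ of games each satisfying the hypothesis of \Cref{lem:conversefullsupportrat}, namely that $p$ is an equilibrium of $G^k$ whose support equals the set of rationalizable actions of $G^k$. By \Cref{lem:conversefullsupportrat}, $p\in f(G^k)$ for every $k$, and consistency then yields $p\in f(G)$.

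Let $S_i=\supp(p_i)$, $T_i=A_i\setminus S_i$, and let $\eta>0$ be a quasi-strictness margin so that $G_i(a_i,p_{-i})\le G_i(p)-\eta$ for all $i$ and $a_i\in T_i$. A first building block is the penalty game $H$ with $H_i(a_i,a_{-i})=-C$ for $a_i\in T_i$ and $0$ otherwise. For $C$ large enough, the mixed strategy $p_i$ strictly dominates every $a_i\in T_i$ in $G+H$, while $p$ remains an equilibrium because the perturbation only lowers payoffs of out-of-support actions. Thus the rationalizable actions of player $i$ in $G+H$ are precisely $S_i$, and \Cref{lem:conversefullsupportrat} delivers $p\in f(G+H)$. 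The quasi-strictness slack $\eta$ moreover controls how much of a compensating perturbation one can add to $G$ while preserving $p$ as an equilibrium.

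The decomposition is carried out by induction on $r(G,p):=\sum_i|\bar A_i(G)\setminus S_i|$, the number of rationalizable actions outside $\supp(p)$. The base case $r(G,p)=0$ is \Cref{lem:conversefullsupportrat}. For the inductive step, pick a rationalizable non-support action $a_i^*\in\bar A_i(G)\setminus S_i$ and write $G=\lambda G^+ + (1-\lambda) G^-$ with $p$ a quasi-strict equilibrium of both $G^+$ and $G^-$ and $a_i^*$ non-rationalizable in each. Since only the $a_i^*$-row of player $i$'s payoffs is modified, no new non-support action can enter $\bar A_j$ for any player $j$, so $r(G^\pm,p)<r(G,p)$ and the inductive hypothesis yields $p\in f(G^\pm)$; consistency gives $p\in f(G)$.

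The main obstacle is the inductive step: because $a_i^*$ is rationalizable in $G$, it is not strictly dominated by any mixed strategy, so no single perturbation direction can be used symmetrically in $G^+$ and $G^-$ to dominate $a_i^*$ by a single mixed strategy over $S_i$. The construction must therefore employ distinct dominating mixed strategies in the two components, and the compensating perturbations must integrate to zero on the $a_i^*$-row. Existence of such a pair is established by a separating-hyperplane argument between $p_{-i}$ and the (closed, convex) polytope of beliefs against which $a_i^*$ is a best response in $G$: the separation provides two mixed strategies over $S_i$ whose payoff vectors against $a_i^*$'s row differ, and scaling the perturbations by a factor controlled by $\eta$ ensures that $p$ remains a quasi-strict equilibrium of both $G^\pm$ while $a_i^*$ is pointwise dominated in each.
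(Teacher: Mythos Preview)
Your inductive step has a structural obstruction. You propose to write $G=\lambda G^{+}+(1-\lambda)G^{-}$ by perturbing only the $a_i^{*}$-row of $G_i$, with the perturbations averaging to zero, so that $a_i^{*}$ is pointwise dominated in each $G^{\pm}$. But if $\sigma^{\pm}$ (taken without weight on $a_i^{*}$) dominates $a_i^{*}$ in $G^{\pm}$, then since all other rows agree with those of $G$, for every $a_{-i}$
\[
G_i\bigl(\lambda\sigma^{+}+(1-\lambda)\sigma^{-},\,a_{-i}\bigr)
\;>\;\lambda G_i^{+}(a_i^{*},a_{-i})+(1-\lambda)G_i^{-}(a_i^{*},a_{-i})
\;=\;G_i(a_i^{*},a_{-i}),
\]
so $a_i^{*}$ would already be strictly mixed-dominated in $G$. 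Nothing in quasi-strictness of $p$ forces this; $a_i^{*}$ may well be a best reply to some belief and hence undominated. Even when $a_i^{*}$ happens to be mixed-dominated in $G$, the paper's notion of rationalizable is iterated elimination by \emph{pure} actions (as in the proof of \Cref{lem:conversefullsupportrat}), so mixed domination in $G^{\pm}$ does not by itself shrink $\bar A_i(G^{\pm})$ and $r$ need not drop. The hyperplane sketch does not address either point, and the penalty game $H$ you introduce is never actually used in the induction.

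The paper circumvents this by perturbing the \emph{support} rows rather than the $a_i^{*}$-row. It first uses consequentialism (and \Cref{lem:slicestochasttransform}) to arrange that $p_i$ is uniform on $\hat A_i=\{a_i^{1},\dots,a_i^{K},b_i^{1},\dots,b_i^{K}\}$ with $a_i^{k},b_i^{k}$ clones, and $A_i\setminus\hat A_i=\{c_i^{1},\dots,c_i^{K}\}$. Quasi-strictness gives $\sum_{a_{-i}\in\hat A_{-i}}G_i(a_i^{k},a_{-i})>\sum_{a_{-i}\in\hat A_{-i}}G_i(c_i^{k},a_{-i})$, so one can choose $v_i^{k}$ with zero mean on $\hat A_{-i}$ and $G_i(a_i^{k},a_{-i})+v_i^{k}(a_{-i})>G_i(c_i^{k},a_{-i})$ for all $a_{-i}$. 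In $G^{1}$ one adds $v_i^{k}$ to the $a_i^{k}$-row and subtracts it from the $b_i^{k}$-row (and $G^{2}$ swaps the roles), so that $c_i^{k}$ is \emph{purely} dominated by $a_i^{k}$ in $G^{1}$ and by $b_i^{k}$ in $G^{2}$, while $\tfrac12 G^{1}+\tfrac12 G^{2}=G$ (the clone rows cancel) and $p$ remains an equilibrium of each $G^{l}$ (zero-mean perturbation against $p_{-i}$). One then applies \Cref{lem:conversefullsupportrat} directly to $G^{1}$ and $G^{2}$ and concludes by consistency, with no induction needed. The idea you are missing is to exploit the freedom consequentialism gives over clone pairs inside the support rather than trying to push down the out-of-support row.
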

\begin{proof}
	For all $i\in N$, let $\hat A_i = \supp(p_i)$ and $\hat A = \hat A_1\timesdots \hat A_n$.
	By consequentialism, we may assume that $|\hat A_i| = 2|A_i\setm \hat A_i|$ and the number of clones of each action in $\hat A_i$ is even.
	Moreover, by \Cref{lem:slicestochasttransform} and the remarks thereafter, we may assume that $p_i$ is the uniform distribution on $\hat A_i$.
	Write $\hat A_i = \{a_i^1,\dots,a_i^K,b_i^1,\dots,b_i^K\}$ so that $a_i^{k}$ and $b_i^{k}$ are clones for all $k\in[K]$ and $A_i\setm \hat A_i = \{c_i^1,\dots,c_i^K\}$.
	The idea is to write $G$ as a convex combination of two games $G^1,G^2$ for which all actions in $A_i\setm \hat A_i$ are dominated and $p$ is an equilibrium of $G^1,G^2$. 
	\Cref{lem:conversefullsupportrat} and consistency of $f$ will then give that $p\in f(G)$.
	
	Let $i\in N$.
	Since $p$ is quasi-strict and $p_{-i}$ is uniform on $\hat A_{-i}$, we have for all $a_i\in \hat A_i$ and $a_i'\in A_i\setm \hat A_i$,
	\begin{align}
		\sum_{a_{-i}\in\hat A_{-i}} G_i(a_i,a_{-i}) > \sum_{a_{-i}\in\hat A_{-i}} G_i(a_i',a_{-i}).\label{eq:sumineq}
	\end{align}
	For all $k\in[K]$, let $v_i^{k}\in\mathbb R^{A_{-i}}$ so that
	\begin{align}
		\sum_{a_{-i}\in \hat A_{-i}} v_i^{k}(a_{-i}) = 0\label{eq:vsum0}
	\end{align}
	and for all $a_{-i}\in A_{-i}$,
	\begin{align}
		G_i(a_i^{k},a_{-i}) + v_i^{k}(a_{-i}) = G_i(b_i^{k},a_{-i}) + v_i^{k}(a_{-i}) > G_i(c_i^{k},a_{-i}).\label{eq:vdominated}
	\end{align}
	By \eqref{eq:sumineq}, such $v_i^k$ exist.
	(Note that the sum in \eqref{eq:vsum0} is taken over $\hat A_{-i}$ and \eqref{eq:vdominated} is required to hold for all action profiles in $A_{-i}$.)
	
	Now define games $G^1,G^2$ on $A$ as follows.
	For all $i\in N$ and $a\in A$,
	\begin{align*}
		G^1_i(a) = 
		\begin{cases}
			G_i(a_i^{k},a_{-i}) + v^{k}(a_{-i})\quad&\text{if } a_i = a_i^{k} \text{ for some $k\in[K]$,}\\
			G_i(b_i^{k},a_{-i}) - v^{k}(a_{-i})\quad&\text{if } a_i = b_i^{k}\text{ for some $k\in[K]$, and}\\
			G_i(c_i^{k},a_{-i})\quad&\text{if $a_i = c_i^{k}$ for some $k\in[K]$.}
		\end{cases}
	\end{align*}
	Define $G^2$ similarly with the roles of $a_i^{k}$ and $b_i^{k}$ exchanged.
	By \eqref{eq:vsum0}, $p$ is an equilibrium of $G^1,G^2$, and by \eqref{eq:vdominated}, all actions in $A_i\setm \hat A_i$ are dominated.
	More specifically, in $G^1$, each $c_i^k$ is dominated by $a_i^k$ and in $G^2$, each $c_i^k$ is dominated by $b_i^k$.
	Hence, the set of rationalizable action profiles in $G^1,G^2$ is $\hat A$.
	By \Cref{lem:conversefullsupportrat}, $p\in f(G^1)\cap f(G^2)$.
	Since $G = \nicefrac12\ G^1 + \nicefrac12\ G^2$, consistency implies that $p\in f(G)$.
\end{proof}

\Cref{lem:quasi-strict} together with consequentialism allows us to push slightly beyond quasi-strict equilibria. 
If $p$ is an equilibrium of a game $G$ so that for every player $i$, every action of $i$ that is a best response against $p_{-i}$ is either in the support of $p_i$ or a clone of such an action, then we get from consequentialism that $p\in f(G)$.
In that case, we say that $p$ is essentially quasi-strict.

\begin{definition}[Essentially quasi-strict equilibrium]\label{def:essstrict}
	Let $G$ be a game on $A$.
	An equilibrium $p$ of $G$ is \emph{essentially quasi-strict} if there is a blow-down $G'$ of $G$ with surjection $\phi$ so that $\phi_*(p)$ is a quasi-strict equilibrium of $G'$.
\end{definition}

Similarly, one could define essentially unique and essentially full support equilibria, but we will not need these notions.
Note that if a solution concept satisfies consequentialism and returns all quasi-strict equilibria, then it also has to return all essentially quasi-strict equilibria.
We use this fact in the proof of the last step: if a solution concept satisfies consequentialism and consistency and returns all quasi-strict equilibria, then it, in fact, has to return all equilibria.

\begin{lemma}[Reduction from quasi-strict to all equilibria]\label{lem:quasistricttoall}
 	Let $f$ be a solution concept that satisfies consequentialism and consistency so that $p\in f(G)$ whenever $p$ is a quasi-strict equilibrium of $G$.
	Then, $\nash\subseteq f$.
\end{lemma}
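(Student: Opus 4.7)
I would follow the template of the preceding lemmas: write $G$, possibly after a suitable blow-up via consequentialism, as a convex combination of games in each of which $p$ is essentially quasi-strict. By the hypothesis (extended from quasi-strict to essentially quasi-strict via consequentialism), $p$ then lies in $f$ of each component, and consistency delivers $p\in f(G)$.

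Given $p\in \nash(G)$, I would partition each $A_i$ into the support $A_i^+ = \supp(p_i)$, the \emph{tying} non-support best responses $A_i^0 = \{a_i \notin A_i^+ : G_i(a_i, p_{-i}) = G_i(p_i, p_{-i})\}$, and the strictly dominated non-support $A_i^- = A_i\setm(A_i^+\cup A_i^0)$. When $A_i^0=\emptyset$ for every $i$, the profile $p$ is already quasi-strict in $G$ and the hypothesis finishes the proof. Otherwise, following the template of Lemma~\ref{lem:quasi-strict}, I would clone each support action via consequentialism, lifting $p$ to a profile $\tilde p$ on a blow-up $\tilde G$ with paired clones of every $a_i^+\in A_i^+$, and introduce an antisymmetric perturbation $v$ on the paired clones so that $\tilde G = \tfrac12(\tilde G+v) + \tfrac12(\tilde G-v)$, $\tilde p$ is an equilibrium of both summands, and in each summand every action in $A_i^-\cup A_i^0$ becomes strictly dominated by some clone of $A_i^+$. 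Lemma~\ref{lem:conversefullsupportrat} would then yield $\tilde p\in f(\tilde G\pm v)$, consistency $\tilde p\in f(\tilde G)$, and consequentialism $p\in f(G)$.

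The main obstacle, where I expect the genuine new work to lie, is that Lemma~\ref{lem:quasi-strict}'s antisymmetric perturbation is constructed using the strict inequality~\eqref{eq:sumineq}, which degenerates to an equality precisely at the tying non-support actions in $A_i^0$. Consequently, no zero-mean perturbation supported on paired clones of $A_i^+$ can strictly dominate a tying non-support $a_i^0$ in \emph{both} $\tilde G+v$ and $\tilde G-v$---otherwise their average would strictly dominate it in $\tilde G$, contradicting the fact that it ties. I plan to sidestep this by first applying the construction underlying Lemma~\ref{lem:linearalgebra} to augment $G$ with auxiliary actions payoff-equivalent to suitable convex combinations of existing actions (in particular, to $p_i$ itself), so that each $a_i^0$ becomes a clone of a newly introduced action in the augmented game and can be merged away in a blow-down. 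The perturbation-and-blow-down argument above then applies to the augmented game, and the reversibility of the consequentialism (cloning and blow-down) steps in that construction transfers the conclusion back to the original $G$; the one-way consistency step inside the construction of Lemma~\ref{lem:linearalgebra} is handled using the already-established $f\subseteq\nash$ together with totality of $f$, which pin down $f$ on the relevant intermediate games.
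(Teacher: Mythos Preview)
Your diagnosis of the obstacle is exactly right: an antisymmetric perturbation on paired clones cannot strictly dominate a tying non-support action $a_i^0$ in both summands, since averaging would give strict domination in $G$ itself. Where the proposal goes off the rails is the workaround. You propose to invoke the construction behind \Cref{lem:linearalgebra} to introduce an auxiliary action payoff-equivalent to $p_i$ so that ``each $a_i^0$ becomes a clone of a newly introduced action.'' But an action payoff-equivalent to the mixed strategy $p_i$ is \emph{not} a clone of $a_i^0$: the equality $G_i(a_i^0,p_{-i}) = G_i(p_i,p_{-i})$ holds only on average against $p_{-i}$, not pointwise over $A_{-i}$, and cloning requires pointwise equality for \emph{all} players' payoffs. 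So the augmentation does not turn $a_i^0$ into a clone that can be blown down. Moreover, \Cref{lem:linearalgebra} is a one-way construction (from $p\in f(G)$ to $\hat p\in f(\hat G)$); reversing it as you suggest via ``$f\subseteq\nash$ together with totality'' is both vague and extraneous to the lemma, which assumes neither totality nor $f\subseteq\nash$.

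The idea you are missing is that one should not try to make the tying actions strictly worse at all. The paper instead makes each tying action a clone of a \emph{support} action in each of two games that average to $G$. Concretely (after using consequentialism to make $p_i$ uniform on $\hat A_i$ and to pair up support clones $a_i^k,b_i^k$ and tying actions $c_i^k$): in $G^1$, give $a_i^k$ the entire payoff row of $c_i^k$ (for every player) and give $b_i^k$ the compensating row $G_j(a_i^k,\cdot)+G_j(b_i^k,\cdot)-G_j(c_i^k,\cdot)$; define $G^2$ with the roles of $a_i^k$ and $b_i^k$ swapped. Then $G=\tfrac12 G^1+\tfrac12 G^2$, $p$ is still an equilibrium of each $G^l$ (the averages against $p_{-i}$ are unchanged because $c_i^k$ ties), and in $G^l$ every $c_i^k$ is now a clone of a support action, so $p$ is essentially quasi-strict for player $i$. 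The paper runs this one player at a time, by induction on the number of players with nonempty $\bar A_i$; this induction is also absent from your plan and is needed because the construction only fixes one player's tying actions per step.
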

\begin{proof}
	Let $G$ be a game on $A$ and $p\in\nash(G)$.
	For $i\in N$, let $\hat A_i = \supp(p_i)$ and for any game $G'$ on $A$ with $p\in \nash(G')$, let
	\begin{align*}
		\bar A_i(G') = \{a_i\in A_i\setm\hat A_i\colon G'_i(a_i,p_{-i}) = G'_i(p_i,p_{-i})\text{ and } a_i \text{ is not a clone of an action in } \hat A_i\}.
	\end{align*}
	That is, $\bar A_i(G')$ is the set of actions that are best responses against $p_{-i}$ and not in the support of $p_i$ or clones of actions in the support of $p_i$.
	We write $\bar A_i = \bar A_i(G)$.
	Note that $p$ is an essentially quasi-strict equilibrium of $G'$ if $\bar A_i(G') = \emptyset$ for all $i\in N$.
	
	We prove that $p\in f(G)$ by induction on the number of players for which $\bar A_i\neq\emptyset$.
	If $\bar A_i = \emptyset$ for all $i\in N$, the statement follows from the assumption that $f$ satisfies consequentialism and returns quasi-strict equilibria.
	Otherwise, let $i\in N$ with $\bar A_i\neq\emptyset$.
	We write $G$ as a convex combination of two games $G^1,G^2$ so that $p\in\nash(G^1)\cap \nash(G^2)$, and
	\begin{align}
		\{j\in N\colon \bar A_i(G^l)\neq\emptyset\} \subseteq \{j\in N\colon \bar A_i\neq\emptyset\}\setm\{i\}\label{eq:morequasistrict}
	\end{align}
	for $l = 1,2$.
	
	By \Cref{lem:slicestochasttransform} and the remarks thereafter, we may assume that $p_i$ is the uniform distribution on $\hat A_i$.
	Moreover, by consequentialism, we may assume that $\hat A_i = \{a_i^1,\dots,a_i^K,b_i^1,\dots,b_i^K\}$, $\bar A_i = \{c_i^1,\dots,c_i^K\}$, and $a_i^{k},b_i^k$ are clones in $G$ for all $k\in[K]$.
	Let $G^1,G^2$ be games on $A$ so that for all $j\in N$ and $a\in A$,
	\begin{align*}
		G^1_j(a) &=
		\begin{cases}
			G_j(c_i^{k},a_{-i})\quad&\text{if $a_i = a_i^{k}$ or $a_i = c_i^{k}$ for some $k\in[K]$,}\\
			G_j(a_i^{k},a_{-i}) + G_j(b_i^{k},a_{-i}) - G_j(c_i^{k},a_{-i})&\text{if $a_i = b_i^{k}$ for some $k\in[K]$, and}\\
			G_j(a_i,a_{-i})&\text{if $a_i\in A_i\setm (\hat A_i\cup \bar A_i)$,}
		\end{cases}
	\end{align*}
	and
	\begin{align*}
		G^2_j(a) &=
		\begin{cases}
			G_j(c_i^{k},a_{-i})\quad&\text{if $a_i = b_i^{k}$ or $a_i = c_i^{k}$ for some $k\in[K]$,}\\
			G_j(a_i^{k},a_{-i}) + G_j(b_i^{k},a_{-i}) - G_j(c_i^{k},a_{-i})&\text{if $a_i = a_i^{k}$ for some $k\in[K]$, and}\\
			G_j(a_i,a_{-i})&\text{if $a_i\in A_i\setm (\hat A_i\cup \bar A_i)$.}
		\end{cases} 
	\end{align*}
	Then, $G = \nicefrac12 G^1 + \nicefrac12 G^2$ and for all $k\in[K]$, $a_i^{k},c_i^{k}$ are clones in $G^1$ and $b_i^{k},c_i^{k}$ are clones in $G^2$.
	Moreover, $p\in\nash(G^1)\cap\nash(G^2)$ by the definition of $\bar A_i$.
	To see that $p_i$ is a best response to $p_{-i}$, recall that for all $k\in[K]$, $$G_i(a_i^k,p_{-i}) = G_i(b_i^k,p_{-i}) = G_i(p_i,p_{-i}) = G_i(c_i^k,p_{-i}),$$ since $c_i^k$ is assumed to be in $\bar A_i$.
	So for all $a_i\in A_i$, $G^1_i(a_i,p_{-i}) = G^2_i(a_i,p_{-i}) = G_i(a_i,p_{-i})$.
	Also, for $j\neq i$ and $a_j\in A_j$, we have
	\begin{align*}
		G_j(a_j,p_{-j}) &= \frac1{2K}\sum_{k\in[K]} G_j(a_j,a_i^k,p_{-\{i,j\}}) + G_j(a_j,b_i^k,p_{-\{i,j\}})\\
		&= \frac1{2K}\sum_{k\in[K]} G^1_j(a_j,a_i^k,p_{-\{i,j\}}) + G^1_j(a_j,b_i^k,p_{-\{i,j\}})\\ 
		&= G^1_j(a_j,p_{-j}),
	\end{align*}
	where we use that $p_i$ is uniform on $\hat A_i$ for the first equality, and the definition of $G^1$ for the second equality.
	In particular, $p_j$ is a best response to $p_{-j}$ in $G^1$.
	The same holds for $G^2$ with a similar argument.
	Thus, \eqref{eq:morequasistrict} holds.
	Now, by induction, $p\in f(G^1)\cap f(G^2)$, and so by consistency, we get $p\in f(G)$.
\end{proof}

The fact that any nice total solution concept is a coarsening of \nash now follows from \Cref{lem:quasi-strict} and \Cref{lem:quasistricttoall}.
This completes the proof of \Cref{thm:nash}.

\section{Omitted Proofs From \Cref{sec:robustness}}\label{sec:robustnessproof}

Throughout this section, we consider equivariant solution concepts that are $\delta$-nice for some small enough $\delta$.
Let $G$ be a game with action profiles $A$ and fix a strategy profile $p\in f(G)$. 
\Cref{lem:convex} below then states that a new game $\hat G$ can be obtained by adding an extra action $\hat a_i$ to the action set of every player $i\in N$ such that the following holds.
\begin{enumerate}
	\item There is a profile $q$ close to $p$ so that for all $i\in N$, if $i$ plays $\hat a_i$ in $\hat G$, all players get the same payoff as if $i$ played $q_i$ in $G$. 
	\item There is a profile $\hat p \in f(\hat G)$ so that for all $i\in N$, $\hat p_i$ has all but a small fraction of probability on $\hat a_i$.
\end{enumerate}

\begin{lemma}\label{lem:convex}
	Let $\delta > 0$ and $f$ an equivariant solution concept that satisfies $\delta$-consequentialism and $\delta$-consistency.
	Let $G$ be a game on $A$ and $p\in f(G)$.
	Then, there is a game $\hat G$ with action set $\hat A_i = A_i\cup\{\hat a_i\}$ for all $i\in N$ so that the following holds.
	\begin{enumerate}
		\item There is $q\in B_{3\delta}(p)$ such that $\hat G(\hat a_I, a_{-I}) = G(q_I,a_{-I})$ for all $I \subseteq N$
		 and $a_{-I}\in A_{-I}$.
		\item There is $\hat p\in f(\hat G)$ such that $\hat p_i(\hat a_i) \ge 1 - 3\delta$ for all $i\in N$.
	\end{enumerate}

\end{lemma}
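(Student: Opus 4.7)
The proof will mirror the three-stage structure of Lemma~\ref{lem:linearalgebra}, with each stage now costing one $\delta$-error, so that the total accumulated deviation is $3\delta$.

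\textbf{Setup.} Fix a large integer $M$ (to be chosen as a function of $\delta$ and of $\max_i|A_i|$). For each $i$, let $k_i(a_i):=\lfloor M p_i(a_i)\rfloor$, so that $k_i(a_i)\le Mp_i(a_i)$ and $\|k_i\|=M-e_i$ with $e_i<|A_i|$; set $q_i:=k_i/\|k_i\|$, which can be made arbitrarily close to $p_i$ in $\ell_1$ by choosing $M$ large.

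\textbf{Stage 1 (Blow-up, via $\delta$-consequentialism).} For every $i$ and $a_i\in A_i$, adjoin a fresh set $\hat A_i^{a_i}$ of $k_i(a_i)$ clones disjoint from $A_i$, put $\tilde A_i:=A_i\cup\bigcup_{a_i}\hat A_i^{a_i}$, and let $\phi_i\colon\tilde A_i\to A_i$ collapse each $\hat A_i^{a_i}$ back to $a_i$. Define $\tilde G:=G\circ\phi$ and the distinguished lift $\tilde p$ of $p$ by $\tilde p_i(a_i):=p_i(a_i)-k_i(a_i)/M$ on $A_i$ and $\tilde p_i(a'):=1/M$ on $\bigcup_{a_i}\hat A_i^{a_i}$. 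One checks that $\phi_*(\tilde p)=p$, that $\tilde p$ is a valid profile, and---crucially---that $\tilde p$ is invariant under the permutation group $\Sigma$ consisting of all permutations permuting $\bigcup_{a_i}\hat A_i^{a_i}$ and fixing $A_i$ pointwise. By $\delta$-consequentialism, $\tilde p\in B_\delta(f(\tilde G))$.

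\textbf{Stage 2 (Symmetrisation, via $\delta$-consistency).} Form $\bar G:=\tfrac1{|\Sigma|}\sum_{\pi\in\Sigma}\tilde G\circ\pi$, a $\Sigma$-invariant game in which all actions in $\bigcup_{a_i}\hat A_i^{a_i}$ are genuine clones; the uniform mixed strategy over that set is payoff-equivalent in $G$ to $q_i$. The $\Sigma$-invariance of $\tilde p$ and the identity $(\phi\circ\pi)_*(\tilde p)=\phi_*(\tilde p)=p$ for every $\pi\in\Sigma$ let us view $\tilde p$ as a lift of $p$ through each blow-up $\phi\circ\pi\colon\tilde G\circ\pi\to G$; applying $\delta$-consequentialism to each yields $\tilde p\in B_\delta(f(\tilde G\circ\pi))$ for every $\pi\in\Sigma$. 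Combining this with the saturation clause of $\delta$-consequentialism (which lets us replace the witnessing profile in $f(\tilde G)$ by any lift of its pushforward), we produce a single profile $\tilde p^{*}$ lying in $\bigcap_{\pi\in\Sigma}f(\tilde G\circ\pi)$ with $\|\tilde p^{*}-\tilde p\|<\delta$. Then $\delta$-consistency gives $\bar p\in f(\bar G)$ with $\|\bar p-\tilde p^{*}\|<\delta$, hence $\|\bar p-\tilde p\|<2\delta$.

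\textbf{Stage 3 (Blow-down, via $\delta$-consequentialism).} Let $\psi_i\colon\tilde A_i\to\hat A_i$ fix $A_i$ and collapse $\bigcup_{a_i}\hat A_i^{a_i}$ to $\hat a_i$; because $\bar G$ is $\Sigma$-invariant, it descends along $\psi$ to a well-defined game $\hat G$ on $\hat A$, with $\bar G=\hat G\circ\psi$ and $\hat G(\hat a_I,a_{-I})=G(q_I,a_{-I})$ for all $I\subseteq N$ and $a_{-I}\in A_{-I}$. For $M$ large, $q\in B_{3\delta}(p)$, establishing part~(1). A final application of $\delta$-consequentialism to $\psi$ produces $\hat p\in f(\hat G)$ with $\|\hat p-\psi_*(\bar p)\|<\delta$, so by the triangle inequality $\|\hat p-\psi_*(\tilde p)\|<3\delta$. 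Since $\psi_*(\tilde p)_i(\hat a_i)=\|k_i\|/M\ge 1-|A_i|/M$, choosing $M$ large enough that $|A_i|/M$ is absorbed into the slack gives $\hat p_i(\hat a_i)\ge 1-3\delta$, which is part~(2).

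\textbf{The main obstacle} is Stage~2. In the exact proof, $\tilde p$ itself lies in every $f(\tilde G\circ\pi)$, by equivariance and the fact that $\Sigma$-invariance makes $\tilde p\circ\pi=\tilde p$; applying exact consistency is then immediate. Here $\tilde p$ only lies in $\delta$-neighbourhoods of the various $f(\tilde G\circ\pi)$, and the naive witnesses $\tilde p^{(1)}\circ\pi$ are not a single profile. Extracting a common witness $\tilde p^{*}$ requires leveraging the saturation clause of $\delta$-consequentialism to redistribute probability within each blow-up's clone classes, aligning witnesses across all $\pi\in\Sigma$ simultaneously. This is the principal place where the approximate argument diverges from its exact counterpart and where most of the technical work lies.
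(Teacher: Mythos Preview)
Your three-stage outline is right, and you have correctly located the difficulty in Stage~2. But your resolution of that difficulty is a gap, and the obstacle you describe is in fact avoidable by reordering the operations.

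Concretely: you discretise $p$ first (setting $k_i(a_i)=\lfloor Mp_i(a_i)\rfloor$), build the $\Sigma$-invariant lift $\tilde p$ of $p$, and only then invoke $\delta$-consequentialism, landing $\tilde p$ merely in $B_\delta(f(\tilde G))$. To apply $\delta$-consistency you then need a single $\tilde p^{*}\in\bigcap_{\pi}f(\tilde G\circ\pi)$, which---as you note---forces $\tilde p^{*}$ to be $\Sigma$-invariant and to lie in $f(\tilde G)$ exactly. By saturation this means $\tilde p^{*}$ must be a $\Sigma$-invariant lift of some $s\in\phi_*(f(\tilde G))$. A $\Sigma$-invariant lift of $s$ assigns a common value $c$ to every clone in $\bigcup_{a_i}\hat A_i^{a_i}$, so must satisfy $c\le s_i(a_i)/k_i(a_i)$ for all $a_i$. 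But you only know $\|s-p\|<\delta$; if $p_i(a_i)$ is of order~$\delta$ then $k_i(a_i)>0$ while $s_i(a_i)$ may vanish, forcing $c=0$ and hence $\|\tilde p^{*}-\tilde p\|\approx 1$, not~$<\delta$. Your Stage~2 claim therefore fails as stated.

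The paper dissolves the obstacle by swapping the order: use a \emph{fixed} number $k_i=|A_i|\lceil 1/\delta\rceil$ of clones per action, apply $\delta$-consequentialism \emph{first} to obtain some $\tilde p\in B_\delta(p)\cap\phi_*(f(G'))$ (an exact pushforward, not merely close to one), and only \emph{then} discretise $\tilde p$ by setting $l_i(a_i)=\lfloor k_i\tilde p_i(a_i)\rfloor$. One then builds a lift $p'$ of $\tilde p$ that puts mass $1/k_i$ on $l_i(a_i)$ of the clones of each $a_i$ and spreads the remainder uniformly on the rest; this $p'$ is invariant under the group that permutes the ``uniform'' clones freely while fixing each residual block, and saturation places $p'$ in $f(G')$ \emph{exactly}. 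Equivariance then gives $p'\in f(G'\circ\pi)$ exactly for every $\pi$, so $\delta$-consistency applies with no common-witness extraction at all. The three $\delta$'s come from the initial consequentialism step, consistency, and the final blow-down---Stage~2 itself is exact.
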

\begin{proof}
	For all $i\in N$, let $k_i = |A_i|\lceil \frac{1}\delta\rceil$.\footnote{For $x\in\mathbb R$, $\lceil x\rceil$ is the smallest integer that is at least as large as $x$. Similarly, $\lfloor x\rfloor$ is the largest integer that is at most as large as $x$.}
	For each $a_i \in A_i$, let $A_i^{a_i}\in\mathcal F(U)$, all disjoint and disjoint from $A_i$, with $|A_i^{a_i}| = k_i$.
	Let $G'$ be a game with action set $A_i' = A_i\cup (\bigcup_{a_i\in A_i} A_i^{a_i})$ for all $i\in N$, and $G$ a blow-down of $G'$ with surjections $\phi_i\colon A_i'\rightarrow A_i$ so that $\phi_i^{-1}(a_i) = \{a_i\}\cup A_i^{a_i}$ for all $a_i\in A_i$ and $i\in N$.
	That is, $G'$ results from $G$ by adding $k_i$ clones of $a_i$ for each action $a_i$.
	 
	Since $f$ satisfies $\delta$-consequentialism, $\phi_*^{-1}(p)\subseteq B_\delta(f(G'))$.
	We construct $p'\in f(G')$ so that $p'_i$ assigns probability at least $1-\delta$ uniformly to some subset $\tilde A_i$ of $\bigcup_{a_i\in A_i} A_i^{a_i}$.
	Let $\tilde p\in B_\delta(p) \cap \phi_*(f(G'))$, which exists since $f$ satisfies $\delta$-consequentialism.
	Let $l_i = \lfloor k_i \tilde p_i\rfloor \in \{0,\dots,k_i\}^{A_i}$ and $r_i = k_i \tilde p_i - l_i\in [0,1)^{A_i}$.
	For each $a_i\in A_i$, choose a subset $\tilde A_i^{a_i}$ of $A_i^{a_i}$ with $|\tilde A_i^{a_i}| = l_i(a_i)$ and let $\tilde A_i = \bigcup_{a_i\in A_i} \tilde A_i^{a_i}$.
	Let $p'$ be a profile on $A'$ so that for all $i\in N$ and $a_i\in A_i$,
	\begin{align*}
		p'_i(a_i') =
		\begin{cases}
			\frac1{k_i} &\text{ for } a_i' \in \tilde A_i^{a_i}\text{, and}\\
			\frac{r_i(a_i)}{k_i}\frac{1}{|\{a_i\}\cup A_i^{a_i} - \tilde A_i^{a_i}|} \qquad&\text{ for } a_i' \in \{a_i\}\cup A_i^{a_i} - \tilde A_i^{a_i}.
		\end{cases}
	\end{align*}
	Observe that for all $a_i\in A_i$,
	\begin{align*}
		p'(\{a_i\}\cup A_i^{a_i}) = \frac{|\tilde A_i^{a_i}|}{k_i} + \frac{r_i(a_i)}{k_i} = \frac{l_i(a_i)}{k_i} + \frac{r_i(a_i)}{k_i} = \tilde p_i(a_i).
	\end{align*}
	Hence, $p'$ is well-defined and $\phi_* p' = \tilde p$.
	By the choice of $\tilde p$, $p'\in f(G')$.

	Recall that $r_i\in [0,1)^{A_i}$, and so
	\begin{align*}
		\frac{\|r_i\|}{k_i} < \frac{|A_i|}{k_i} \le \delta.
	\end{align*}
	Let $q$ be the profile on $A$ with $q_i = \frac{\tilde p_i - r_i}{\|\tilde p_i - r_i\|}$.
	Since $\frac{\|r_i\|}{k_i} < \delta$, it follows that $q\in B_{2\delta}(\tilde p)\subseteq B_{3\delta}(p)$.

	For all $i\in N$, let $\tilde\Sigma_{A_i'}\subseteq \Sigma_{A_i'}$ be the permutations that map the set $\{a_i\}\cup A_i^{a_i} - \tilde A_i$ to itself for all $a_i\in A_i$, and let $\tilde\Sigma_{A'} = \tilde\Sigma_{A_1'}\times\dots\times\tilde\Sigma_{A_n'}$.
	Then, $p' = p'\circ \pi$ for all $\pi\in\tilde\Sigma_{A'}$ since $p'_i$ is a uniform distribution on $\tilde A_i$ and the uniform distribution on $\{a_i\}\cup A_i^{a_i} - \tilde A_i^{a_i}$ for every $a_i\in A_i$ and $i\in N$.
	Thus, equivariance of $f$ implies that $p' = p'\circ\pi\in f(G'\circ \pi)$.
	Let
	\begin{align*}
		\bar G = \frac1{|\tilde\Sigma_{A'}|}\sum_{\pi\in\Pi} G'\circ \pi
	\end{align*}
	It follows from $\delta$-consistency that $p'\in B_\delta(f(\bar G))$.
	Let $\bar p\in f(\bar G)\cap B_\delta(p')$.
	By construction, for all $i\in N$, all actions in $\tilde A_i$ are clones of each other in $\bar G$, and for each $a_i\in A_i$, all actions in $\{a_i\}\cup A_i^{a_i} - \tilde A_i^{a_i}$ are clones of each other.
	Let $\hat G$ be the blow-down of $\bar G$ with action set $\hat A_i = A_i\cup \{\hat a_i\}$ for all $i\in N$ and surjections $\hat\phi_i\colon A_i'\rightarrow \hat A_i$ so that $\hat\phi_i^{-1}(\hat a_i) = \tilde A_i$ and $\hat\phi_i^{-1}(a_i) = \{a_i\}\cup A_i^{a_i}-\tilde A_i^{a_i}$ for all $a_i\in A_i$.
	Since $f$ satisfies $\delta$-consequentialism, there is $\hat p\in f(\hat G)$ so that $\hat\phi_*(\bar p) \in B_\delta(\hat p)$.
	Hence, for all $i\in N$,
	\begin{align*}
		\sum_{a_i\in A_i} \hat p_i(a_i) \le \sum_{a_i\in A_i} \bar p_i(a_i\cup A_i^{a_i} - \tilde A_i^{a_i}) + \delta
		\le \sum_{a_i\in A_i} p'_i(a_i\cup A_i^{a_i} - \tilde A_i^{a_i}) + 2\delta \le 3\delta,
	\end{align*}
	where the last inequality uses the definition of $p_i'$ and $\frac{\|r_i\|}{k_i}\le\delta$.
	Equivalently, $\hat p_i(\hat a_i) \ge 1 - 3\delta$ for all $i\in N$.
	Moreover, by construction of $\hat G$, $\hat G(\hat a_I, a_{-I}) = G(q_I,a_{-I})$ for all $I\subseteq  N$, and $a_{-I}\in A_{-I}$.
\end{proof}

Now consider a game $\hat G$ with action profiles $\hat A$ and let $\hat a\in \hat A$, $a_i\in \hat A_i$, and $i\in N$ such that action $a_i$ yields strictly more payoff against $\hat a_{-i}$ than action $\hat a_i$.
\Cref{lem:dominated} below shows that if there is $\hat p\in f(\hat G)$ so that $\hat p_j$ assigns probability close to $1$ to $\hat a_j$ for each $j\in N$, then there is a game $G$ and $p\in f(G)$ such that $p_i$ assigns probability close to 1 to a dominated action.

\begin{lemma}\label{lem:dominated}
	Let $\eps,\delta> 0$ with $4\left(\lceil\frac1\eps\rceil + 1\right)\delta \le (1-2\delta)$ and $2\delta \le \eps$.
	Let $f$ be an equivariant solution concept that satisfies $\delta$-consequentialism and $\delta$-consistency.
	Let $\hat G$ be a normalized game with action sets $\hat A_j = A_j\cup \{\hat a_j\}$ for all $j\in N$ and let $i\in N$ and $a_i\in A_i$ so that $\hat G_i(a_i,\hat a_{-i}) > \hat G_i(\hat a_i,\hat a_{-i}) + \eps$.
	Then, if $\hat p\in f(\hat G)$ with $\hat p_j(\hat a_j) \ge 1-\delta$ for all $j\in N$, there is a game $\bar G$ and $\bar p\in f(\bar G)$ so that $\bar p_i$ assigns probability at least $1-3\delta$ to an action that is $\delta$-dominated in $\bar G$.
\end{lemma}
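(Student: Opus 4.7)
We mimic the second step of the proof of Theorem~\ref{thm:nash} in the approximate setting, combining the cloning and averaging into one application each of $\delta$-consequentialism and $\delta$-consistency. The output is a game $\bar G$ carrying a distinguished action $c_i$---a fresh clone of $\hat a_i$---which is $\delta$-dominated by $a_i$ in $\bar G$, together with a profile $\bar p\in f(\bar G)$ satisfying $\bar p_i(c_i)\ge 1-3\delta$.

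\textbf{Construction.} Fix an integer $\lambda$ for which $2(n-1)/(\lambda+1)\le\eps-\delta$, so the dominance margin in $\bar G$ will exceed $\delta$. Blow up $\hat G$ to a game $G'$ by adding (a) one fresh clone $c_i$ of $\hat a_i$ for player $i$, and (b) for each $j\ne i$ and each replaceable action $b\in\hat A_j\setm\{\hat a_j\}$, a block of $\lambda$ fresh clones of $\hat a_j$ dedicated to $b$. By $\delta$-consequentialism, there is $p'\in f(G')$ within $\delta$ of the canonical lifting of $\hat p$ that places the entire mass $\hat p_i(\hat a_i)$ on $c_i$ (and $0$ on $\hat a_i$) and, for each $j\ne i$ and each replaceable $b$, assigns the common probability $\hat p_j(b)$ to $b$ and to each of its $\lambda$ dedicated clones (the residual $(\lambda+1)\hat p_j(\hat a_j)-\lambda$ being placed on $\hat a_j$; non-negativity of this residual is guaranteed by the hypothesis). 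Let $\Sigma$ be the group of permutations fixing $\hat a_j$, $a_i$, $\hat a_i$, and $c_i$, and independently permuting each block for $j\ne i$. The chosen lifting is $\Sigma$-invariant, so equivariance gives $p'\in f(G'\circ\pi)$ for every $\pi\in\Sigma$; $\delta$-consistency then yields $p''\in f(G'')$ with $\|p''-p'\|\le\delta$, where $G''=|\Sigma|^{-1}\sum_{\pi\in\Sigma}G'\circ\pi$. Set $\bar G:=G''$ and $\bar p:=p''$.

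\textbf{Verification.} Each block in $G''$ for $j\ne i$ consists of $\lambda+1$ payoff-identical actions with common payoff vector $\tfrac{1}{\lambda+1}\hat G(b,\cdot)+\tfrac{\lambda}{\lambda+1}\hat G(\hat a_j,\cdot)$, while $c_i$ remains a clone of $\hat a_i$ in $\bar G$. For any $a_{-i}\in\bar A_{-i}$ and $b_i\in\{a_i,c_i\}$, $\bar G_i(b_i,a_{-i})$ expands as an expectation of $\hat G_i(b_i,\tilde a_{-i})$ where each $\tilde a_j$ independently equals $\hat a_j$ with probability $\lambda/(\lambda+1)$ whenever $a_j$ is replaceable. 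Normalization gives $|\bar G_i(b_i,a_{-i})-\hat G_i(b_i,\hat a_{-i})|\le(n-1)/(\lambda+1)$, and hence $\bar G_i(a_i,a_{-i})-\bar G_i(c_i,a_{-i})>\eps-2(n-1)/(\lambda+1)>\delta$, so $c_i$ is $\delta$-dominated by $a_i$ in $\bar G$. For the probability bound, $c_i$ is fixed by $\Sigma$ so no averaging affects player $i$'s coordinate, and the two $\delta$-axiom applications combine to give $|\bar p_i(c_i)-\hat p_i(\hat a_i)|<2\delta$, whence $\bar p_i(c_i)>(1-\delta)-2\delta=1-3\delta$.

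\textbf{Main obstacle.} The critical technical move is the asymmetric treatment of player $i$: representing the dominated action as a \emph{single} fresh clone $c_i$, rather than a symmetric block of $\lambda+1$ clones, is what keeps the mass on $c_i$ close to $\hat p_i(\hat a_i)$. Symmetric treatment for player $i$ would dilute it to roughly $\hat p_i(\hat a_i)/(\lambda+1)$, far below $1-3\delta$. Simultaneously, $\Sigma$-invariance of the intermediate lifting is required so that $\delta$-consistency handles the symmetrization across all $j\ne i$ in a single shot; iterating per action or per player would accumulate $\Theta(n)$ many $\delta$ errors, exceeding the budget allowed by the hypothesis. The hypothesis $4(\lceil 1/\eps\rceil+1)\delta\le 1-2\delta$ together with $2\delta\le\eps$ is calibrated to reconcile the $\Sigma$-invariant lifting's feasibility (which demands $(\lambda+1)\delta\le 1$) with the dominance-margin requirement on $\lambda$---this joint calibration is the main technical content of the argument.
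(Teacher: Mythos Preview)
Your overall strategy coincides with the paper's: for every $j\ne i$ clone $\hat a_j$ many times (one block per action of $\hat A_j\setminus\{\hat a_j\}$), pick a $\Sigma$-invariant profile in $f(G')$, average $G'$ over $\Sigma$, and read off the dominance margin and the probability bound after one application of $\delta$-consistency. The paper keeps player $i$'s action set unchanged and lets $\hat a_i$ itself be the dominated action, so your fresh clone $c_i$ is an unnecessary but harmless complication.

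There is, however, a genuine gap at the equivariance step. You obtain $p'\in f(G')$ with $\|p'-\tilde p\|<\delta$, where $\tilde p$ is your $\Sigma$-invariant canonical lifting of $\hat p$, and then assert that ``the chosen lifting is $\Sigma$-invariant, so equivariance gives $p'\in f(G'\circ\pi)$''. But equivariance only yields $p'\circ\pi\in f(G'\circ\pi)$; to conclude $p'\in f(G'\circ\pi)$ you would need $p'$ itself---not $\tilde p$---to be $\Sigma$-invariant. Nothing in the two $B_\delta$-inclusions of $\delta$-consequentialism forces the element of $f(G')$ you land on to be symmetric, so as written you have no common profile in $\bigcap_{\pi\in\Sigma} f(G'\circ\pi)$ to which $\delta$-consistency can be applied.

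The repair is the \emph{first} clause of $\delta$-consequentialism, $f(G')=\phi_*^{-1}(\phi_*(f(G')))$, which you never invoke. From any $p'\in f(G')$ set $\hat p':=\phi_*(p')\in B_\delta(\hat p)$; the first clause then guarantees that \emph{every} lifting of $\hat p'$ lies in $f(G')$, and in particular the $\Sigma$-invariant one. With this profile in hand, equivariance and $\delta$-consistency apply exactly as you intended, and the $1-3\delta$ bound survives. This is precisely how the paper proceeds; note that the feasibility of the $\Sigma$-invariant lifting must then be checked against $\hat p'$ rather than $\hat p$, which costs an extra $\delta$ of slack relative to the residual you computed.
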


\begin{proof}
	Let $M = 2(\lceil\frac1\eps\rceil + 1)$.
	For each $j\in N$, let $A_j = \{a_j^1,\dots,a_j^{|A_j|}\}$.
	Let $G'$ be a game with action set $A_i' = \hat A_i$ for $i$ and $A_j' = \hat A_j\cup \{a_j^{k,l}\colon k\in[|A_j|], l \in[M]\}$ so that each $a_j^{k,l}$ is a clone of $\hat a_j$ for all $j\in N\setm\{i\}$.
	That is, let $\phi_i\colon A_i'\rightarrow\hat A_i$ be the identity, and for all $j\in N\setm \{i\}$, let $\phi_j\colon A_j'\rightarrow \hat A_j$ be the identity on $A_j$ and $\phi_j^{-1}(\hat a_j) = \{\hat a_j\}\cup\{a_j^{k,l}\colon k\in[|A_j|],l\in[M]\}$.
	Then, $\hat G$ is a blow-down of $G'$ with surjection $\phi = (\phi_1,\dots,\phi_n)$.  
	Since $f$ satisfies $\delta$-consequentialism, there is $\hat p'\in B_\delta(\hat p)$ so that $p'\in f(G')$ whenever $\phi_*(p') = \hat p'$.
	In particular, there is $p'\in f(G')$ with 
	\begin{align*}
		\sum_{k = 1}^{|A_j|} p'_j(a_j^k) &\le 2\delta &&\text{ for all $j\in N$, and}\\
		p'_j(a_j^{k,l}) &= \hat p'_j(a_j^k) \qquad&&\text{ for all } k\in[|A_j|], l\in[M]\text{, and } j\in N\setm\{i\}.
	\end{align*}
	The latter condition can be satisfied since 
	\begin{align*}
		M\sum_{k = 1}^{|A_j|}\hat p'(a_j^k) \le 2M\delta \le (1 - 2\delta) \le \hat p'(\hat a_j) = \sum_{a_j\in\phi_j^{-1}(\hat a_j)} p'(a_j),
	\end{align*}
	for all $j\in N\setm\{i\}$.
	Let $\Sigma'_i\subseteq \Sigma_{A_i'}$ be the set consisting of the identity permutation on $U$, and for all $j\in N\setm\{i\}$, let $\Sigma'_j\subseteq \Sigma_{A_j'}$ be the set of permutations that map $\{a_j^k\}\cup \{a_j^{k,l}\colon l\in [M]\}$ to itself for all $k\in[|A_j|]$.
	Hence, a large set of clones of $\hat a_j$ is permuted with each action $a_j^k$ in all possible ways.
	Let $\Sigma' = \Sigma'_1\times\dots\times \Sigma'_n$.
	Note that, by construction, $p'\circ\pi = p'$ for $\pi\in\Sigma'$.
	Let
	\begin{align*}
		\bar G = \frac1{|\Sigma'|}\,\sum_{\pi\in\Sigma'} G'\circ \pi.
	\end{align*}
	For all $j\in N\setm\{i\}$ and $k\in[|A_j|]$, $\{a_j^k\}\cup \{a_j^{k,l}\colon l\in[M]\}$ is a set of clones in $\bar G$.
	Since $\hat G_i(a_i,\hat a_{-i}) > \hat G_i(\hat a_i,\hat a_{-i}) + \eps$, $M = 2(\lceil\frac1\eps\rceil + 1)$, and $\hat G$ is normalized, $\bar G_i(a_i,a_{-i}) > \bar G_i(\hat a_i,a_{-i}) + \frac{\eps}2$ for all $a_{-i}\in A_{-i}'$.
	Hence, $a_i$ $\delta$-dominates $\hat a_i$.
	Lastly, since $f$ satisfies $\delta$-consistency, there is $\bar p \in f(\bar G)$ so that for all $j\in N$,
	\begin{align*}
		\bar p_j(\hat a_j) \ge p_j'(\hat a_j) - \delta \ge 1 - 3\delta.
	\end{align*}
\end{proof}

\begin{proof}[Proof of \Cref{thm:continuityofcharacterization}]
	Given $\eps > 0$, let $\delta > 0$ so that
	\begin{align}
		3(n-1)\delta&\le\frac\eps4,\label{eq:delta1}\\
		4\left(\left\lceil\frac3\eps\right\rceil + 1\right)3\delta&\le (1-6\delta)\text{, and}\label{eq:delta2}\\
		1-9\delta &> \nicefrac12\label{eq:delta3}.
	\end{align}
	Assume that $f$ satisfies $\delta$-consequentialism, $\delta$-consistency, and $\delta$-rationality, but is \emph{not} a refinement of $\nash_\eps$ on the set of normalized games. 
	Then, there is a normalized game $G$ on $A$, $i\in  N$, $a_i\in A_i$, and $p\in f(G)$ so that
	\begin{align}
		G_i(a_i,p_{-i}) > G_i(p_i,p_{-i}) + \eps.\label{eq:notepseq}
	\end{align}
	Observe that for all $q\in B_{3\delta}(p)$ and $p_i'\in\Delta A_i$,
	\begin{align}
		\left|G_i(p_i',p_{-i}) - G_i(p_i',q_{-i})\right| \le 3(n-1)\delta.\label{eq:payoffdiff}
	\end{align}
	By \Cref{lem:convex}, there is a game $\hat G$ with action set $\hat A_j = A_j\cup\{\hat a_j\}$ for all $j\in N$ so that the following holds. 
	\begin{enumerate}
		\item There is $q\in B_{3\delta}(p)$ such that $\hat G(\hat a_I, a_{-I}) = G(q_I,a_{-I})$ for all $I \subseteq N$ and $a_{-I}\in A_{-I}$.\label{item:ghat1}
		\item There is $\hat p\in f(\hat G)$ such that $\hat p_j(\hat a_j) \ge 1 - 3\delta$ for all $j\in N$.\label{item:ghat2}
	\end{enumerate}	
	Applying \ref{item:ghat1} with $I = N\setm\{i\}$, it follows from \eqref{eq:delta1},~\eqref{eq:notepseq}, and \eqref{eq:payoffdiff} that
	\begin{align*}
		\hat G_i(a_i,\hat a_{-i}) &= G_i(a_i,q_{-i}) \ge G_i(a_i,p_{-i}) - 3(n-1)\delta\\ 
		&> G_i(p_i,p_{-i}) + \nicefrac34\eps \ge G_i(p_i,q_{-i}) + \frac24\eps\\
		&\ge G_i(q_i,q_{-i}) + \nicefrac14\eps = \hat G_i(\hat a_i,\hat a_{-i}) + \nicefrac14\eps.
	\end{align*}
	By \eqref{eq:delta1} and \eqref{eq:delta2}, \Cref{lem:dominated} applied to $\frac\eps4$ and $3\delta$ gives a game $G'$ and $\bar p\in f(G')$ so that $\bar p_i$ assigns probability at least $1 - 9\delta$ to a $3\delta$-dominated action.
	Since $1 - 9\delta > \nicefrac12$ by \eqref{eq:delta3}, this contradicts $\delta$-rationality.
\end{proof}

\end{document}